\def\eqref#1{equation~(\ref{#1})}
\def\Eqref#1{Equation~(\ref{#1})}
\def\1{\bm{1}}
\DeclareMathAlphabet{\mathsfit}{\encodingdefault}{\sfdefault}{m}{sl}
\SetMathAlphabet{\mathsfit}{bold}{\encodingdefault}{\sfdefault}{bx}{n}
\def\gX{{\mathcal{X}}}
\def\sP{{\mathbb{P}}}
\newcommand{\E}{\mathbb{E}}
\newcommand{\R}{\mathbb{R}}
\DeclareMathOperator{\Ind}{\mathbbm{1}}
\newtheorem{theorem}{Theorem}
\theoremstyle{definition}
\newtheorem{definition}{Definition}
\newtheorem{lemma}{Lemma}
\newtheorem{proposition}{Proposition}
\newcommand{\Vph}{\Pi_{\text{con}}}
\newcommand{\Vpr}{\Pi_{\text{rev}}}
\newcommand{\Vpeps}{\Pi_{\text{garb}}}
\newcommand{\Vah}{V_{\text{con}}}
\newcommand{\Var}{V_{\text{rev}}}
\newcommand{\Vaconst}{V_{\text{const}}}
\newcommand{\Vpconst}{\Pi_{\text{const}}}
\newcommand{\Vaeps}{V_{\text{garb}}}
\newcommand{\Vjh}{W_{\text{con}}}
\newcommand{\Vjr}{W_{\text{rev}}}
\newcommand{\Vjeps}{W_{\text{garb}}}
\newcommand{\veps}{\varepsilon}
\newtheorem*{rep@theorem}{\rep@title}
\newcommand{\newreptheorem}[2]{%
\newenvironment{rep#1}[1]{%
 \def\rep@title{#2 \ref{##1}}%
 \begin{rep@theorem}}%
 {\end{rep@theorem}}}
\theoremstyle{definition}
\newtheorem{assumption}{Assumption}
\definecolor[named]{DarkGreen}{cmyk}{0.91,0,0.9,0.3}
\title{Relying on the Metrics of Evaluated Agents}
\date{}
\author{Serena Wang\footnotemark[1]~\footnotemark[2] \quad \quad Michael I. Jordan\footnotemark[2]~\footnotemark[3] \quad \quad  Katrina Ligett\footnotemark[1]~\footnotemark[4] \quad \quad R. Preston McAfee\footnotemark[1]\\
	\\
 	\small{\footnotemark[1]~Google Research} \\
	\small{\footnotemark[2]~Department of Electrical Engineering and Computer Sciences, University of California, Berkeley} \\
        \small{\footnotemark[3]~Department of Statistics, University of California, Berkeley} \\
	\small{\footnotemark[4]~School of Computer Science and Engineering and Federmann Center for the Study of Rationality,}\\
	\small{Hebrew University}
}
\begin{document}


\maketitle



\begin{abstract}

    Online platforms and regulators face a continuing  problem of designing effective evaluation metrics. While tools for collecting and processing data continue to progress, this has not addressed the problem of \textit{unknown unknowns}, or fundamental informational limitations on part of the evaluator. To guide the choice of metrics in the face of this informational problem, we turn to the evaluated agents themselves, who may have more information about how to measure their own outcomes. We model this interaction as an agency game, where we ask: \textit{When does an agent have an incentive to reveal the observability of a metric to their evaluator?} 
    We show that an agent will prefer to reveal metrics that differentiate the most difficult tasks from the rest, and conceal metrics that differentiate the easiest. We further show that the agent can prefer to reveal a metric \textit{garbled} with noise over both fully concealing and fully revealing. This indicates an economic value to privacy that yields Pareto improvement for both the agent and evaluator. We demonstrate these findings on data from online rideshare platforms.

\end{abstract}


\section{Introduction}



\begin{center}
\textit{``Only the wearer knows where the shoe pinches.'' --Proverb}
\end{center}

The design of effective evaluation metrics continues to be of central concern to online platforms and their surrounding economies, including regulators, engineers, and users.  Indeed, the problem has only grown in importance in recent years, as advances in the ability of online platforms to collect and analyze data have led to a reliance on data-driven evaluation metrics to steer decision-making. The design problem becomes an adaptive and incentive-theoretic one.

For instance, in the rapidly evolving AI marketplace, government regulators are increasingly concerned with the evaluation of AI systems, focusing attention on the evaluation of safety, performance, and liability, in the context of evolving law and policy \citep{whitehouse}. Private companies have demonstrated investment in improving evaluation as well, from developing leaderboards \citep{open-llm-leaderboard-v2,scaleleaderboard} to funding evaluation efforts from third-party researchers \citep{anthropicevaluations}. Academic researchers are at the forefront of efforts to develop new evaluation metrics that target an ever-increasing spectrum of capabilities \citep{ghazal2013bigbench,liang2022holistic}.

Within the firms developing these technologies, evaluation has also long been a core part of operations, where internal evaluation metrics drive company decisions in all areas, from engineering to advertising to pricing. Online platforms (like recommender systems, digital marketplaces, and gig economy platforms) draw from rich data sources to compute their internal evaluation metrics. Still, 
significant challenges remain in the design and usage of these metrics,
from handling the limitations and biases of current ``proxy'' metrics \citep{athey2019surrogate,tripuraneni2024choosing}, to identifying which aspects of the data are most relevant to a given decision \citep{muller2019tyranny,bloomfield2016counts}.

We consider the development of evaluation metrics, where \textit{we use the word ``metric'' to loosely refer to a variable that an evaluator uses to infer the value or difficulty of a task under consideration.} For example, a score on an evaluation benchmark is a metric that indicates the value of a LLM trained by a firm. The delivery time is a metric that indicates both value and difficulty of a food delivery completed by a driver. The h-index is a metric that some employers may use to infer the productivity of a researcher. 
Metrics almost never tell the whole story, and vary in informativeness, but are nonetheless deeply embedded in online platforms and economies, driving the evolution of technology and shaping its societal impact. 


For both firms and regulators, the modern prevailing challenge with the design of metrics is not necessarily with the \textit{ability} to collect large volumes of data, but of discerning which aspects of the data are relevant to the task at hand.
This knowledge in turn informs allocation of resources for data-collection efforts, and downstream decisions that affect users like pricing and auditing. 
We consider evaluators who are powerful, but with a limited worldview: they can verify and collect any data for metrics that they are aware of, but their worldview is limited in that they are not aware of all possible metrics that could be important. 

A core perspective underlying this work is that while evaluators might be limited by their worldviews, this does not mean that better information does not exist in the wider socio-technical system. In fact, those who know the most about a task are often none other than the agents who perform it themselves. These evaluated agents might both know and be willing to share better evaluation metrics.
Thus, in this work, we study the \textit{incentives} for information transfer from an evaluated agent to the evaluator.




More generally, there is a long history of eliciting information and feedback from the wider socio-technical system for the purposes of evaluation and design. 
For example, the subfield of \emph{participatory design} in human-computer interaction focuses on methods for designing technical systems with user input \citep{schuler1993participatory}.  For AI evaluation, some of the biggest companies currently developing the AI systems are also involved in the evaluation process, even cited as direct collaborators in a recent White House briefing \citep{whitehouse}. In online content and labor platforms, there is often a rich reliance on feedback channels like surveys and focus groups of creators and workers on the platform.
We present \textit{a theoretical model of incentives for an agent to share metrics} across such channels. We characterize the incentives that influence an evaluated agent to share metrics with an evaluator, with a particular focus on the unique capabilities and limitations of modern online platforms. 
Specifically, we consider a setting in which there are significant asymmetries in data collection, data verification, and price-setting power between the evaluator and the evaluated agents. Yet, a key information asymmetry remains in the other direction, where the evaluated agents hold unique information about what features are most correlated with their success. A core strategic dependence in our model is the relationship between the agent and the metrics---we consider the incentives an agent may experience with respect to the revelation of metrics that impact their own evaluation by an outside entity.

The incomplete nature of metrics has been discussed in the seminal work of \citet{holmstrom1991multitask} on contracts with multidimensional tasks, where a principal may observe only a subset of dimensions that are relevant to their value or the agent's cost. When the principal is aware of which dimensions are missing, \citet{holmstrom1991multitask} show how to optimally reward the observed dimensions given properties of the agent's cost structures. The distinguishing feature of our setting is that we consider the unobserved dimensions to be \textit{unknown unknowns}, yielding a model that is in a sense inherently non-Bayesian. We focus on an information-transfer mechanism where the agent has the power to possibly reveal these hidden dimensions to the principal.

Specifically, we examine an agent's incentives for information sharing through the lens of an \textit{agency game with information transfer}. We build on a classical agency game where a principal contracts an agent to complete a task, and the principal only has partial information about the agent's costs\footnote{In addition to representing the difficulty of the task, the agent's ``cost'' could also be interpreted as the external market value for the task, or the price that the principal has to beat for the agent to complete the task for them. 
} when setting a contract. 
To capture the agent's additional information and opportunity to improve the metrics by which they are evaluated, our model supposes that the agent is privately aware of additional variables that correlate with their cost of task completion, and further has the opportunity to reveal these additional variables to the principal prior to the design of the contract.
We analyze when the agent prefers a contract that depends on the revealed metrics over a contract that ignores them. 

Importantly, our model deviates from the standard Bayesian setup in persuasion games, since without the agent's help, the principal is not even aware of the \textit{existence} of the metric. We also assume the agent cannot lie about or selectively mask the metric depending on its value, reflecting a common reality where online platforms and regulators often have significant data collection power, and can even compel an evaluated agent to report metrics \citep{dranove2003more}. 


While the principal is always better off knowing more metrics, the incentives for the agent are nuanced---revealing a metric reduces the amount of information rent the agent can extract when their costs turn out to be low. However, better information also means more high-cost tasks will go forward that the agent otherwise would not have accepted, as the optimal contract would adequately reward the agent when the metric indicates a high cost. 

We prove that the agent prefers to reveal metrics that strongly differentiate the highest cost settings from rest, and conceal metrics that strongly differentiate the lowest cost settings from the rest. For example, a university might want to reveal to a government funder the number of low-income students that matriculate (indicating higher operating costs), but conceal the amount of funding they anticipate from private donations (indicating that they could continue without additional funds).

Our analysis becomes richer when we expand the agent's action space to include the ability to \textit{garble} or add noise to their metric before revealing it. We analyze a garbling mechanism that guarantees a notion of local differential privacy for the metric, and show that
under a fairly wide set of conditions, the agent may prefer to reveal a garbled metric over both fully concealing and fully revealing the original metric. In fact, garbling can lead to \textit{Pareto improvement} for both the principal and agent. From a policy standpoint, this demonstrates settings when both a principal and agent can derive economic value from a privacy preserving mechanism, even before considering the inherent social value of privacy.

Finally, we demonstrate an application of our theory to analyzing feature discovery for pricing on rideshare platforms, using public data scraped from Uber and Lyft.



\paragraph{Our contributions can be summarized as follows:}
\begin{enumerate}[topsep=0pt]
    \item We introduce a model for the elicitation of unknown metrics, via an agency game with information transfer. 
    \item We present sufficient conditions under which an agent would prefer to reveal or conceal a metric. 
    \item We show that when the agent has the ability to reveal a garbled metric, this can lead to Pareto improvement for both the principal and agent.
    \item We leverage connections between our model and price discrimination to analyze total welfare.
    \item We apply our model with real data to analyze feature discovery for pricing on rideshare platforms.
\end{enumerate}
\section{Related Work}

Our model builds on literature from contract design and agency games. It also fits into a large literature on information design, and can be seen as an instance of information design with a specific structure. 
Our model also overlaps with a large body of work on price discrimination.

\textit{Agency games and contract design.}
We build from the well established contract design problem of \citet{laffont1986using}, which concerns a principal's design of a contract when an agent's effort and cost type are privately held by the agent. Key to this setting are asymmetries in values and information between the principal and agent, and the literature explores issues of moral hazard and adverse selection that arise from these asymmetries \citep{laffont2009theory,gibbons2013handbook,milgrom1992economics}.

A fundamental result regarding signaling incentives in contract theory is \citet{holmstrom1979moral}'s sufficient statistic theorem, which showed that it benefits an agent for the contract to be conditioned on any information that is independently informative of the agent's effort. Our setting models an agent's incentive to share information about its cost type, which yields different results from the analysis of effort signaling, and is closer to some analyses of persuasion games which we discuss in more detail below. \citet{milgrom1981good} also classifies the ``favorableness'' of signals to an agent, presenting monotonicity properties that we also leverage in this work.



\textit{Persuasion games and information design.} 
The use of stakeholder-supplied information for decision-making was introduced by \citet{milgrom1986relying} in their seminal work on \textit{persuasion games}. \citet{milgrom1986relying} give an example of analyzing a buyer's purchasing strategy when a seller can send a quality signal about their product. While this broad motivation of information transfer from interested parties is close to our setting of eliciting metrics from evaluated agents, our model has a key distinction that without the agent's help, the principal could not even access a prior over the metric. We also constrain the agent's information revelation strategy to not be able to depend on the realized value of the variable. 
This brings our setting closer to a problem of \textit{metric discovery}, where the fundamental problem is a principal's lack of awareness of a metric's observability, rather than a realized signal value \citep{spence1978job}.

More broadly, Bayesian persuasion and information design provides a general framework for analyzing the effects of the distribution of information on the outcomes of a game \citep{kamenica2019bayesian,bergemann2019information}. 
Our model can be seen as a specific instantiation of an information design problem where the \textit{sender} is the agent, the \textit{receiver} is the principal, and the contracting relationship determines the principal's and agent's action spaces and equilibria. We impose a constraint over the sender's information transfer policy in order to capture the incentives for an agent to reveal \textit{observability} of a variable to the principal.
Notably, the agent cannot selectively mask or lie about their signal based on its realized value, as our model is motivated by settings where a principal has powerful data collection capabilities. 


\textit{Price discrimination.}
Our model has analogies with classical price discrimination, and thereby opens new avenues for applying the well-developed tools from price discrimination to understanding the metric design problem. It also uncovers new challenges that extend existing price discrimination perspectives. These issues have been anticipated by \citet{bergemann2015limits}, who define a mapping from third-degree price discrimination onto the class of agency problems and establish the existence of market segmentations that achieve all possible trade-offs between consumer and producer surplus within some basic constraints. Our work complements this analysis---instead of analyzing all possible segmentations, we consider the agent's information-revelation incentives as a function of the properties of a specific segmentation induced by some cost-correlated variable which is initially only available to the agent. 

Our garbling setting also notably differs from the models of price discrimination with transportation costs and arbitrage \citep{wright1993price} or restricted price discrimination \citep{aguirre2010monopoly}. While restricted price discrimination mechanisms also effectively interpolate between full discrimination and no discrimination, we prove a substantive difference between garbling and these mechanisms in Appendix \ref{sec:garbling_price_discrimination}.

\textit{Sunspots and correlated equilibria.} Our model is also connected to the so-called sunspots literature \citep{woodford1990learning,howitt1992animal}.  In this literature, there are multiple steady-state equilibria and an observable random variable produces a correlated equilibrium, with agents conditioning their behavior on this otherwise extraneous variable not because it matters to payoffs, but because it predicts others’ behaviors.  The literature is known as sunspots because, during the 19th century, some people believed sunspot activity predicted agricultural yields \citep{jevons1884investigations}, and while it did not, it could predict the behavior of commodity traders who believed it did.  This literature is the polar opposite of the problem we study: sunspots are a known, extraneous variable believed to be relevant, while we study an unknown (to the principal) relevant variable. 

\textit{Preferences for privacy.}
Our model interacts with literatures on privacy and information-hiding, by highlighting situations where agents experience conflicting incentives both for and against sharing information. The result can be that agents may prefer partial sharing, which echoes the subtleties that arise elsewhere when privacy and behavior interact~\citep{cummings16,acquisti2016economics}.

Several works have considered the interaction between privacy and price discrimination in particular \citep{acquisti2005conditioning,conitzer2012hide,montes2015value,fudenberg2006behavior,ananthakrishnan2024privacy,fallah2024limits}. Close to our setting is \citet{fallah2024limits}, who consider a noise addition mechanism that distorts the principal's view of the observed value distributions per market segment. However, unlike our model, they assume that the principal does not know the aggregate market (the marginal distribution of costs in our setting), and their noise model can distort the principal's belief about the aggregate market. 
Our garbling mechanism also differs from these other works in various ways (including, e.g., that there is no cost to noise addition, that the agent does not derive inherent value from garbling, and that noise is only added to the metric).
Still, these various models have led to similar findings that privacy can yield economic benefits.
\section{Agency Game with Information Transfer}

To gain insight into the incentives surrounding the discovery of metrics, we start with a standard agency game in which a principal contracts an agent to complete a task. In such a game, we ask, \textit{when does the agent have an incentive to reveal observability of a cost-correlated variable to the principal?} 

Specifically, suppose a principal contracts the agent to complete a task, where an agent may exert binary effort. Suppose the principal receives value $b$ if the agent exerts effort and completes the task, and $0$ otherwise (we assume that the task is completed deterministically if the agent exerts effort). Suppose the agent incurs cost $C \in \mathbb{R}_{+}$ for exerting effort. The exact cost is unobserved to the principal, but the principal is aware of a prior distribution over agent's cost type, denoted by the random variable $C$. The agent observes both the proposed price and their realized cost type before deciding whether or not to exert effort. This maps onto the well-established agency game setup where the principal must design a contract when the agent's cost and effort are private \citep{laffont1986using}.

To model the agent's additional knowledge of a metric, suppose the agent is aware of a variable $X \in \mathcal{X}$ which is correlated with their cost $C$. 
However, $X$ is an \textit{unknown unknown} to the principal. Prior to the principal's design of the contract, the agent has a choice of whether or not to \textit{reveal} $X$, which entails revealing both the \textit{observability} of $X$ at the time of the design of the contract, and the realized \textit{value} of $X$ at the execution of the contract (which the principal can verify). 
The fact that $X$ is not observable to the principal without the agent revealing is a key distinction between our model and prior work in signaling games. That is, the principal cannot rely on prior information over $C$ and $X$, since they are not even aware of the \textit{existence} of $X$. This allows us to analyze a decision problem that differs from that of the standard Bayesian framework, which instead focuses on an agent's decision to reveal their \textit{value} of $X$ after it is realized.


If the agent chooses to not reveal $X$, then the rest of the game proceeds as a standard agency game with private cost: the principal designs a contract based on their knowledge of the prior distribution $C$.
If the agent reveals $X$, then the principal can offer a contract that conditions on the realized value of $X$. 
The timing of the full agency game with the possibility of agent information transfer is summarized in Figure \ref{fig:timing_revealed}. The text in black matches the standard agency game \citep{laffont1986using}, and the text in \textcolor{gray}{gray} represents additional elements introduced by our model.

\begin{figure}[!h]
    \centering
    \begin{tikzpicture}[]
    \draw[->] (0,0) -- (15.75,0);

    \foreach \x in {1.35,3.8,7,10.1,12.5,14.9}
      \draw (\x cm,3pt) -- (\x cm,-3pt);

    \draw (1.35,0) node[below=3pt] {$t= 0 $} node[above=3pt,text width=0.15\textwidth] 
    {P and A \\ share prior \\ distribution \\ over $C$. \\\textcolor{gray}{A knows \\prior joint \\distribution \\of $C, X$.}} ; 
    \draw (3.8,0) node[below=3pt] {$t= 1 $} node[above=3pt,text width=0.17\textwidth]  
    {\textcolor{gray}{A decides whether or not to reveal observability of $X$, including the joint distribution of $C,X$.}} ;
    \draw (7,0) node[below=3pt] {$t= 2 $} node[above=3pt,text width=0.2\textwidth] {P offers a contract\textcolor{gray}{, which can depend on $X$ if observability of $X$ was revealed.}} ; 
    \draw (10.1,0) node[below=3pt] {$t= 3 $} node[above=3pt,text width=0.13\textwidth] {\textcolor{gray}{$X$ is realized.}\\ A learns their cost type $C$.} ;
    \draw (12.5,0) node[below=3pt] {$t= 4 $} node[above=3pt,text width=0.13\textwidth] {A decides whether or not to accept the contract.} ;
    \draw (14.9,0) node[below=3pt] {$t= 5 $} node[above=3pt,text width=0.13\textwidth] {The contract is executed and utilities realized.};
    \end{tikzpicture}
    \caption{Timing of the agency game with information transfer between principal (P) and agent (A).}
    \label{fig:timing_revealed}
\end{figure}
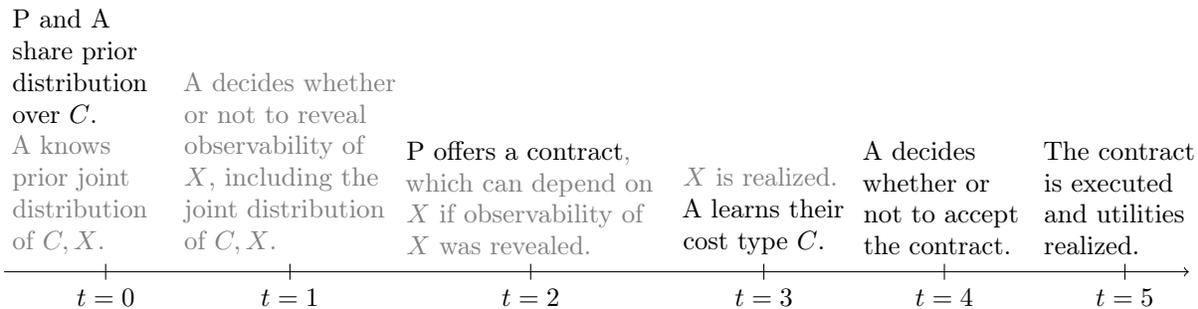




The principal's contract design problem when $X$ is concealed reduces to choosing a single price $p$ where the agent is paid $p$ if the task is completed, and zero otherwise. If $X$ is revealed, then the principal offers a contract with distinct prices $\rho(x)$ for different realized values of $x \in \mathcal{X}$. At the execution of the contract, the agent receives payment $\rho(x)$ if the task is completed and $X = x$, and $0$ otherwise. 
We assume that the principal still receives the same value $b$ if the task is completed, regardless of $X$. 

The key question in this work concerns the agent's decision of whether or not to 
reveal $X$ at time $t=1$. In Section \ref{sec:binary}, we treat this as a binary decision of whether or not to reveal $X$; we will later relax this in Section \ref{sec:garbling} to expand the agent's action space to reveal a garbled version of $X$, thus interpolating between the concealed and revealed settings.

\subsection{Optimal Contracts}\label{sec:utilities}

We begin by outlining the optimal contract and equilibrium utilities of the principal and agent when the contract is agnostic of $X$, which we refer to as the \textit{concealed information} setting. Then, we outline the optimal contract and equilibrium utilities when the principal can condition on $X$ to determine payments to the agent, which we refer to as the \textit{revealed information} setting. We assume that both principal and agent are risk neutral throughout.


\textit{Concealed information contract.}
If the metric $X$ is not revealed at $t=1$, then the rest of the game proceeds as a standard agency game with private cost, where the principal chooses a single transfer $p$ based on their knowledge of the prior distribution over the agent's cost $C$.
The agent's optimal policy at the execution of the contract is to exert effort if their realized cost is less than or equal to the payment. Thus, the agent's best response to the principal's choice of transfer $p$ is to exert effort with probability $F(p) = \sP(C \leq p)$ (the set where $C = p$ has measure zero).\footnote{We may also think of $F(p)$ as the task completion ``quantity'' as a function of price $p$, or the proportion of agents drawn uniformly at random from a population with costs distributed as $C$ that would complete the task for price $p$.}

For a given choice of transfer $p$, the principal's expected utility when $X$ is hidden is given by \[\Vph(p) \coloneqq F(p)(b - p).\]
The agent's expected utility under the principal's choice of transfer $p$ is given by
\[\Vah(p) \coloneqq \E[(p - C) \Ind(C < p)].\]
The principal moves first and chooses \[p^{*} \in \arg\max_{p \geq 0} \Vph(p).\] The agent's utility at equilibrium is then $\Vah(p^{*})$.


\textit{Revealed information contract.}\label{sec:revealed} If the metric $X$ is revealed at time $t=1$, then the principal can vary the payment amount depending on $X$, denoted as $\rho: \mathcal{X} \to \R_{+}$. 
The agent's best response to the principal's payment function $\rho(\cdot)$ is to exert effort with probability $F_x(\rho(x)) = P(C \leq \rho(x) | X = x)$ when $X = x$.
The principal's expected utility in the revealed setting is given by
\[\Vpr(\rho) \coloneqq \E[F_X(\rho(X))(b - \rho(X))].\] 
The agent's expected utility under the principal's choice of transfer function $\rho(\cdot)$ is \[\Var(\rho) \coloneqq \E[(\rho(X) - C)\Ind(C < \rho(X))].\]
The principal again moves first and chooses \[\rho^{*} \in \arg\max_{\rho \in \mathcal{F}} \Vpr(\rho).\]

The agent's choice of whether to reveal or conceal is ultimately determined by their \textit{utility difference}, $\Var(\rho^{*}) - \Vah(p^{*})$. The agent prefers to reveal $X$ if and only if the utility difference is positive.

\section{Welfare Effects of Information Revelation}\label{sec:binary}

Our first central goal is to analyze the circumstances under which the agent would prefer to either conceal or reveal the observability of the metric $X$ at time $t=1$. We show that the agent prefers to conceal $X$ when it strongly differentiates the lowest cost setting from the rest, and reveal $X$ when it strongly differentiates the highest cost setting from the rest.
We also analyze the consequences of the resulting decision on total welfare, connecting our model to the literature on the effects of price discrimination on total welfare.



\subsection{Agent's Revelation Incentives}\label{sec:agent_binary}

To understand the properties of a feature $X$ that would lead the agent to either want to reveal or conceal, we begin with a family of \textit{thresholding features}, $\{X^t:  t \in \mathbb{R}^{+} \}$, where $X^t = 0$ if $C > t$, and $X^t = 1$ if $C \leq t$. That is, $X^t = 1$ indicates a low cost type, and $X^t = 0$ indicates a low cost type, thresholded by $t$. Our main results are two theorems that indicate that if $t$ is small enough, the agent will prefer to conceal, and if $t$ is large enough (relative to the principal's value $v$), then the agent will prefer to reveal. The main intuitive takeaway is that \textit{the agent will prefer to conceal features that strongly identify a low cost type, and will prefer to reveal features that strongly identify a high cost type.}

We now present precise conditions in which the agent prefers to reveal or conceal a thresholding feature $X^t$. 
We begin with two regularity conditions on the cost distribution $C$.

\begin{assumption}[Differentiable cost distribution]\label{assn:diffble_c} The cost $C$ has density $f$ and CDF $F(c)$ which is concave for $c \geq p^{*}$.
\end{assumption}

The concavity assumption on $F$ connects the optimal price $p^{*}$ to first-order conditions on the principal's utility.

\begin{assumption}[Monotone reverse hazard rate]\label{assn:rhr} The ratio $\frac{F(c)}{f(c)}$ 
is strictly monotone increasing for $c > 0$.
\end{assumption}

This assumption reflects the log-concavity of $F$, and mirrors the standard monotone hazard rate condition \cite{barlow1963properties, bagnoli2006log, mcafee2002coarse}. 



Under these assumptions, we show that the agent prefers to reveal $X^t$ for sufficiently high $t$, as long as the principal's task value $b$ is not too low and not too high.

\begin{theorem}[Agent prefers to reveal for high thresholds]\label{thm:reveal_high_t} Under Assumptions \ref{assn:diffble_c} and \ref{assn:rhr}, if the cost is bounded above by $\bar{C}$ with $f(\bar{C}) > 0$, and if $b \in \left(\bar{C}, \bar{C} + \frac{1}{f(\bar{C})}\right)$, then there exists a threshold $\underline{t}$ such that for all $t > \underline{t}$, the agent prefers to reveal $X^t$.
\end{theorem}

To gain intuition for why the principal's task value $b$ matters, consider an extremely high task value of $b \gg\bar{C}$. In this case, the principal's optimal hidden price will already be $p^{*} = \bar{C}$, and the agent can only end up with lower prices from revealing information. Thus, we need that $b$ is at least low enough that the optimal price $p^{*} < \bar{C}$. On the other hand, if $b$ is too low, the principal will not price high enough to accommodate the highest cost agents, even after $X^t$ is revealed.
Theorem \ref{thm:reveal_high_t} shows that for $b$ in a ``sweet spot,'' the agent will prefer to reveal a thresholding feature $X^t$ that differentiates the highest cost agents from the rest of the crowd. We give a more precise characterization of the exact threshold $\underline{t}$ as a function of $b$ in Theorem \ref{thm:reveal_high_t_detailed} in the Appendix.

On the opposite end of the spectrum, we next show that the agent will prefer to \textit{conceal} any $X^t$ which differentiates the \textit{lowest} cost agents from the rest. To illustrate this, let $\Delta(t)$ be the difference in the agent's value between revealing and concealing $X^t$: $\Delta(t) \coloneq \Var(\rho^{*}) - \Vah(p^{*})$ for  $X = X^t$.
The agent prefers to reveal $X^t$ if and only if $\Delta(t)$ is positive.
Note that $\Delta(t)$ must be zero at the endpoints since $X^t$ would be entirely uninformative about $C$ (See Lemma \ref{lem:0_endpoints} in the Appendix).


We next show that $\Delta(t)$ initially decreases in $t$, meaning that the agent would prefer to conceal $X^t$ for such sufficiently small $t$. To do this, we require one more assumption on the agent's sensitivity to price.

\begin{assumption}[Price elasticity]\label{assn:elasticity}
The agent's price elasticity for task completion is high at the optimal hidden price: $\eta(p^{*}) \geq 1$, where $\eta(p) = \frac{pf(p)}{F(p)}$.
\end{assumption}

The price elasticity $\eta(p)$ measures the sensitivity of the agent's task completion quantity to the price offered by the principal. A price elasticity of at least $1$ indicates that a percentage change in price affects the quantity at least as much. Under this assumption, we now formalize the agent's incentive to conceal for low $t$.

\begin{theorem}[Agent prefers to conceal for low thresholds] \label{thm:conceal_t}
Under Assumptions \ref{assn:diffble_c}, \ref{assn:rhr}, and \ref{assn:elasticity}, if $f(0) > 0$, then $\Delta'(0) < 0$.
\end{theorem}

Theorem \ref{thm:conceal_t} completes the picture for $t$ sufficiently low. As $t$ increases marginally from zero, the agent's value for revelation decreases below zero, such that the agent prefers concealment. 

In summary, the analysis of the thresholding features $X^t$ gives a picture of the types of features that the agent would prefer to reveal or conceal: if the feature $X$ differentiates the highest cost agents from the crowd, the agent prefers to reveal. If the feature $X$ differentiates the lowest cost agents from the crowd, the agent prefers to conceal. For an illustration of these theorems using the uniform distribution, see Figure \ref{fig:uniform_contour} in the Appendix. In Section \ref{sec:experiments}, we show experiments with more realistic features from a rideshare dataset that that reflect our theoretical findings here.

We give more general conditions the agent to prefer to reveal or conceal for $X$ beyond thresholding features in Appendix \ref{app:agent_binary_general}. These more general conditions are less easy to interpret, but nonetheless verifiable for any given feature distribution.

\subsection{Total Welfare Consequences}\label{sec:total_welfare_revelation}

We now consider whether revealing $X$ increases  total welfare, or the sum of utilities of the principal and agent. Note that the principal always benefits from revelation, which we formalize in Appendix \ref{sec:principal_revelation}. When $X$ is concealed, total welfare is given by $\Vjh(p) \coloneqq \Vah(p) + \Vph(p),$
and when $X$ is revealed, total welfare is given by
$\Vjr(\rho) \coloneqq \Var(\rho) + \Vpr(\rho)$.


The question of whether price discrimination increases total welfare has been well studied \citep{varian1985price}.
Our comparison of the concealed vs. revealed contracts has a direct isomorphism with third-degree monopoly price discrimination. Specifically, our concealed setting corresponds to monopoly pricing without price discrimination, with the seller acting as principal and the buyer acting as agent. Our revealed setting corresponds to a monopoly seller enacting third-degree price discrimination over markets segmented by $X$. 

Thus, with minor adjustments, we can apply results from the price discrimination literature that characterize the effects of third-degree monopoly price discrimination on total welfare. Mirroring \citet{varian1985price}'s seminal work, Lemma \ref{lem:total_welfare_necessary} shows that total welfare increases only if the quantity of tasks completed also increases in the revealed setting compared to the concealed setting.


\begin{lemma}\label{lem:total_welfare_necessary}
Total welfare increases under revelation (i.e. $\Vjr(\rho^{*}) > \Vjh(p^{*})$) only if task completion quantity increases: $F(p^{*}) < \E[\Ind(C < \rho^{*}(X)]$.
\end{lemma}

Task completion quantity does not always increase under revelation, and it is thus possible for total welfare to decrease under revelation---we give an example of this in Appendix \ref{app:total_welfare_decrease}. 




\section{Information Revelation with Garbling}\label{sec:garbling}

So far, we have compared the settings when $X$ is either concealed or  revealed, focusing on the agent's incentive to induce each setting at time $t=1$. We now generalize the agent's action space to instead be able to reveal a \textit{garbled} version of the variable $X$. 
The agent's garbling action space interpolates between the concealed and revealed settings. We consider a \textit{randomized response} garbling mechanism, which is a noise addition method that has been applied in many settings, from statistical informativeness \citep{blackwell1951comparison} to survey experiment design \citep{warner1965randomized} to differential privacy \citep{dwork2006calibrating,kasiviswanathan2011can}. In particular, we formally show that our garbling mechanism guarantees a notion of differential privacy with respect to an agent's metric value $X$.

As an overview of results in this section, we show that there exist conditions under which the agent would prefer to garble over both concealment and revelation. Thus, having the option to garble can benefit the agent, and even induce the agent to reveal more information when they would otherwise fully opt to conceal $X$, thus leading to a Pareto improvement for both the principal and agent. We also show that garbling improves total welfare over the concealed setting.

\subsection{Agency Game with Garbling} Suppose the agent has the option to present the principal with a different variable $Y$ with noise added to the binary variable $X$:
\begin{equation*}\label{eq:Y}
    Y = \begin{cases}
    X & \text{w.p. } \varepsilon \\
    \xi &\text{w.p. } 1 - \varepsilon,
\end{cases} 
\end{equation*}
where $\xi \sim \text{Bernoulli}(\theta)$  is independent of $X$ and $C$, for $\theta = \sP(X = 1)$. For ease of exposition, we focus on binary $X$ and $Y$, though further extensions with different noise models can be made for continuous $X$. 

The game with garbling proceeds as before, but the agent selects $\veps \in [0,1]$ at time $t=1$. The full timing is outlined in Figure \ref{fig:timing_garbled}. The agency game with garbled information transfer is a generalization of the previous game: in the previous timing in Figure \ref{fig:timing_revealed}, the agent's choice at $t=1$ would be equivalent to selecting $\veps$ from a more restricted set $\{0,1\}$. In other words, garbling \textit{interpolates} between the concealed and revealed settings in the previous game.

\begin{figure*}[!ht]
    \centering
    \begin{tikzpicture}[]
    \draw[->] (0,0) -- (15.75,0);

    \foreach \x in {1.1,3.8,6.9,9.5,12.3,14.9}
      \draw (\x cm,3pt) -- (\x cm,-3pt);

    \draw (1.1,0) node[below=3pt] {$t= 0 $} node[above=3pt,text width=0.13\textwidth] 
    {P and A \\ share prior \\ distribution \\ over $C$. \\\textcolor{gray}{A knows \\prior joint \\distribution \\of $C, X$.}} ;  
    \draw (3.8,0) node[below=3pt] {$t= 1 $} node[above=3pt,text width=0.18\textwidth]  
    {\textcolor{gray}{A selects $\veps$, \\ and reveals the \\ observability of $Y$ (\ref{eq:Y}), including joint distribution of $C,Y$.}} ;
    \draw (6.9,0) node[below=3pt] {$t= 2 $} node[above=3pt,text width=0.18\textwidth] {P offers a \\ contract\textcolor{gray}{, which depends on $Y$.}} ;
    \draw (9.5,0) node[below=3pt] {$t= 3 $} node[above=3pt,text width=0.15\textwidth] {\textcolor{gray}{$X$ and $Y$ are \\ realized.} \\ A learns their cost type $C$.} ;
    \draw (12.3,0) node[below=3pt] {$t= 4 $} node[above=3pt,text width=0.15\textwidth] {A decides whether or not to accept the contract.} ;
    \draw (14.9,0) node[below=3pt] {$t= 5 $} node[above=3pt,text width=0.14\textwidth] {The contract is executed and utilities realized.};
    \end{tikzpicture}
    \caption{Timing of the agency game with \textit{garbled} information transfer between principal (P) and agent (A).}
    \label{fig:timing_garbled}
\end{figure*}


The principal treats the variable $Y$ as a revealed metric with prior joint distribution $Y, C$, and proceeds to design the optimal contract conditioning on $Y$. This is optimal for the principal both in the case where the principal has no knowledge of $\veps$ or $X$, and in the case where the principal knows $\veps$ and the joint distribution of $C,X$ (but not the realized value of $X$).

The game proceeds as delineated for the revealed information contract in Section \ref{sec:revealed}, but using the variable $Y \in \mathcal{Y}$ instead of $X$. Specifically, the principal designs a contract with prices $\rho: \mathcal{Y} \to \mathbb{R}_{+}$,
and upon execution of the contract, the agent receives payment $\rho(y)$ if they exert effort and $Y=y$. 

The principal's utility under revelation of a given garbled variable $Y$ is $\Vpeps(\rho) \coloneq E[(b - \rho(Y))\Ind(C < \rho(Y))]$,
and the agent's utility is
$\Vaeps(\rho) \coloneq E[(\rho(Y)-C)\Ind(C < \rho(Y))]$. After $Y$ is revealed, the principal selects $\rho^{*} \in \arg\max_{\rho \in \mathcal{F}} \Vpeps(\rho)$. 

For a given $\veps$ chosen at $t=1$, we denote the equilibrium utilities as $\Vpeps(\veps) = \Vpeps(\rho^{*}), \Vaeps(\veps) = \Vaeps(\rho^{*})$. The agent's optimal choice of $\veps$ at time $t=1$ is then $\veps^{*} \in \arg\max_{\veps \in [0,1]} \Vaeps(\veps)$. An optimal choice of $\veps^{*} = 0$ corresponds to full concealment, and $\veps^{*} = 1$ corresponds to full revelation. In the following sections, we analyze the effects of this optimal choice on welfare, and compare this to the welfare effects of the agent's optimal choice in the previous game without the garbling option. We also formally connect the garbling model to existing notions of differential privacy.


\subsection{Agent's Garbling Incentives}\label{sec:agent_garbling}

Our primary observation is that even for distributions satisfying the regularity conditions from before, there exist cases when some amount of garbling is preferred over both concealment and revelation, for \textit{both} orderings of concealment vs. revelation. That is, there exist settings when $\textcolor{orange}{\Vaeps(\veps^{*})} > \textcolor{DarkGreen}{\Var(\rho^{*})} > \textcolor{red}{\Vah(p^{*})} $ (a.k.a. \textcolor{orange}{garbled} > \textcolor{DarkGreen}{revealed} > \textcolor{red}{concealed}), and settings when $\textcolor{orange}{\Vaeps(\veps^{*})}  > \textcolor{red}{\Vah(p^{*})} > \textcolor{DarkGreen}{\Var(\rho^{*})}$ (\textcolor{orange}{garbled} > \textcolor{red}{concealed} > \textcolor{DarkGreen}{revealed}).

Both orderings are significant from a policy standpoint. The case when \textcolor{orange}{garbled} > \textcolor{DarkGreen}{revealed} > \textcolor{red}{concealed} is interesting since it indicates that the agent derives economic value from adding noise to their data compared to fully revealing. 
Thus, \textit{even without any inherent value for privacy, the agent still prefers the privatized setting}.

The case when \textcolor{orange}{garbled} > \textcolor{red}{concealed} > \textcolor{DarkGreen}{revealed} is perhaps even more interesting, since without the ability to only partially reveal a garbled $Y$, the agent would have otherwise chosen to conceal. Thus, allowing for garbling results in a strict Pareto improvement for both the principal and agent; that is, \textit{both principal and agent benefit from the option of privacy}. This provides an incentive for a platform to agree to maintain garbling of a feature, even if they could otherwise discover $X$ after $Y$ is revealed. 

Figure \ref{fig:expon_garbling_stacked} illustrates a continuum of distributions that contain both of these orderings. Specifically, let $C$ be a mixture of exponentials with $C | X = 0 \sim \text{Exp}(\frac{1}{\lambda_0})$ and $C | X = 1 \sim \text{Exp}(\frac{1}{\lambda_1})$, where $\lambda_x$ is the mean of the distribution. Fixing $\theta = \frac{1}{2}$, $b = 1$, and $\lambda_0 = 0.5$, Figure \ref{fig:expon_garbling_stacked} shows a range of $\lambda_1$ in which the agent prefers \textcolor{orange}{garbled} > \textcolor{red}{concealed} > \textcolor{DarkGreen}{revealed}, and as $\lambda_1$ increases, this flips to \textcolor{orange}{garbled} > \textcolor{DarkGreen}{revealed} > \textcolor{red}{concealed}. Thus, even within the same family of ``nice'' distributions, both orderings can occur.

\begin{figure}[!ht]
    \centering
    \begin{tabular}{c}
        Agent's value difference for a mixture of exponentials \\
        \includegraphics[trim={0.3cm 0.35cm 0 0.25cm},clip, width=0.7\columnwidth]{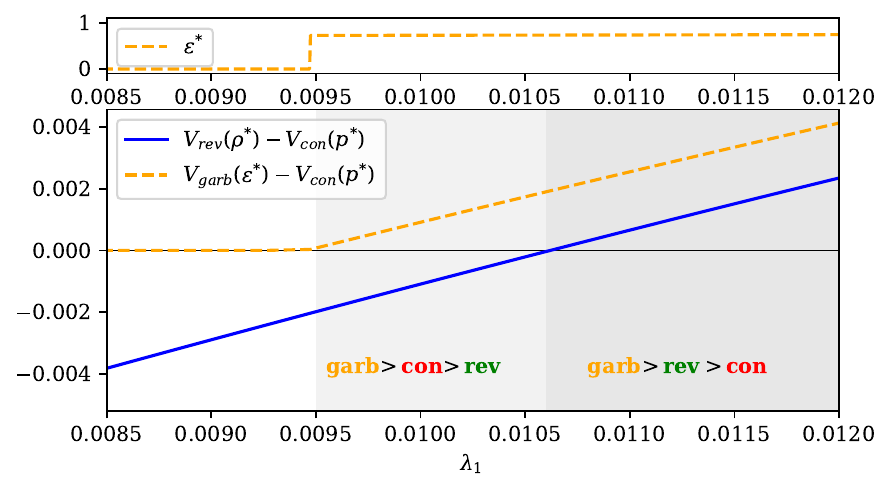}
    \end{tabular}
    \caption{Agent's utility differences for revealing $X$ and garbled $Y$ when $C$ is a mixture of exponentials. We fix $\theta=\frac{1}{2}$, $b = 1$, and $\lambda_0 = 0.5$, and vary $\lambda_1$. 
    The \textcolor{blue}{solid blue line} shows the agents utility difference upon revealing $X$ for each value of $\lambda_1$. The \textcolor{orange}{dashed orange line } shows the agent's utility difference upon revealing the optimal garbled $Y$, for optimal garbling parameter $\veps^{*}$ displayed above. This first lighter shaded region highlights settings where \textcolor{orange}{garbled} > \textcolor{red}{concealed} > \textcolor{DarkGreen}{revealed}. The second darker shaded region highlights settings where \textcolor{orange}{garbled} > \textcolor{DarkGreen}{revealed} > \textcolor{red}{concealed} .
    }
    \label{fig:expon_garbling_stacked}
\end{figure}


To give a more general theoretical characterization of the agent's garbling incentives, we give sufficient conditions for the agent to prefer a nonzero amount of garbling over full revelation in Appendix \ref{app:garbling_general}. This can be seen as a softer version of the previous analysis of when the agent prefers full concealment over full revelation.

\subsection{Garbling and Total Welfare}\label{sec:garbling_total_welfare}

We next show that garbling increases total welfare ($\Vjeps(\veps) \coloneqq \Vpeps(\veps) + \Vaeps(\veps)$) relative to concealment. The principal always benefits from more information, which we formalize in \ref{sec:principal_garbling}. Optimal garbling always increases total welfare over concealment. A more subtle finding is that adding a marginal amount of garbling compared to concealment \textit{also} increases total welfare. 


First, we show that the \textit{optimal} amount of garbling chosen by the agent also increases total welfare over the fully concealed setting.

\begin{lemma}[Optimal garbling increases welfare over concealment]\label{lem:total_welfare_opt_garbling}
    Let $\veps^{*} \in \arg\max_{\veps \in [0,1]} \Vaeps(\veps)$. Then $\Vjeps(\veps^{*}) \geq \Vjeps(0)$. 
\end{lemma}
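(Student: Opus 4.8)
The plan is to show that the function $\Vjeps(\veps)$ is minimized at $\veps = 0$ over the range of relevant $\veps$, or at least that $\Vjeps(\veps^{*}) \geq \Vjeps(0)$ for the specific point $\veps^{*}$ chosen by the agent. The natural starting point is Lemma \ref{lem:total_welfare_zero_deriv}, which gives $\Vjeps'(0) \geq 0$: total welfare is nondecreasing as we begin to reveal information. If one could show $\Vjeps$ is quasi-convex or monotone in $\veps$ on $[0,1]$ this would be immediate, but I do not expect that in general (the total welfare comparison inherits the non-monotonicity already visible in the full concealment vs.\ revelation comparison of Section \ref{sec:total_welfare_revelation}). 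So instead I would argue directly using the fact that $\veps^{*}$ is the agent's \emph{optimal} garbling choice.

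The key idea is to decompose total welfare as $\Vjeps(\veps) = \Vaeps(\veps) + \Vpeps(\veps)$ and exploit monotonicity of the principal's piece. By Lemma \ref{lem:garbling_principal}, $\Vpeps(\veps) \geq \Vpeps(0) = \Vph(p^{*})$ for all $\veps$ — the principal is never worse off than under full concealment once any information (even garbled) is transferred. Meanwhile, since $\veps^{*}$ maximizes $\Vaeps(\cdot)$ over $[0,1]$ and $0 \in [0,1]$, we have $\Vaeps(\veps^{*}) \geq \Vaeps(0) = \Vah(p^{*})$. Adding these two inequalities gives
\begin{equation*}
\Vjeps(\veps^{*}) = \Vaeps(\veps^{*}) + \Vpeps(\veps^{*}) \geq \Vah(p^{*}) + \Vph(p^{*}) = \Vjh(p^{*}) = \Vjeps(0),
\end{equation*}
which is exactly the claim. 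The only things being used are: (i) the agent optimizes their own utility at $\veps^{*}$, so their utility there is at least what it is at any other feasible $\veps$, in particular $\veps=0$; and (ii) the principal's utility at any $\veps$ dominates their utility at $\veps=0$, which is the content of Lemma \ref{lem:garbling_principal}.

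I expect the main subtlety — if there is one — to lie in confirming the boundary identifications $\Vpeps(0) = \Vph(p^{*})$, $\Vaeps(0) = \Vah(p^{*})$, and $\Vjeps(0) = \Vjh(p^{*})$, i.e.\ that setting $\veps = 0$ in the garbled model genuinely reduces to the fully concealed model. This follows from the setup: at $\veps=0$ and $\gamma=\theta=\tfrac12$ we get $G_0 = G_1 = \tfrac12 F_0 + \tfrac12 F_1 = F$, so $Y$ carries no information about $C$, the principal's optimal contract is constant in $y$ and equals the concealed-information transfer $p^{*}$, and all three utilities collapse to their concealed-setting counterparts. Once these identifications are in place the argument is a two-line addition of inequalities; no curvature assumptions or delicate analysis of $\veps^{*}$ itself are needed, because we only compare the endpoint value $\Vjeps(0)$ against $\Vjeps(\veps^{*})$ rather than trying to characterize the whole trajectory. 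It is worth remarking that this argument does not require Assumption \ref{assn:concave} or the MLRP; those are only needed for the \emph{strictness} in Lemma \ref{lem:total_welfare_zero_deriv}, whereas the present statement is a weak inequality.
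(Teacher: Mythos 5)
Your argument is exactly the paper's proof: decompose $\Vjeps = \Vaeps + \Vpeps$, apply Lemma~\ref{lem:garbling_principal} to get $\Vpeps(\veps^{*}) \geq \Vpeps(0)$, and use optimality of $\veps^{*}$ to get $\Vaeps(\veps^{*}) \geq \Vaeps(0)$. The extra discussion of boundary identifications and of not needing Assumption~\ref{assn:concave} or the MLRP is correct but not a different route.
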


Intuitively, Lemma \ref{lem:total_welfare_opt_garbling} follows from the fact that the principal is never hurt by additional information (formalized in Appendix \ref{sec:principal_garbling}). Given that the agent benefits from their optimal garbling choice, total welfare must increase. 

Next, we show that relative to the fully concealed setting, the marginal effect of revealing any information on total welfare is initially positive relative to full concealment. 


\begin{lemma}[More information initially increases welfare]\label{lem:total_welfare_zero_deriv}
    Increasing $\veps$ from $0$ marginally increases total welfare: $\Vjeps'(0) \geq 0$. The inequality is strict if $\Vpeps'(0) > 0$.
\end{lemma}

While the agent's chosen amount of garbling $
\veps^{*}$ improves total welfare over concealment, the question remains of where $
\veps^{*}$ falls relative to the optimal amount of garbling that maximizes total welfare. 
Lemma \ref{lem:garbling_total_welfare_derivative} shows that the optimal amount of garbling that maximizes total welfare must necessarily reveal at least as much information as the optimal amount of garbling chosen by the agent. 

\begin{lemma}[More information increases total welfare relative to agent optimal garbling]\label{lem:garbling_total_welfare_derivative} $\Vjeps'(\veps) \geq \Vaeps'(\veps)$ for all $\veps \in [0,1]$. The inequality is strict if $\Vpeps'(0) > 0$.
\end{lemma}

The benefits of garbling to total welfare have interesting policy implications, as the increase in total welfare means that a third party designer could inject garbling noise (perhaps for privacy reasons), and later redistribute surplus such that both principal and agent do not end up with worse utilities. We next formalize the connection between garbling and notions of differential privacy.

\subsection{Garbling and Differential Privacy}\label{sec:garbling_privacy}

As we model it, garbling directly induces a guarantee of local differential privacy for the agent's feature value $X$. 
Upon garbling, the principal can still observe the marginal distribution of the feature $X$, but they have limited ability to identify any specific agent's value of $X$. 
We formalize this below. 

While randomized response and other differentially private mechanisms are typically thought of as methods to preserve the confidentiality of sensitive information motivated by the inherent value of privacy, our model focuses on the possible economic benefits to both principal and agent from such a mechanism. Our main theorem shows that both the agent and the principal can derive utility from the agent garbling $X$, even \textit{without} accounting for any inherent value for privacy. In this section, we formally relate our garbling mechanism to the literature on differential privacy.

Whenever an agent's metric value $X$ is queried, it first passes through the garbling mechanism. This acts as a local randomizer. 
Any adversary viewing the output of a local randomizer would have limited ability to identify the agent's original $X$ value.

\begin{definition}[Local Randomizer \citep{kasiviswanathan2011can}]
    A mechanism $\mathcal{M}: \mathcal{X} \to \mathcal{Y}$ is an $\veps$-local randomizer if for all $x, x' \in \mathcal{X}$ and all $y \in \mathcal{Y}$,\\ $\sP(\mathcal{M}(x) = y) \leq e^{\veps} \sP(\mathcal{M}(x') = y).$
\end{definition}

Any mechanism that only accesses the underlying data by way of local randomizers is known to enjoy a guarantee of \textit{local differential privacy} \citep{kasiviswanathan2011can} with a privacy parameter that reflects the composition of the local randomizers' $\veps$ parameters across multiple data accesses. 

Our garbling mechanism is a local randomizer, and therefore any prices that are based on $X$ solely through the garbling mechanism are locally differentially private.
\begin{lemma}[Privacy of garbling mechanism]\label{lem:ldp_garbling}
    For $\theta \in (0,1)$, for all $\veps \leq O(\theta)$, the garbling mechanism is an $O(\veps)$-local randomizer with respect to $X$. A pricing algorithm that accesses $X$ only by way of the garbling mechanism is $O(\veps)$-locally differentially private.
\end{lemma}

\section{Experiments}\label{sec:experiments}



To demonstrate how our theory can apply to real scenarios and feature distributions, we turn to public data from the rideshare platforms Uber and Lyft. Platforms like these employ algorithmic pricing methods that depend on numerous ride metrics (or features). We investigate an agent's incentives to reveal features to the platform that would affect their pricing in several hypothetical scenarios. 
Our findings track with the intuition built by our theory, and illustrate a variety of agent dynamics that can occur.


\textit{Dataset.} We consider the Uber and Lyft data that is publicly available on Kaggle \cite{uberlyftkaggle}.  This data was scraped using API queries over the course of one week at the end of November, 2018 in Boston, MA. The dataset includes a set of Uber rides and a set of Lyft rides, and contains columns for price, distance, surge multiplier, and others variables. Implementation details are given in Appendix \ref{app:experiments_additional}.

\textit{Scenario: revelation of a highly cost-correlated feature.} Consider a simple scenario in which a competing rideshare company (principal) wants to win over drivers (agents) from Lyft. We use the price column as an approximation for the price offered to the driver. The cost variable $C$ is the price that Lyft offers for the ride, since this is the price that the principal must beat for the driver to switch to the principal's platform. The agent gains utility if the principal offers price $p > C$ to induce the agent to switch; otherwise, the agent ignores the offer and is no worse off than before. Our central question is whether the agent would be better off if the principal offers prices as a function of $X = $ distance, which is highly correlated with $C$. In other words, \textit{would the agent prefer to reveal a highly cost-correlated feature to the initially naive principal?}\footnote{Most real rideshare companies are likely aware of the importance of distance, but for the purposes of this illustration, we consider distance as a stand-in for similarly highly cost-correlated variables.} 

\textit{Results.} We analyze the agent's utilities for revealing, concealing, and garbling, and compare with the predictions from our theory. We present results when the principal's value for the agent's switch is $b = 1.5\bar{C}$, such that the prices are not degenerate (e.g., always offering the maximum cost or never offering above some amount).\footnote{A high value of $b$ can be thought of as the principal receiving high reward for winning users to their platform, which is often important to investors. We repeat this for different values of $b$ in the Appendix.} 

First, the agent indeed prefers to reveal the distance feature $X$, with the exact difference in the agent's value between revelation and concealment shown in Figure \ref{fig:lyft_scen1}. This is particularly interesting because a feature which \textit{exactly} matches the cost $C$ would leave the agent with a surplus of $0$. This means that the distance feature is correlated enough with $C$ to change the principal's prices, but adds enough noise to still leave the agent with some surplus. The Pearson correlation between $X$ and $C$ is $\approx 0.36$. 

We now compare this to the agent's incentive to reveal a coarser, binarized version of the distance feature. 
Let $Z^t = 1$ if $X \leq t$ and $0$ otherwise. This binarized version tracks closely with our theory on thresholding features, as $Z^t$ can be thought of as a noisy thresholding of $C$.
Figure \ref{fig:lyft_scen1} shows that for some $t$, the agent actually prefers to reveal $Z^t$ over the full feature $X$. Figure \ref{fig:lyft_scen1} also shows that the agent prefers to conceal $Z^t$ for $t$ very low, and reveal for $t$ is very high, which matches Theorems \ref{thm:reveal_high_t} and \ref{thm:conceal_t}. 

If the agent is further given the chance to \textit{garble} $Z^t$, we show that sometimes the agent prefers garbling over revelation.
Figure \ref{fig:lyft_scen1} shows the agent's optimal garbling amount $\veps^{*}$ for each $Z^t$. For the highest and lowest thresholds, the agent sticks to either full revelation or full concealment; but for some $t$ in between, the agent achieves even higher utility through garbling than full revelation. As a practical implication, this constitutes a set of scenarios in which an agent derives economic value from feature privacy.

\begin{figure}[!ht]
    \centering
    \begin{tabular}{c}
        Scenario 1: Agent's utility difference for revealing distance\\
        \includegraphics[trim={0.3cm 0.35cm 0 0.25cm},clip, width=0.7\columnwidth]{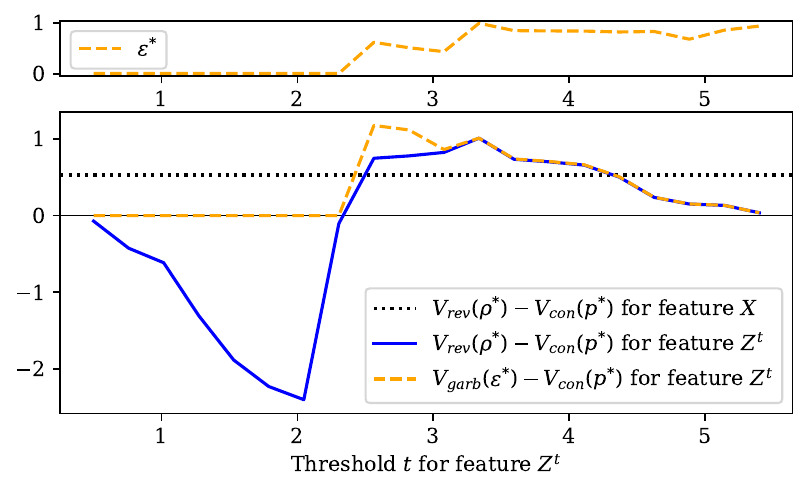}
    \end{tabular}
    \caption{Agent's utility differences for revealing the distance feature (positive means the agent prefers to reveal). The dotted line is the agent's value difference upon revealing the full distance feature $X$. The \textcolor{blue}{solid blue line} shows the agents value differences upon revealing $Z^t$ for different thresholds $t$. The \textcolor{orange}{dashed orange line } shows the agent's value difference upon revealing the optimal garbled version of $Z^t$, for optimal garbling parameter $\veps^{*}$ displayed above ($\veps = 1$ corresponds to full revelation, and $\veps = 0$ to full concealment). In alignment with our theory, the agent prefers to conceal for $t$ low enough, and reveal for $t$ high enough. There also exist $t$ values in the middle in which \textcolor{orange}{garbled} > \textcolor{DarkGreen}{revealed} > \textcolor{red}{concealed}.}
    \label{fig:lyft_scen1}
\end{figure}

\textit{Discussion and Limitations.} The findings from this scenario give a rough picture of the types of features that an agent would want to reveal given this data. In Appendix \ref{app:scen2}, we construct a second similar scenario in which the principal starts with a more sophisticated baseline pricing model, which yields similar results. In general, these are not intended as a characterization of real rideshare platforms, but instead as a demonstration of how our model can be applied to analyze metric elicitation incentives in practical settings. For example, a collective action organization acting on behalf of drivers might perform analysis like the above, and determine whether it would be beneficial to reveal new, possibly garbled or thresholded features to a rideshare platform. We encourage replication of similar analyses by agencies with internal data sources.



\section{Conclusions and Future Work}


We have presented a model to analyze the discovery of metrics in a setting of information asymmetry where relevant metrics are unknown to a principal, but known to an evaluated agent. In characterizing the conditions under which new metrics are revealed, we have shown that an evaluated agent will prefer to reveal metrics that differentiate their highest cost settings from the rest, and conceal metrics that differentiate their lowest cost settings from the rest. Furthermore, the agent may actually prefer to reveal a \textit{garbled} version of the metric over both fully concealing and fully revealing, which demonstrates settings in which both agent and principal may still derive economic value from the option of privacy, even without adding their inherent value for privacy.



More broadly, this work was motivated by analyzing a mechanism by which one might discover \textit{unknown unknowns}. Even as data and computational methods become increasingly sophisticated and widely available, this problem of discovery of \textit{which} metrics or variables to analyze continues to permeate the natural sciences, social sciences, and engineering. In machine learning contexts in particular, there has been growing recent work on markets for sharing data \cite{acemoglu2022too,ananthakrishnan2024delegating, karimireddy2022mechanisms}, but the question of incentives for sharing \textit{features} is distinct, and can be analyzed using the model presented here.


Ultimately, there are many other possibilities for for formulating the question of \textit{who} holds relevant information, and \textit{when} they would be willing to share it. For example, one could model incentives for third-party individuals to offer new metrics, or perhaps bi-directional information transfer where a principal and agent both hold distinct information. The key element of our work that may be worth retaining in alternative information design frameworks is the property that the variable itself may be unknown to the information receiver.

\section*{Acknowledgements}
%
This work is supported by Google Research and the European Union (ERC-2022-SYG-OCEAN-101071601).
%
%
%
%
%

\bibliography{references}
\newpage
\clearpage

\appendix

\section{Additional Notation}
In the following proofs, we apply the following additional notation. 

We define $F_x(c) \coloneq \sP(C \leq c | X = x)$ as the CDF of the conditional distribution of $C$ given $X$. Let $f_x(c)$ denote the corresponding density.



Let $\Ind(\cdot)$ denote an indicator function with 
\[\Ind(x \in S) = 
\begin{cases}
    1 &\text{if } x \in S\\
    0 & \text{otherwise}.
\end{cases}\]

Let the principal and agent's utilities when $X = x$ be given by \[\Pi_x(p) \coloneqq F_x(p)(b-p),\] \[V_x(p) \coloneqq \E[(p-c)\Ind(C < p) | X = x].\]

Some of our analysis will focus on a one-dimensional binary feature: $\mathcal{X} = \{0,1\}$, where $X = 1$ with probability $\theta$. In these cases, we simplify notation by parameterizing the principal's decision problem as that of choosing $\rho(0) = p_0$ and $\rho(1) = p_1$. 

The principal's expected utility becomes
\begin{equation*}
    \Vpr(p_0, p_1) \coloneqq (1-\theta)\Pi_0(p_0) + \theta \Pi_1(p_1),
\end{equation*}
and the agent's utility becomes 
\begin{equation*}
    \Var(p_0, p_1) \coloneqq (1-\theta) V_0(p_0) + \theta V_1(p_1).
\end{equation*}

The principal moves first and chooses \[p_0^{*}, p_1^{*} \in \arg\max_{p_0, p_1} \Vpr(p_0, p_1),\] resulting in an equilibrium where the agent's utility is $\Var(p_0^{*}, p_1^{*})$.

Without loss of generality, we will refer to the situation when $X=1$ as the ``stronger'' situation with generally lower cost for effort. That is, $p_0^{*} > p_1^{*}$.

Under garbling for binary $Y$, we simplify notation by parameterizing $\rho$ as $\rho(0) = p_0$ and $\rho(1) = p_1$. Let $p_0(\veps), p_1(\veps)$ denote the values of these parameters that maximize $\Vpeps$ for a given $\veps$. 


\section{Proofs from Section \ref{sec:binary}}

Here we give full proofs from Section \ref{sec:binary}. 

\subsection{Proofs of Theorems \ref{thm:reveal_high_t} and \ref{thm:conceal_t}}

To prove Theorem \ref{thm:reveal_high_t}, we first prove the following more detailed theorem.

\begin{theorem}\label{thm:reveal_high_t_detailed} For any cost distribution that satisfies Assumptions \ref{assn:diffble_c}, \ref{assn:rhr},  if the cost is bounded above by $\bar{C}$ with $f(\bar{C}) > 0$, then
there exists an instantiation of the principal's task value $b$ and a thresholding feature $X^t$ for which the agent prefers to reveal $X^t$.
Specifically, the agent prefers to reveal $X^t$ if
$b \in \left(\bar{C} + \frac{1 - F(t)}{f(\bar{C})}, t + \frac{F(t)}{f(t)}\right).$ 
\end{theorem}
\begin{proof}
    For thresholding feature $X^t$, we have the scaled conditional distributions:
    \[f_0(p) = \frac{f(p)}{1-F(t)} \Ind(p > t); \quad f_1(p) = \frac{f(p)}{F(p)}\Ind(p \leq t). \]
    From these, we can compute that
    \[F_0(p) = \int_{0}^{p} f_0(c) dc = \frac{F(p) - F(t)}{1-F(t)} \Ind(p > t) ; \quad F_1(p) = \int_{0}^{p} f_1(c) dc = \frac{F(p)}{F(t)}\Ind(p \leq t).\]

    First, we show that for $b > \bar{C} + \frac{1 - F(t)}{f(\bar{C})}$, the optimal price in the high cost setting is $p_0^{*} = \bar{C}$. The principal's value is \[\Pi_0(p) = F_0(p)(v-p) = \frac{F(p) - F(t)}{1-F(t)} (v-p)\Ind(p > t).\] 
    Concavity of $F(p)$ in Assumption \ref{assn:diffble_c} for $p \geq p^{*}$ implies that $\Pi_0(p)$ is also concave for $p \geq p^{*}$. Since the function $p + \frac{F_0(p)}{f_0(p)}$ is monotone increasing (Assumption \ref{assn:rhr}), $\Pi_0(p)$ is either maximized for $p_0^{*}$ satisfying first order condition 
    $b = p_0^{*} + \frac{F_0(p_0^{*})}{f_0(p_0^{*})}$ if such a $p_0^{*}$ exists; or $p_0^{*} = \bar{C}$ if $b$ is sufficiently large. Specifically, $p_0^{*} = \bar{C}$ if $g_0^{-1}(b) > \bar{C}$, where $g_x(p) \coloneq p + \frac{F_x(p)}{f_x(p)}$. The lower bound on $b$ follows directly: $g_0^{-1}(b) > \bar{C} \iff b > \bar{C} + \frac{1 - F(t)}{f(\bar{C})}$.

    Next, we show that for $b < t + \frac{F(t)}{f(t)}$, the price in the low cost setting is equal to price in the concealed setting: $p_1^{*} = p^{*}$. The principal's value is 
    \[\Pi_1(p) = F_1(p)(v-p) = \frac{F(p)}{F(t)}(v-p)\Ind(p \leq t).\]

    Concavity of $F(p)$ in Assumption \ref{assn:diffble_c} at $p^{*}$ implies that $\Pi_1(p)$ is concave near $p^{*}$ as long as $p^{*} < t$. This is sufficient for $\Pi_1(p)$ to be concave near its optimum $p_1^{*}$, since the first order condition for $\Pi_1(p)$ is identical to that of $\Pi(p)$.
    Again since $p + \frac{F_1(p)}{f_1(p)}$ is monotone increasing, $p_1^{*} = g_1^{-1}(b)$ if $g_1^{-1}(b) < t$; otherwise $p_1^{*} = t$. The upper bound on $b$ follows as $g_1^{-1}(b) < t \iff b < t + \frac{F(t)}{f(t)} $. Under this condition, the optimal price $p_1^{*} = g_1^{-1}(b) = g^{-1}(b) = p^{*}$.

    Putting these together, when $b \in \left(\bar{C} + \frac{1 - F(t)}{f(\bar{C})}, t + \frac{F(t)}{f(t)}\right),$ the price in the high cost setting is $p_0^{*} = \bar{C}$, and the price in the low cost setting is $p_1^{*} = p^{*}$. The total revealed agent value is $\Var(p^{*}, \bar{C}) > \Var(p^{*}, p^{*}) \implies \Var(p^{*}, \bar{C}) > \Vah(p^{*})$.

    To complete the proof, we show that there exists $t$ such that the set $\left(\bar{C} + \frac{1 - F(t)}{f(\bar{C})}, t + \frac{F(t)}{f(t)}\right)$ is nonempty. The set is nonempty if
    \begin{equation}\label{eq:thm1_existance_inequality}
        \bar{C} + \frac{1 - F(t)}{f(\bar{C})} < t + \frac{F(t)}{f(t)} \iff \bar{C}-t < -\frac{1}{f(\bar{C})} + \frac{F(t)}{f(\bar{C})} + \frac{F(t)}{f(t)} \iff \bar{C}-t < \frac{1}{f(\bar{C})}\left(\left(1 + \frac{f(\bar{C})}{f(t)}\right) F(t) - 1\right)
    \end{equation}
    First, let $\delta_1 > 0$, and $\delta_1 f(\bar{C}) < 1$, and choose $t_1$ sufficiently large such that $\bar{C}-t_1 < \veps$. Next, choose $\delta_2 > 0$ sufficiently small such that $\delta_1 < \frac{1}{f(\bar{C})}(1-\delta_2)$, and choose $t_2$ sufficiently large such that $\left(\left(1 + \frac{f(\bar{C})}{f(t)}\right)F(t) - 1\right) > 1 - \delta_2$. Finally, choose $t_3 = \max(t_1,t_2)$. Then $t_3$ satisfies \Eqref{eq:thm1_existance_inequality}.
\end{proof}
 
Intuitively, the specific bounds hold for $t$ sufficiently high relative to $b$: for a fixed $b$ not too much higher than $\bar{C}$, the bound $b < t + \frac{F(t)}{f(t)}$ holds for $t$ sufficiently high under Assumption \ref{assn:rhr}. Similarly, the lower bound on $b$ is equivalent to $F(t) > 1 - f(\bar{C}) (b-\bar{C})$, which also holds for $t$ sufficiently high. We summarize this more simply in Theorem \ref{thm:reveal_high_t} below.

\begin{reptheorem}{thm:reveal_high_t}[Agent prefers to reveal for high thresholds] Under Assumptions \ref{assn:diffble_c} and \ref{assn:rhr}, if the cost is bounded above by $\bar{C}$ with $f(\bar{C}) > 0$, and if $b \in \left(\bar{C}, \bar{C} + \frac{1}{f(\bar{C})}\right)$, then there exists a threshold $\underline{t}$ such that for all $t > \underline{t}$, the agent prefers to reveal $X^t$.
\end{reptheorem}
\begin{proof}
    This proof follows directly from Theorem \ref{thm:reveal_high_t_detailed}. Fix $b \in \left(\bar{C}, \bar{C} + \frac{1}{f(\bar{C})}\right)$. Choose $t_1$ sufficiently large that the lower bound holds: $b > \bar{C} + \frac{1 - F(t_1)}{f(\bar{C})} \iff b - \bar{C} > \frac{1 - F(t_1)}{f(\bar{C})}$. This is possible since $b > \bar{C}$ and $f(\bar{C}) > 0$. This lower bound will hold for any $t > t_1$.

    Choose $t_2$ sufficiently large such that $b < t_2 + \frac{F(t_2)}{f(t_2)}$ the upper bound holds: $b < t_2 + \frac{F(t_2)}{f(t_2)}$. This is possible since $b < \max_{t \in [0, \bar{C}]} t + \frac{F(t)}{f(t)}$ (under Assumption \ref{assn:rhr}). By Assumption \ref{assn:rhr}, this upper bound will hold for any $t > t_2$.
    Choose $\underline{t} = \max(t_1, t_2)$.
\end{proof}

\begin{lemma}\label{lem:0_endpoints}
    For $t \in \{0, \bar{C}\}$, $\Delta(t) = 0$.
\end{lemma}
\begin{proof}
    When $t = 0$, $X^t = 0$ with probability 1. Therefore, $F_0(c) = F(c)$, $p_{0,t}^{*} = p^{*}$, and $\theta = 1$, so $\Var^t(p_{1,t}^{*}, p_{0,t}^{*}) = \Vah(p^{*})$ (for any $p_{1,t}^{*}$). A similar argument holds for $t=\bar{C}$.
\end{proof}

\begin{reptheorem}{thm:conceal_t}[Agent prefers to conceal for low thresholds]
Under Assumptions \ref{assn:diffble_c}, \ref{assn:rhr}, and \ref{assn:elasticity}, if $f(0) > 0$, then $\Delta'(0) < 0$.
\end{reptheorem}
\begin{proof}
    For notational convenience, let the superscript $t$ represent all principal and agent values defined in Section \ref{sec:utilities}, but with $X = X^t$:
    \[F^t_0(c) = P(C \leq c | X^t = 0), \;\; F^t_1(c) = P(C \leq c | X^t = 1), \text{etc.}\]

    Let $p_0(t), p_1(t)$ be defined as the principal's optimal prices under revealed $X^t$, \[p_0(t), p_1(t) \in \underset{p_0, p_1}{\arg\max} \Vpr^t(p_0, p_1). \]

    With a bit of abuse of notation, let $\Var(t) \coloneq \Var^t(p_0(t), p_1(t))$. Then $\Delta(t) = \Var(t) - \Vah(p^{*})$. Since $\Vah(p^{*})$ does not depend on $t$, we need only show that $\Var'(0) < 0$.
    Differentiating this, we have
    \begin{align*}
        \Var'(0) &= \frac{\partial}{\partial t} \left( V_1^t(p_1(t)) F(t) \right) \Big|_{t=0} + \frac{\partial}{\partial t}  \left(V_0^t(p_0(t)) (1 - F(t))\right)\Big|_{t=0}
    \end{align*}
    The first term is 0:
    \begin{align*}
        \frac{\partial}{\partial t} \left( V_1^t(p_1(t)) F(t) \right) \Big|_{t=0} = F(0) \frac{\partial}{\partial t} V_1^t(p_1(t))\Big|_{t=0} + f(0)V_1^t(p_1(t))\big|_{t=0}  = 0
    \end{align*}
    Note that $\lim_{t \to 0} p_1(t) = 0$, which implies that $V_1^t(p_1(t))\big|_{t=0} = 0$. Thus, 
    \begin{equation}\label{eq:thm2_term0}
        \Var'(0) = \frac{\partial}{\partial t}  \left(V_0^t(p_0(t)) (1 - F(t))\right)\Big|_{t=0} = (1-F(0)) \frac{\partial}{\partial t} V_0^t(p_0(t))\Big|_{t=0} - f(0)V_0^t(p_0(t))\Big|_{t=0} = \frac{\partial}{\partial t} V_0^t(p_0(t))\Big|_{t=0} - f(0)\Vah(p^{*}) 
    \end{equation}
    We solve for the above using the chain rule, substituting $p = p_0(t)$:
    \begin{equation}\label{eq:thm2_term4}
        \frac{\partial}{\partial t} V_0^t(p_0(t))\Big|_{t=0} = \frac{\partial}{\partial p} V_0^t(p)\Big|_{t=0} p_0'(0) + \frac{\partial}{\partial t} V_0^t(p)\Big|_{t=0}
    \end{equation}

    Solving for each of the terms in this expression:

    \begin{equation}\label{eq:thm2_term1}
        \frac{\partial}{\partial p} V_0^t(p)\Big|_{t=0} = F_0^t(p)\big|_{t=0} = F(p^{*})
    \end{equation}
    
    \begin{align*}
        \frac{\partial}{\partial t} V_0^t(p)\Big|_{t=0} &= p^{*} \frac{\partial}{\partial t} F_0^t(p)\Big|_{t=0} - \frac{\partial}{\partial t} \int_{0}^p c f_0^t(c) dc \Big|_{t=0} \\
        \frac{\partial}{\partial t} F_0^t(p)\Big|_{t=0} &= \frac{\partial}{\partial t} \frac{F(p) - F(t)}{1- F(t)} \Big|_{t=0} = \frac{f(t)(F(p) - 1)}{(1 - F(t))^2}\Big|_{t=0} = f(0)(F(p^{*}) - 1) \\
        \frac{\partial}{\partial t} \int_{0}^p c f_0^t(c) dc \Big|_{t=0} &= E[\Ind(C < p^{*}) C]f(0)
    \end{align*}
    \begin{equation}\label{eq:thm2_term2}
        \implies \frac{\partial}{\partial t} V_0^t(p)\Big|_{t=0} = f(0) \left(p^{*} (F(p^{*}) - 1) - E[\Ind(C < p^{*}) C]\right) = f(0) \left(V(p^{*}) - p^{*}\right)
    \end{equation}

    Finally, to solve for $p_0'(t)$, we either have $p_0'(t) = 0$ for $b > > C$, in which case the proof is complete and it immediately follows that $\Delta'(0) < 0$; or, we suppose $p_0(t)$ satisfies the first order condition: $p_0(t) + \frac{F(p_0(t)) - F(t)}{f(p_0(t))} = v$. Differentiating on both sides of this first order condition, we have
    \begin{equation}\label{eq:thm2_term3}
        p_0'(t) = \frac{f(t) f(p_0(t))}{2f(p_0(t))^2 - F(p_0(t))f'(p_0(t)) + F(t)f'(p_0(t))}
        \implies p_0'(0) = \frac{f(0)}{2f(p^{*}) - \frac{F(p^{*})}{f(p^{*})}f'(p^{*})}
    \end{equation}

    We now combine Equations (\ref{eq:thm2_term1}), (\ref{eq:thm2_term2}), and (\ref{eq:thm2_term3}) with \Eqref{eq:thm2_term4} to get that

    \[\frac{\partial}{\partial t} V_0^t(p_0(t))\Big|_{t=0} = f(0) \left( \frac{F(p^{*})}{2f(p^{*}) - \frac{F(p^{*})f'(p^{*})}{f(p^{*})}} - \left(\Vah(p^{*}) - p^{*}\right) \right). \]

    Putting this all together with \Eqref{eq:thm2_term0}, $\Var'(0) < 0$ if and only if
    \[ f(0) \left( \frac{F(p^{*})}{2f(p^{*}) - \frac{F(p^{*})f'(p^{*})}{f(p^{*})}} - \left(\Vah(p^{*}) - p^{*}\right) - \Vah(p^{*}) \right) < 0 \]
    \[\iff 1 - \frac{F(p^{*})}{p^{*} f(p^{*})} + \frac{\partial}{\partial p} \frac{F(p)}{f(p)}\Big|_{p=p^{*}} > 0\]

This holds by Assumptions \ref{assn:rhr} and \ref{assn:elasticity}. Therefore, $\Delta'(0) < 0$.

\end{proof}

\subsection{Proofs from Section \ref{sec:total_welfare_revelation}}

\begin{replemma}{lem:total_welfare_necessary}
Total welfare increases under revelation ($\Vjr(\rho^{*}) - \Vjh(p^{*}) $) only if task completion quantity increases under revelation,
\[F(p^{*}) < \E[\Ind(C < \rho^{*}(X)].\]
\end{replemma}

\begin{proof}
Let $\underline{\mathcal{X}} \subseteq \mathcal{X}$ be the set of $x$ values for which $\rho(x) < p^{*}$, and let $\overline{\gX} = \gX \setminus \underline{\mathcal{X}}$.
\begin{align*}
    \Vjr(\rho^{*}) - \Vjh(p^{*}) &= \E[\Ind(C < \rho^{*}(X))(b-C)] - \E[\Ind(C < p^{*})(b-C)] \\
    &= \E[\Ind(C < \rho^{*}(X))(b-C) | X \in \underline{\mathcal{X}}]\sP(X \in \underline{\mathcal{X}}) + \E[\Ind(C < \rho^{*}(X))(b-C) | X \in \overline{\gX}]\sP(X \in  \overline{\gX}) - \E[\Ind(C < p^{*})(b-C)] \\
    &= \E[(\Ind(C < \rho^{*}(X)) - \Ind(C < p^{*}))(b-C) | X \in \overline{\gX}]\sP(X \in  \overline{\gX}) \\
    &\quad - \E[(\Ind(C < p^{*}) - \Ind(C < \rho^{*}(X)))(b-C) | X \in \underline{\mathcal{X}}]\sP(X \in \underline{\mathcal{X}}) \\
    &\leq \E[(\Ind(C < \rho^{*}(X)) - \Ind(C < p^{*}))(b-p^{*}) | X \in \overline{\gX}]\sP(X \in  \overline{\gX}) \\
    &\quad- \E[(\Ind(C < p^{*}) - \Ind(C < \rho^{*}(X)))(b-p^{*}) | X \in \underline{\mathcal{X}}]\sP(X \in \underline{\mathcal{X}})
\end{align*}
\[\implies \Vjr(\rho^{*}) - \Vjh(p^{*}) \leq (b-p^{*}) (\E[\Ind(C < \rho^{*}(X)] - \E[\Ind(C < p^{*})])\]
Since $(b-p^{*}) > 0$, if $(\E[\Ind(C < \rho^{*}(X)] - \E[\Ind(C < p^{*})]) \leq 0$, then $\Vjr(\rho^{*}) - \Vjh(p^{*}) \leq 0$.
\end{proof}

\section{Proofs from Section \ref{sec:garbling}}

Here we give proofs for results for the garbling model presented in Section \ref{sec:garbling}.

\subsection{Proofs from Section \ref{sec:garbling_privacy}}


\begin{replemma}{lem:ldp_garbling}[Privacy of garbling mechanism]
    For $\theta \in (0,1)$, for all $\veps \leq O(\theta)$, the garbling mechanism is an $O(\veps)$-local randomizer with respect to $X$. 
\end{replemma}
\begin{proof}
    Suppose without loss of generality that $\theta < \frac{1}{2}$ (i.e., $X = 0$ with higher probability). 
    \begin{align*}
        \max_{y, x \in \{0,1\}}\frac{\sP(Y = y | X = x)}{\sP(Y = y | X = 1-x)} \leq \frac{\sP(Y = 1 | X = 1)}{\sP(Y = 1 | X = 0)} = \frac{\theta (1-\veps) + \veps}{\theta (1-\veps)} \leq e^{O(\veps)}
    \end{align*}
    The last inequality holds for $\veps$ sufficiently small: $\veps \leq \frac{\theta}{\theta + 1} \implies \frac{\veps}{\theta(1-\veps)} \leq 1$.

    A pricing mechanism that accesses $X$ only through this garbling mechanism is therefore locally differentially private with a corresponding privacy parameter.
\end{proof}

\subsection{Proofs from Section \ref{sec:garbling_total_welfare}}

\begin{replemma}{lem:total_welfare_zero_deriv}[Reducing garbling initially increases total welfare]
    Relative to full concealment with $\veps = 0$, revealing $Y$ with some garbled noise initially does not decrease total welfare: $\Vjeps'(0) \geq 0$. The inequality is strict if $\Vpeps'(0) > 0$, which is true under strict concavity of $\Pi_0(p), \Pi_1(p)$ (Assumption \ref{assn:concave}) and the MLRP (Assumption \ref{assn:mlrp}).
\end{replemma}
\begin{proof}
     \[\Vjeps'(\veps) = \Vaeps'(\veps) +  \Vpeps'(\veps)\]
    $\Vaeps'(0)= 0$, so $\Vjeps'(0) \geq 0$. The strict inequality comes from applying Lemma \ref{lem:garbling_principal_strict} that $\Vpeps'(0) > 0$.
\end{proof}

\begin{replemma}{lem:total_welfare_opt_garbling}[Optimal garbling increases welfare over concealment]
    Let $\veps^{*} \in \arg\max_{\veps \in [0,1]} \Vaeps(\veps)$. Then $\Vjeps(\veps^{*}) \geq \Vjeps(0)$. 
\end{replemma}
\begin{proof}
    \[\Vjeps(\veps^{*}) = \Vaeps(\veps^{*}) + \Vpeps(\veps^{*})\]
    Lemma \ref{lem:garbling_principal} implies $\Vpeps(\veps^{*}) \geq \Vpeps(0)$. 
    By optimality of $\veps^{*}$, $\Vaeps(\veps^{*}) \geq \Vaeps(0)$.
\end{proof}

\begin{replemma}{lem:garbling_total_welfare_derivative}[More information increases total welfare relative to agent optimal garbling] $\Vjeps'(\veps) \geq \Vaeps'(\veps)$ for all $\veps \in [0,1]$. The inequality is strict if $\Vpeps'(0) > 0$, which is true under strict concavity of $\Pi_0(p), \Pi_1(p)$ (Assumption \ref{assn:concave}) and the MLRP (Assumption \ref{assn:mlrp}).
\end{replemma}
\begin{proof}
    Lemma \ref{lem:garbling_principal} implies that $\Vpeps'(\veps) > 0$, which implies that $\Vjeps'(\veps) \geq \Vaeps'(\veps)$. The inequality is strict under the conditions of Lemma \ref{lem:garbling_principal_strict}.
\end{proof}

\section{Additional Results on Welfare Effects of Information Revelation}\label{app:binary_additional_results}

\subsection{Uniform Simulation for Thresholding Features}\label{app:uniform_sum}

To illustrate Theorems \ref{thm:reveal_high_t} and \ref{thm:conceal_t} on thresholding features, let $C\sim \text{Unif}(0,1)$. Let $X^t$ be a thresholding feature for $C$, where $X^t = 1$ if $C \leq t$, and $0$ otherwise. Figure \ref{fig:uniform_contour} shows the agent's utility difference between revealing and concealing $X^t$ for all pairs of $b \in (0,2), t \in (0,1)$.

\begin{figure}[!ht]
    \centering
    \begin{tblr}{p{0.1cm}c}
        &\small{$\Var(p_0^{*}, p_1^{*}) - \Vah(p^{*})$}\\
        \SetCell[r=2]{}\small{$b$} & \includegraphics[trim={0.7cm 0.6cm 0 1.1cm},clip,width=0.5\columnwidth]{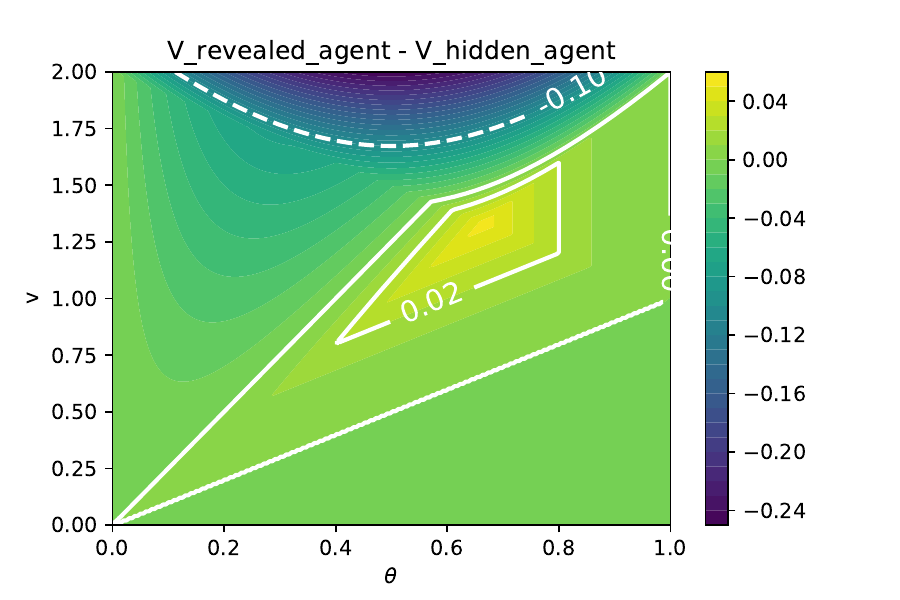}\\
        & \small{$t$} \\
    \end{tblr}    
    \caption{Difference between agent's utility in the revealed setting and concealed settings when $C \sim \text{Unif}(0,1)$, and $X = X^t$ is a thresholding feature. For a given pair $(b,t)$, a positive utility difference means the agent prefers to reveal $X^t$, and a negative utility difference means the agent prefers to conceal $X^t$. The agent always prefers to conceal for $t$ sufficiently close to $0$, which matches Theorem \ref{thm:conceal_t}. For $b \in (1,2)$, the agent prefers to reveal for $t > \frac{b}{2}$, which matches Theorem \ref{thm:reveal_high_t}. At $b = 2$, the agent always prefers to conceal. For $b < 1$, the agent actually prefers to conceal for sufficiently high $t$.}
    \label{fig:uniform_contour}
\end{figure}

\subsection{Agent's Revelation Incentives: General Conditions}\label{app:agent_binary_general}

We give more general conditions the agent to prefer to reveal or conceal for $X$ beyond thresholding features. 

We first show that the agent prefers the hidden information setting if the cost conditional on $X$ is close to zero for some values of $X$, and still not very high for other values of $X$. Theorem \ref{prop:agent_binary_hidden} shows a general hazard rate condition that leads to this conclusion.

Next, we give sufficient conditions for the agent to conceal and reveal under general $X$, applying identities related to the analysis done by \citet{aguirre2010monopoly}.

\subsubsection{Concealment condition with one zero-cost type}\label{sec:concealment_zero_cost}

We present a sufficient condition for the agent to prefer to conceal $X$ when one of the agent types is anchored at zero. That is, $X=1$ implies that the agent incurs zero cost. Proposition \ref{prop:agent_binary_hidden} gives a sufficient condition on $F_0$ for the agent to prefer for the environmental variable $X$ to remain concealed.

\begin{proposition}[Sufficient concealment condition with zero-cost type]\label{prop:agent_binary_hidden}
Suppose $F_0$ is a concave and continuously differentiable CDF. Suppose $C | X = 1$ takes value $0$ with probability $1$. Suppose the ratio $\frac{F_0(p)}{f_0(p)}$ is strictly monotone increasing for $p > 0$.  Then $\Vah(p^{*}) > \Var(p_0^{*}, p_1^{*})$ if
\begin{equation}\label{eq:agent_binary_hidden}
    \theta > (1-\theta) \frac{1}{\eta((1-\theta)p_0^{*})} - \frac{1}{\eta_0\left(p_0^{*}\right)},
\end{equation}
where $\eta(p) = \frac{ p(1-\theta)f_0(p)}{(1-\theta)F_0(p) + \theta}$ and $\eta_0(p) = \frac{pf_0(p)}{F_0(p)}$ are the respective price elasticities for task completion quantity for the mixture distribution $C$ and the conditional distribution $F_0$. 
\end{proposition}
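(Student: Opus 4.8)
The plan is to write both sides of the inequality $\Vah(p^{*}) > \Var(p_0^{*}, p_1^{*})$ in terms of the conditional distribution $F_0$ and the parameter $\theta$, exploiting the fact that $C\mid X=1$ is degenerate at $0$. First I would record the structure of the two contracts. In the revealed setting, since $C\mid X=1\equiv 0$, we have $\Pi_1(p_1)=F_1(p_1)(b-p_1)=b-p_1$ for $p_1\ge 0$, so the principal optimally sets $p_1^{*}=0$, giving the agent zero utility on the $X=1$ branch: $V_1(p_1^{*})=0$. Hence $\Var(p_0^{*},p_1^{*})=(1-\theta)V_0(p_0^{*})$ where $p_0^{*}=\arg\max_{p}F_0(p)(b-p)$ is the standard monopoly-type transfer against $F_0$. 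In the concealed setting, the mixture CDF is $F(p)=(1-\theta)F_0(p)+\theta$ for $p>0$ (with an atom handled at $p=0$), so $p^{*}=\arg\max_p\big[(1-\theta)F_0(p)+\theta\big](b-p)$, and $\Vah(p^{*})=\E[(p^{*}-C)\Ind(C<p^{*})]=(1-\theta)V_0(p^{*})$ because the $X=1$ type contributes $(p^{*}-0)\cdot\theta=\theta p^{*}$ — wait, more carefully, $\Vah(p^{*})=(1-\theta)V_0(p^{*})+\theta p^{*}$. So after dividing by $(1-\theta)$ the target inequality becomes
\begin{equation*}
V_0(p^{*})+\tfrac{\theta}{1-\theta}p^{*}>V_0(p_0^{*}).
\end{equation*}
Since $p^{*}\le p_0^{*}$ (adding the atom at $0$ shifts the optimal transfer down — this should follow from the monotone-hazard/monotone $F_0/f_0$ hypothesis and a single-crossing argument, and I would verify it as a preliminary lemma) and $V_0$ is increasing, the term $V_0(p^{*})-V_0(p_0^{*})$ is a nonpositive quantity that we must dominate by the information-rent gain $\frac{\theta}{1-\theta}p^{*}$.

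**Bounding the utility loss via first-order conditions.**
The key step is to control $V_0(p_0^{*})-V_0(p^{*})$ and $p^{*}$ using the first-order conditions characterizing $p^{*}$ and $p_0^{*}$. Differentiating, $p_0^{*}$ solves $f_0(p)(b-p)=F_0(p)$, i.e. $b-p_0^{*}=F_0(p_0^{*})/f_0(p_0^{*})$, and $p^{*}$ solves $(1-\theta)f_0(p)(b-p)=(1-\theta)F_0(p)+\theta$, i.e. $b-p^{*}=\frac{(1-\theta)F_0(p^{*})+\theta}{(1-\theta)f_0(p^{*})}$. The elasticities $\eta_0(p)=pf_0(p)/F_0(p)$ and $\eta(p)$ defined in the statement are exactly the reciprocals appearing when one multiplies these FOCs through by $p/((b-p)\cdot(\text{denominator}))$; concretely $b-p_0^{*}=p_0^{*}/\eta_0(p_0^{*})$ and $b-p^{*}=p^{*}/\eta(p^{*})$ — but note the claimed inequality evaluates $\eta$ at $(1-\theta)p_0^{*}$, not at $p^{*}$, so I expect the proof uses concavity of $F_0$ to compare $p^{*}$ against the explicit point $(1-\theta)p_0^{*}$: concavity gives $F_0((1-\theta)p_0^{*})\ge(1-\theta)F_0(p_0^{*})$ and $f_0$ decreasing, which should yield $p^{*}\ge(1-\theta)p_0^{*}$ (or the reverse — I'd pin down the direction by testing the sign of the objective's derivative at $(1-\theta)p_0^{*}$). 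Then I would use $V_0(p_0^{*})-V_0(p^{*})=\int_{p^{*}}^{p_0^{*}}V_0'(t)\,dt=\int_{p^{*}}^{p_0^{*}}F_0(t)\,dt\le F_0(p_0^{*})(p_0^{*}-p^{*})$, and bound $p_0^{*}-p^{*}$ in terms of the elasticity gap via the two FOCs, finally translating everything into the stated elasticity inequality by substituting $b-p_0^{*}$ and $b-p^{*}$.

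**Assembling the inequality.**
Putting the pieces together: it suffices to show $\frac{\theta}{1-\theta}p^{*}\ge F_0(p_0^{*})(p_0^{*}-p^{*})$, and then to express $p_0^{*}-p^{*}$ via $(b-p^{*})-(b-p_0^{*})=\frac{p^{*}}{\eta(p^{*})}-\frac{p_0^{*}}{\eta_0(p_0^{*})}$. Using the concavity-based bound $p^{*}\ge(1-\theta)p_0^{*}$ to replace $\eta(p^{*})$ by $\eta((1-\theta)p_0^{*})$ (monotonicity of the relevant quantity in $p$, guaranteed by the strict monotonicity hypothesis on $F_0/f_0$ and on the elasticity) and to bound $p^{*}/p_0^{*}$ from below by $(1-\theta)$, I would arrive at
\begin{equation*}
\theta\ \ge\ (1-\theta)\frac{1}{\eta((1-\theta)p_0^{*})}-\frac{1}{\eta_0(p_0^{*})},
\end{equation*}
which is exactly \eqref{eq:agent_binary_hidden}; the strict version of the hypothesis upgrades $\ge$ to $>$ in the conclusion. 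I would close by noting the degenerate edge cases ($p^{*}=0$, or $F_0$ with unbounded support) are handled by the continuity and monotone-ratio assumptions.

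**Main obstacle.**
The hardest part will be rigorously locating $p^{*}$ relative to the explicit reference point $(1-\theta)p_0^{*}$ and showing the intermediate elasticity inequalities go the right way — this is where concavity of $F_0$, the monotone ratio $F_0/f_0$, and monotonicity of the elasticities all get used simultaneously, and sign errors are easy. A secondary obstacle is making the chain of inequalities tight enough that nothing is lost beyond what the hypothesis allows; if the naive bound $\int_{p^{*}}^{p_0^{*}}F_0(t)\,dt\le F_0(p_0^{*})(p_0^{*}-p^{*})$ turns out to be too lossy, I would instead use $\int_{p^{*}}^{p_0^{*}}F_0(t)\,dt\le F_0(p^{*})(p_0^{*}-p^{*})+\tfrac12 f_0(p^{*})(p_0^{*}-p^{*})^2$-type refinements or integrate the FOC relation directly, but I expect the crude bound suffices given the elasticity formulation of the conclusion.
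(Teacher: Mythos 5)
Your decomposition is essentially the paper's: you identify $p_1^{*}=0$, write $\Vah(p^{*})=(1-\theta)V_0(p^{*})+\theta p^{*}$ and $\Var(p_0^{*},p_1^{*})=(1-\theta)V_0(p_0^{*})$, reduce to bounding $V_0(p_0^{*})-V_0(p^{*})$ via $V_0'=F_0\leq 1$, and recognize $(1-\theta)p_0^{*}$ as the point at which the elasticity condition must be read off. But there is a real gap at the pivot of the argument.

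The paper's key move, which your sketch does not find, is the pure algebraic identity
\begin{equation*}
(1-\theta)(p_0^{*}-p^{*})<\theta p^{*}\ \Longleftrightarrow\ (1-\theta)p_0^{*}<p^{*},
\end{equation*}
combined with the observation that the hypotheses (concavity of $F_0$ and strict monotonicity of $F_0/f_0$) make the function $p\mapsto p+\frac{(1-\theta)F_0(p)+\theta}{(1-\theta)f_0(p)}$ strictly increasing; this turns $(1-\theta)p_0^{*}<p^{*}$ into the first-order-condition comparison at the explicit point $(1-\theta)p_0^{*}$, which after substituting $b=p_0^{*}+F_0(p_0^{*})/f_0(p_0^{*})$ is exactly~\eqref{eq:agent_binary_hidden}. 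Your proposal instead suggests $p^{*}\geq(1-\theta)p_0^{*}$ might follow from concavity of $F_0$ alone. That cannot be right: $(1-\theta)p_0^{*}<p^{*}$ is precisely what condition~\eqref{eq:agent_binary_hidden} encodes, not a free fact. If concavity alone gave it, the proposition would have no content. You do hedge by proposing to test the sign of the objective's derivative at $(1-\theta)p_0^{*}$, which is exactly the right instinct, but you treat it as a preliminary step rather than realizing that the sign of that derivative \emph{is} the condition. A secondary issue is that your tighter bound $V_0(p_0^{*})-V_0(p^{*})\leq F_0(p_0^{*})(p_0^{*}-p^{*})$ does not cooperate: after dividing by $p_0^{*}$ and invoking the FOCs you are left with a term in $\eta(p^{*})$ that must be compared to $\eta((1-\theta)p_0^{*})$, which needs yet more monotonicity bookkeeping and is never assembled into~\eqref{eq:agent_binary_hidden}. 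The paper deliberately uses the looser bound $\Delta V_0\leq p_0^{*}-p^{*}$ precisely because it makes the $(1-\theta)$ factors cancel into the clean equivalence above. So: same skeleton and the right intuition about where the elasticity comes from, but the central reduction that closes the argument is missing.
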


Intuitively, the elasticity $\eta$ captures the sensitivity of task completion to the offered price. Thus, the inequality in \eqref{eq:agent_binary_hidden} corresponds to a scenario when the sensitivity of the task completion to price when $X=0$ does not differ too strongly from that of the concealed setting. For example, this arises when $F_0$ is close to the constant function 1. We give another example using the exponential distribution in Section \ref{sec:weibull} below, where \eqref{eq:agent_binary_hidden} holds if the mean of $F_0$ is low enough. In summary, the agent prefers concealment if the higher cost type still has relatively low cost.  

\subsubsection{Concealment and revelation conditions under a decreasing ratio assumption}\label{app:agent_binary_acv} To give additional sufficient conditions for concealment and revelation for variables more general than thresholding features, we apply an analysis technique similar to that of \citet{aguirre2010monopoly}, who analyzed the effects of third degree monopoly price discrimination on total welfare.

Suppose the principal, on knowing $X$, is constrained to choose transfers $p_0, p_1$ subject to the constraint that $p_0 - p_1 < r$ for some $r \geq 0$.
Let $p_0(r), p_1(r)$ denote the principal's optimal transfers under this constraint: 
\begin{align}\begin{split}
    p_0(r), p_1(r) \in &\arg\max_{p_0, p_1}\;\; \Vpr(p_0, p_1) \\
    &\text{s.t.} \;\; p_0 - p_1 \geq r.
\end{split}\label{eq:restricted_price_discrimination}
\end{align}
For notational convenience, let $\Vaconst(r) \coloneqq \Var(p_0(r), p_1(r))$. In a similar structure to \citet{aguirre2010monopoly}, the results in this section come from considering the ``marginal effect of relaxing the constraint'' on the agent's value.

Under the following closely analogous assumptions to those invoked by \citet{aguirre2010monopoly}, we derive properties of $\Vaconst(r)$.

\begin{assumption}[Concave principal utility]\label{assn:concave}
    The principal's utility in each realized environment is strictly concave: $\Pi_0''(p) < 0$, $\Pi_1''(p) < 0$.
\end{assumption}

\begin{assumption}[Decreasing ratio condition (DRC)]\label{assn:drc} The ratios $\frac{V_0'(p)}{\Pi_0''(p)}$ and $\frac{V_1'(p)}{\Pi_1''(p)}$ are both decreasing in $p$.
\end{assumption}

Assumption \ref{assn:drc} is analogous to the ``increasing ratio condition'' assumption from \citet{aguirre2010monopoly}, which instead has the derivative of total welfare in the numerator. Our analysis naturally extends this to focus on agent utility. Assumption \ref{assn:drc} holds in almost the same set of conditions as the assumption on total welfare from \citet{aguirre2010monopoly}, and we discuss the subtleties of the differences between these assumptions in  Appendix \ref{sec:drc_irc_comparison}.

\begin{lemma}\label{lem:v_convex}
    Under Assumptions \ref{assn:concave} and \ref{assn:drc}, $\Vaconst(r)$ is strictly quasi-convex for $r \in [0, p_0^{*} - p_1^{*}]$. That is, if there exists $\hat{r} \in [0, p_0^{*} - p_1^{*}]$ such that $\Vaconst(\hat{r}) = 0$, then $\Vaconst''(\hat{r}) > 0$. 
\end{lemma}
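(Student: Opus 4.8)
The plan is to adapt the comparative-statics argument of \citet{aguirre2010monopoly} from total welfare to the agent's value $\Vaconst$. First I would note that for $r \in [0, p_0^{*}-p_1^{*})$ the constraint in \eqref{eq:restricted_price_discrimination} binds, since the unconstrained maximizer $(p_0^{*},p_1^{*})$ has gap larger than $r$; because the objective $(1-\theta)\Pi_0(p_0)+\theta\Pi_1(p_1)$ is jointly strictly concave (Assumption~\ref{assn:concave}), the constrained maximizer is the unique point of $\{p_0-p_1=r\}$ at which the first-order condition holds. Substituting $p_0 = p_1+r$, this means $p_1(r)$ is the unique root of
\begin{equation*}
\Phi(p_1,r) \coloneqq (1-\theta)\Pi_0'(p_1+r) + \theta\Pi_1'(p_1) = 0, \qquad p_0(r) = p_1(r)+r.
\end{equation*}
Since $\partial_{p_1}\Phi = (1-\theta)\Pi_0''(p_0(r)) + \theta\Pi_1''(p_1(r)) < 0$ by Assumption~\ref{assn:concave}, the implicit function theorem makes $p_0(\cdot)$ and $p_1(\cdot)$ differentiable, hence so is $\Vaconst(r) = (1-\theta)V_0(p_0(r)) + \theta V_1(p_1(r))$ (using the smoothness of $\Pi_0,\Pi_1$ that is already implicit in the paper).

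Next I would carry out the comparative statics. Differentiating $\Phi(p_1(r),r)=0$ in $r$ and writing $S(r) \coloneqq (1-\theta)\Pi_0''(p_0(r)) + \theta\Pi_1''(p_1(r)) < 0$ gives
\begin{equation*}
p_1'(r) = -\frac{(1-\theta)\Pi_0''(p_0(r))}{S(r)} < 0, \qquad p_0'(r) = p_1'(r)+1 = \frac{\theta\Pi_1''(p_1(r))}{S(r)} > 0,
\end{equation*}
so relaxing the constraint spreads the two transfers apart. Substituting into $\Vaconst'(r) = (1-\theta)V_0'(p_0(r))\,p_0'(r) + \theta V_1'(p_1(r))\,p_1'(r)$ and factoring out $\Pi_0''(p_0(r))\Pi_1''(p_1(r))$, the derivative collapses to
\begin{equation*}
\Vaconst'(r) = \frac{\theta(1-\theta)\,\Pi_0''(p_0(r))\,\Pi_1''(p_1(r))}{S(r)}\left(\frac{V_0'(p_0(r))}{\Pi_0''(p_0(r))} - \frac{V_1'(p_1(r))}{\Pi_1''(p_1(r))}\right).
\end{equation*}
Call the prefactor $-\kappa(r)$ and the bracketed term $D(r)$, so $\Vaconst'(r) = -\kappa(r)\,D(r)$; by Assumption~\ref{assn:concave} we have $\kappa(r) > 0$ (a positive numerator over the negative $S(r)$, then negated).

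Then I would bring in the Decreasing Ratio Condition. Writing $h_x(p) \coloneqq V_x'(p)/\Pi_x''(p)$, Assumption~\ref{assn:drc} says $h_0$ and $h_1$ are strictly decreasing; together with $p_0'(r) > 0 > p_1'(r)$ from the previous step, this makes $r \mapsto h_0(p_0(r))$ strictly decreasing and $r \mapsto h_1(p_1(r))$ strictly increasing, so $D(r) = h_0(p_0(r)) - h_1(p_1(r))$ is strictly decreasing on $[0,p_0^{*}-p_1^{*}]$. Since $\kappa > 0$ throughout, $\Vaconst'(r)$ has the sign of $-D(r)$: it is negative while $D>0$ and positive while $D<0$, and as $D$ is strictly monotone it changes sign at most once, from $-$ to $+$ — which is precisely strict quasi-convexity of $\Vaconst$ on $[0,p_0^{*}-p_1^{*}]$. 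Finally, at any $\hat r$ with $\Vaconst'(\hat r)=0$ we have $D(\hat r)=0$, so differentiating $\Vaconst'=-\kappa D$ once more the $\kappa' D$ term drops out and $\Vaconst''(\hat r) = -\kappa(\hat r)\,D'(\hat r) > 0$, because $\kappa(\hat r) > 0$ and $D'(\hat r) < 0$.

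I expect the routine parts to be the sign bookkeeping in the comparative statics and the algebraic collapse of $\Vaconst'$. The genuine points of care are: (i) interiority of $(p_0(r),p_1(r))$, so that the first-order characterization is legitimate (one either imposes it or argues the optimal transfers stay bounded away from $0$); (ii) enough smoothness of $\Pi_0,\Pi_1$ for $\Vaconst$ to be twice differentiable at $\hat r$; and (iii) confirming that ``decreasing'' in Assumption~\ref{assn:drc} is meant in the strict sense, which is what upgrades $\Vaconst''(\hat r)\ge 0$ to the stated strict inequality.
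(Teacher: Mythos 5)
Your proof is correct and follows essentially the same route as the paper's: reduce the constrained problem to a one-variable first-order condition in $p_1$ with $p_0=p_1+r$, compute $p_0'(r)>0>p_1'(r)$, collapse $\Vaconst'(r)$ into a positive prefactor times a difference of the DRC ratios, and use the monotonicity of those ratios to sign $\Vaconst''$ at any stationary point. The only (welcome) cosmetic differences are that you retain the $\theta,1{-}\theta$ weights in the first-order condition, which the paper's proof drops, and that you add the direct single-sign-change observation for $D(r)$, which makes the global quasi-convexity conclusion explicit rather than leaving it implicit in the stationary-point second-derivative test.
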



The strict quasi-convexity of the agent's utility in $r$ makes it possible to derive sufficient conditions for revelation and concealment by differentiating $\Vaconst$ and evaluating the sign of the derivative at extreme values of $r$. Adapting this machinery from \citet{aguirre2010monopoly}, but focusing on the agent's value instead of total welfare, we give such sufficient conditions for the agent to prefer concealing or revealing $X$ below.

\begin{proposition}[Sufficient concealment condition under DRC]\label{prop:agent_acv_hidden}
Under Assumptions \ref{assn:concave} and \ref{assn:drc}, $\Vah(p^{*}) > \Var(p_0^{*}, p_1^{*})$ if 
\begin{equation}\label{eq:agent_acv_hidden}
    \frac{(1-\theta)(b-p_0^{*})}{2 - \sigma_0(p_0^{*})} <  \frac{\theta (b-p_1^{*})}{2 - \sigma_1(p_1^{*})},
\end{equation}
where $\sigma_x(p) = \frac{F_x(p)f'_x(p)}{f_x^2(p)}$ is the curvature of the inverse of the task completion quantity function $F_x(p)$.
\end{proposition}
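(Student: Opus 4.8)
The plan is to run, specialized to the agent's utility, the restricted-price-discrimination argument of \citet{aguirre2010monopoly} that the surrounding text has set up. I would work with the constrained program \eqref{eq:restricted_price_discrimination} and the function $\Vaconst(r)=\Var(p_0(r),p_1(r))$ on $[0,\,p_0^{*}-p_1^{*}]$, and begin by pinning down its two endpoints. At $r=p_0^{*}-p_1^{*}$ the gap constraint is slack, so $(p_0(r),p_1(r))=(p_0^{*},p_1^{*})$ and $\Vaconst(p_0^{*}-p_1^{*})=\Var(p_0^{*},p_1^{*})$. At $r=0$ the objective $\Vpr$ is separable and strictly concave in each coordinate (Assumption \ref{assn:concave}) with unconstrained maximizers $p_0^{*}>p_1^{*}$, so the constraint binds and forces $p_0(0)=p_1(0)$; since $(1-\theta)\Pi_0(p)+\theta\Pi_1(p)=F(p)(b-p)=\Vph(p)$, this common value equals $p^{*}$, whence $\Vaconst(0)=(1-\theta)V_0(p^{*})+\theta V_1(p^{*})=\Vah(p^{*})$. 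Thus the claim $\Vah(p^{*})>\Var(p_0^{*},p_1^{*})$ is exactly $\Vaconst(0)>\Vaconst(p_0^{*}-p_1^{*})$.

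By Lemma \ref{lem:v_convex}, $\Vaconst$ is strictly quasi-convex on $[0,\,p_0^{*}-p_1^{*}]$. A differentiable strictly quasi-convex function on a closed interval is strictly decreasing and then strictly increasing (it has at most one critical point, which is its unique minimizer), so if its one-sided derivative at the right endpoint is strictly negative it has no critical point in the interval and is strictly decreasing throughout, giving $\Vaconst(0)>\Vaconst(p_0^{*}-p_1^{*})$. It therefore suffices to show $\Vaconst'\bigl((p_0^{*}-p_1^{*})^{-}\bigr)<0$.

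This one-sided derivative is the crux. On the regime where the constraint binds, $p_0(r)=p_1(r)+r$ and $p_1(r)$ solves the first-order condition $(1-\theta)\Pi_0'(p_1+r)+\theta\Pi_1'(p_1)=0$; implicit differentiation (the second-order term is nonzero by Assumption \ref{assn:concave}) gives $p_1'(r)$ and $p_0'(r)=p_1'(r)+1$ as expressions involving $\Pi_0''$ and $\Pi_1''$. Substituting into $\Vaconst'(r)=(1-\theta)V_0'(p_0(r))\,p_0'(r)+\theta V_1'(p_1(r))\,p_1'(r)$ and evaluating at $r=p_0^{*}-p_1^{*}$ --- where $p_0(r)=p_0^{*}$, $p_1(r)=p_1^{*}$ and, by unconstrained optimality, $\Pi_0'(p_0^{*})=\Pi_1'(p_1^{*})=0$ --- collapses the whole expression to a multiple of $\frac{V_0'(p_0^{*})}{\Pi_0''(p_0^{*})}-\frac{V_1'(p_1^{*})}{\Pi_1''(p_1^{*})}$ whose sign is fixed by Assumption \ref{assn:concave}; this is the very ordering controlled by the Decreasing Ratio Condition (Assumption \ref{assn:drc}). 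Finally I would rewrite the two ratios at the optimum: $V_x'(p)=F_x(p)$, and the first-order condition $\Pi_x'(p_x^{*})=0$ gives $f_x(p_x^{*})(b-p_x^{*})=F_x(p_x^{*})$, from which $\Pi_x''(p_x^{*})=-f_x(p_x^{*})\bigl(2-\sigma_x(p_x^{*})\bigr)$ (in particular $2-\sigma_x(p_x^{*})>0$, since $\Pi_x''(p_x^{*})<0$); substituting back expresses $V_x'(p_x^{*})/\Pi_x''(p_x^{*})$ through $b-p_x^{*}$ and $2-\sigma_x(p_x^{*})$, and rearranging the inequality $\Vaconst'\bigl((p_0^{*}-p_1^{*})^{-}\bigr)<0$ yields \eqref{eq:agent_acv_hidden}.

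I expect the main obstacle to be the algebra of this last step --- carrying the implicit-differentiation terms through the substitution and getting the cancellations and signs right so that only the $V_x'/\Pi_x''$ ratios survive, and then matching the rearranged inequality to \eqref{eq:agent_acv_hidden}. A smaller care point is the quasi-convexity-to-monotonicity step (ruling out an interior minimizer once the one-sided derivative at the right endpoint is negative), together with observing that one only needs the left derivative of $\Vaconst$ at $r=p_0^{*}-p_1^{*}$, which is where the binding and non-binding programs meet.
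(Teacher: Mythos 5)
Your approach coincides with the paper's: identify the endpoints $\Vaconst(0)=\Vah(p^{*})$, $\Vaconst(p_0^{*}-p_1^{*})=\Var(p_0^{*},p_1^{*})$, use Lemma \ref{lem:v_convex}'s quasi-convexity to reduce the claim to a strictly negative left derivative at $r=p_0^{*}-p_1^{*}$, and evaluate that derivative via implicit differentiation of the constrained first-order condition, rewriting the $w_x=V_x'/\Pi_x''$ ratios with $\sigma_x$.

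The gap is in the final step, which you describe but do not carry out, and which does not come out the way you assert. Your FOC $(1-\theta)\Pi_0'(p_1+r)+\theta\Pi_1'(p_1)=0$ is the correct one for \eqref{eq:restricted_price_discrimination}, and it yields $p_1'(r)=\frac{-(1-\theta)\Pi_0''(p_0)}{(1-\theta)\Pi_0''(p_0)+\theta\Pi_1''(p_1)}$ and $p_0'(r)=\frac{\theta\Pi_1''(p_1)}{(1-\theta)\Pi_0''(p_0)+\theta\Pi_1''(p_1)}$. Substituting into $\Vaconst'(r)=(1-\theta)V_0'(p_0)p_0'(r)+\theta V_1'(p_1)p_1'(r)$ gives $\Vaconst'(r)=\frac{\theta(1-\theta)}{(1-\theta)\Pi_0''+\theta\Pi_1''}\bigl[V_0'(p_0)\Pi_1''(p_1)-V_1'(p_1)\Pi_0''(p_0)\bigr]$: the $\theta$ factors pull \emph{out} of the bracket, and dividing the bracket by $\Pi_0''\Pi_1''>0$ shows that the sign of $\Vaconst'$ at $r=p_0^{*}-p_1^{*}$ is $-\operatorname{sign}\bigl(w_0(p_0^{*})-w_1(p_1^{*})\bigr)$. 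Converting with $w_x(p_x^{*})=-\frac{b-p_x^{*}}{2-\sigma_x(p_x^{*})}$, your route produces the sufficient condition $\frac{b-p_0^{*}}{2-\sigma_0(p_0^{*})}<\frac{b-p_1^{*}}{2-\sigma_1(p_1^{*})}$, with \emph{no} $\theta,1-\theta$ factors --- which is not \eqref{eq:agent_acv_hidden}. The paper reaches \eqref{eq:agent_acv_hidden} because its proof of Lemma \ref{lem:v_convex} drops the weights when rewriting the constrained program as $\max_{p_1}\Pi_1(p_1)+\Pi_0(p_1+r)$, giving the unweighted $p_1'(r)=\frac{-\Pi_0''}{\Pi_0''+\Pi_1''}$; combined with the still-weighted $\Vaconst'$, this leaves $\theta w_1(p_1)-(1-\theta)w_0(p_0)$ in the bracket. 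Your weighted FOC and the paper's unweighted one produce different conditions whenever $\theta\neq\tfrac12$, so you cannot simply assert that your algebra "yields \eqref{eq:agent_acv_hidden}" --- either finish your calculation and report the weight-free condition, flagging the tension with the stated proposition, or knowingly adopt the paper's unweighted FOC. Two smaller slips: Assumption \ref{assn:concave} fixes the sign of the \emph{prefactor}, not of the bracket; and Assumption \ref{assn:drc} delivers quasi-convexity, not "the ordering" of $w_0$ and $w_1$ at the endpoint --- that ordering is exactly what condition \eqref{eq:agent_acv_hidden} is imposing.
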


Proposition \ref{prop:agent_acv_hidden} implies that a high enough difference in curvature between $\sigma_0(p_0^{*})$ and $\sigma_1(p_1^{*})$ implies that the agent will prefer the concealed contract over the revealed contract. That is, inverse task completion quantity when $X=0$ is more convex than the inverse task completion quantity when $X=1$ at the revealed transfers $p_0^{*}, p_1^{*}$. This is exactly the flipped version of the condition in \citet{aguirre2010monopoly}'s Proposition 2, which implied that total welfare is higher under price discrimination. We next give a sufficient condition for the agent to prefer revelation.

\begin{proposition}[Sufficient revelation condition under DRC]\label{prop:agent_acv_revealed} Under Assumptions \ref{assn:concave} and \ref{assn:drc},
$\Vah(p^{*}) < \Var(p_0^{*}, p_1^{*})$ if
\begin{equation}\label{eq:agent_acv_revealed}
    \frac{2 + L(p^{*})\alpha_1(p^{*})}{2 + L(p^{*})\alpha_0(p^{*})} > \frac{\theta F_1(p^{*})/f_1(p^{*})}{(1-\theta)F_0(p^{*})/f_0(p^{*})},
\end{equation}
where $L(p) = \frac{b-p}{p}$ is the Lerner index \citep{lerner1995concept}, and $\alpha_x(p) = \frac{-pf_x'(p)}{f_x(p)}$ is the curvature of the task completion quantity function $F_x(p)$, 
\end{proposition}

Intuitively, Proposition \ref{prop:agent_acv_revealed} says that if the curvatures of $F_0$ and $F_1$ are different enough (relative to the ratio of the CDFs themselves), then the agent will prefer to reveal the environmental variable $X$. 

\paragraph{Remark} An important property analyzed by \citet{milgrom1981good} 
is the monotone likelihood ratio property (MLRP), which here would say that $\frac{f_0(c)}{f_1(c)}$ is increasing for all $c$. The MLRP would imply that both sides of the inequality in \eqref{eq:agent_acv_revealed} are greater than 1. However, this does not necessarily imply an order between these ratios, and there exist distributions that satisfy the MLRP that yield either of the inequality directions above. We give a specific example of this using the Weibull distribution in Section \ref{sec:weibull} below.

\subsection{Example: Exponential and Weibull Distributions}\label{sec:weibull}

To concretely illustrate the conditions in Propositions \ref{prop:agent_binary_hidden},  \ref{prop:agent_acv_hidden}, and \ref{prop:agent_acv_revealed}, we parameterize the conditional cost distributions using the exponential distribution and the more general Weibull distribution.

First, to illustrate the condition in Proposition \ref{prop:agent_binary_hidden}, let $C | X=0 \sim \text{Exp}(\frac{1}{\lambda_0})$, where $\lambda_0$ represents the scale parameter and is also the mean of the distribution. Specifically, 
\begin{equation}\label{eq:F_0_exponential}
    F_0(c) = \begin{cases} 1 - e^{-\frac{1}{\lambda_0}c} & c \geq 0 \\ 0 & c < 0.\end{cases}
\end{equation}
Then the condition in \Eqref{eq:agent_binary_hidden} is equivalent to $\lambda_0 < b\psi(\theta)$, where $\psi(\theta) = \frac{\theta}{\left(\left(\frac{1}{1-\theta}\right)^{\frac{1}{\theta}} - \left(\frac{1}{1-\theta}\right)^{\frac{1-\theta}{\theta}} - \theta \right)}$ is monotone decreasing function bounded between $0$ and $1$ for $\theta \in [0,1]$. 
Thus, as long as the average cost $\lambda_0$ is less than a $\theta$-dependent scaling of the task completion value $b$, the condition in Proposition \ref{prop:agent_binary_hidden} holds. In other words, if the average cost when $X=0$ is not too high, then the agent will prefer concealment.

Beyond fixing $F_1$ at zero cost, let $C$ be a mixture of exponential distributions with $C | X = 0 \sim \text{Exp}(\frac{1}{\lambda_0})$ and $C | X = 1 \sim \text{Exp}(\frac{1}{\lambda_1})$, where 
\begin{equation}\label{eq:F_x_exponential}
    F_x(c) = \begin{cases} 1 - e^{-\frac{1}{\lambda_x}c} & c \geq 0 \\ 0 & c < 0.\end{cases}
\end{equation}
Figure \ref{fig:exponential_contour} plots the difference $\Var(p_0^{*}, p_1^{*}) - \Vah(p^{*})$ for all $\lambda_0, \lambda_1 \in [0,b]$. As seen in Proposition \ref{prop:agent_binary_hidden}, if $\lambda_1 = 0$, then the agent prefers to hide if $\lambda_0$ is sufficiently low. This continues to hold for $\lambda_1$ sufficiently close to 0. More generally, Figure \ref{fig:exponential_contour} shows that the agent prefers to reveal if the means $\lambda_0, \lambda_1$ are sufficiently far apart.

\begin{figure}[!ht]
    \centering
    \begin{tblr}{p{0.1cm}c}
        &\small{$\Var(p_0^{*}, p_1^{*}) - \Vah(p^{*})$}\\
        \SetCell[r=2]{}\small{$\lambda_1$} & \includegraphics[trim={0.7cm 0.6cm 0 1cm},clip,width=0.5\columnwidth]{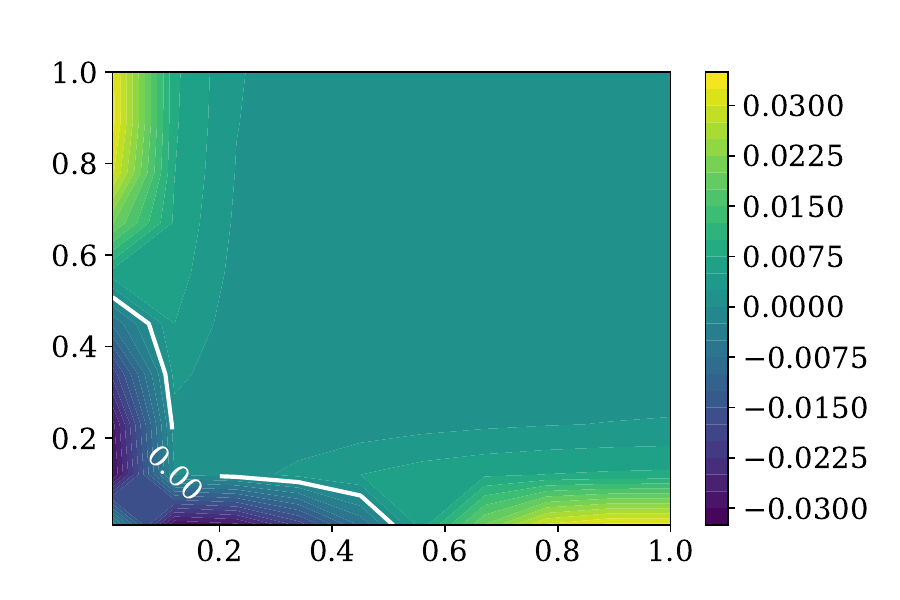}\\
        & \small{$\lambda_0$} \\
    \end{tblr}    
    \caption{Difference between agent's utility in the revealed setting and concealed settings when $C$ is distributed as a mixture of exponentials. For each pair $(\lambda_0, \lambda_1)$, a positive value indicates that the agent prefers revelation, and a negative value indicates that the agent prefers concealment. The contour line shows all $(\lambda_0, \lambda_1)$ for which  $\Var(p_0^{*}, p_1^{*}) - \Vah(p^{*}) = 0$. The parameters $b=1$ and $\theta = \frac{1}{2}$ are fixed, and $\lambda_0$ and $\lambda_1$ are varied up to $b$. }
    \label{fig:exponential_contour}
\end{figure}

For the exponential mixture, the inequalities in Propositions \ref{prop:agent_acv_hidden} and \ref{prop:agent_acv_revealed}  do not hold for any combinations of $\lambda_0, \lambda_1$. Thus, the condition in Proposition \ref{prop:agent_binary_hidden} covers cases not covered by Proposition \ref{prop:agent_acv_hidden}. However, we see Propositions \ref{prop:agent_acv_hidden} and \ref{prop:agent_acv_revealed} take effect for the more general Weibull distribution, with 
\begin{equation}\label{eq:F_x_weibull}
    F_x(c) = \begin{cases} 1 - e^{\left(-\frac{1}{\lambda_x}c\right)^{k_x}} & c \geq 0 \\ 0 & c < 0.\end{cases}
\end{equation}
For example, for a fixed $\lambda_0, \lambda_1$, increasing $k = k_0 = k_1$ increases the difference in curvature between $F_0$ and $F_1$ at $p^{*}$, yielding a set of values of $k > 1$ in which the condition in Proposition \ref{prop:agent_acv_revealed} holds. For Proposition \ref{prop:agent_acv_hidden}, the condition holds if $k_1$ is sufficiently small, and $k_0$ is sufficiently large for $\lambda_0 > \lambda_1$. 

\subsection{Principal's Revelation Preferences}\label{sec:principal_revelation}

While the agent might sometimes prefer the hidden setting over the revealed setting, we next show that the principal always prefers revelation.
First, Lemma \ref{lem:principal_not_worse} shows that the principal is never worse off under revelation.

\begin{lemma}[Principal prefers revelation]\label{lem:principal_not_worse}
Revealing $X$ never decreases the value of the principal: $\Vpr(\rho^{*}) \geq \Vph(p^{*})$, where $\rho^{*} \in \arg\max_{\rho} \Vpr(\rho) $. Revealing $X$ strictly increases the value of the principal only if $X$ and $C$ are not independent. 
\end{lemma}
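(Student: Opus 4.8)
The plan is to exhibit a feasible transfer function in the revealed setting that replicates the concealed-setting payoff, so that the optimal revealed contract can only do better. Specifically, I would take the constant function $\rho(x) \equiv p^{*}$, where $p^{*}$ is the principal's optimal concealed transfer. Since this is a valid choice of $\rho : \mathcal{X} \to \R_{+}$, we have $\Vpr(\rho^{*}) \geq \Vpr(\rho) = \E[F_X(p^{*})(b - p^{*})]$. Now by the tower property / law of total probability, $\E[F_X(p^{*})] = \E[\sP(C \leq p^{*} \mid X)] = \sP(C \leq p^{*}) = F(p^{*})$, so $\Vpr(\rho) = F(p^{*})(b-p^{*}) = \Vph(p^{*})$. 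This gives the inequality $\Vpr(\rho^{*}) \geq \Vph(p^{*})$ immediately. The same argument works verbatim in the binary-$X$ parametrization with $p_0 = p_1 = p^{*}$.

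For the strictness claim, I would argue the contrapositive in one direction and a direct argument in the other. If $X$ and $C$ are independent, then $F_x = F$ for (almost) every $x$, so $\Pi_x(p) = \Pi(p)$ for all $x$, hence $\Vpr(p_0,p_1) = (1-\theta)\Pi(p_0) + \theta\Pi(p_1)$ is maximized by $p_0 = p_1 = p^{*}$ and equals $\Pi(p^{*}) = \Vph(p^{*})$ — so revelation does not strictly help. Conversely, if $X$ and $C$ are not independent, then (at least on the binary feature with $\theta \in (0,1)$) the conditional CDFs $F_0$ and $F_1$ differ, and in particular their per-environment optimizers $\arg\max_p \Pi_0(p)$ and $\arg\max_p \Pi_1(p)$ are generically distinct from each other and from $p^{*}$; choosing $p_0, p_1$ to be these separate optimizers yields $\Vpr > \Vph(p^{*})$ unless the constrained and unconstrained optima happen to coincide. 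To make this airtight I would invoke concavity/regularity: under Assumption \ref{assn:concave} each $\Pi_x$ has a unique maximizer, and $p^{*}$ maximizes the mixture $(1-\theta)\Pi_0 + \theta\Pi_1$; if $p^{*}$ simultaneously maximized both $\Pi_0$ and $\Pi_1$ then by first-order conditions $\Pi_0'(p^{*}) = \Pi_1'(p^{*}) = 0$, which forces $f_0(p^{*}) = f_1(p^{*})$-type equalities that, combined across a range of $b$ (or using that the first-order condition must hold as an identity in the relevant parameter), contradict $F_0 \neq F_1$.

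The main obstacle is the strictness direction: the clean part (the inequality) is a one-line averaging argument, but "$X$ and $C$ not independent $\Rightarrow$ strict improvement" requires ruling out the degenerate coincidence where the unconstrained per-environment optima still happen to agree with $p^{*}$ even though the distributions differ. I expect the cleanest route is to note that if $\Vpr(\rho^{*}) = \Vph(p^{*})$, then the constant contract $\rho \equiv p^{*}$ is itself optimal in the revealed problem, so by the first-order conditions for $\Vpr$ we need $\Pi_0'(p^{*}) = \Pi_1'(p^{*}) = 0$ (using $\theta \in (0,1)$ and, if needed, Assumption \ref{assn:concave} for uniqueness); writing these out gives $F_0(p^{*})/f_0(p^{*}) = b - p^{*} = F_1(p^{*})/f_1(p^{*})$ together with $F_0(p^{*}) + $ (mixture condition) pinning things down, and then the genuinely dependent case — where $F_0$ and $F_1$ are not identically equal — contradicts having a common stationary point unless one tracks that the maximizers move apart. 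I would present this as: equality in the lemma forces $p^{*} \in \arg\max \Pi_0 \cap \arg\max \Pi_1$, and since $p^{*} \in \arg\max[(1-\theta)\Pi_0 + \theta\Pi_1]$ automatically, the only way all three coincide for the relevant distributions is $F_0 \equiv F_1 \equiv F$, i.e. independence — which is exactly the claimed equivalence.
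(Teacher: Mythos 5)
Your proof of the two things the lemma actually claims is correct and matches the paper's approach. For the inequality, the paper also exhibits the constant contract $\hat\rho(x) \equiv p^{*}$ as feasible and observes $\Vpr(\hat\rho) = \Vph(p^{*})$; your tower-property computation just spells out that equality. For the ``only if'' clause, what needs to be shown is the contrapositive --- independence of $X$ and $C$ implies no strict improvement --- and your observation that under independence $F_x \equiv F$, the revealed objective decouples, and $p_0 = p_1 = p^{*}$ remains optimal is exactly that argument (the paper phrases the same point via a $\max$/$\E$ interchange that it calls Jensen's inequality).

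The issue is everything after ``Conversely.'' You read the statement ``strictly increases the value of the principal only if $X$ and $C$ are not independent'' as an equivalence and devote most of the proposal to the converse, ``not independent $\Rightarrow$ strict improvement.'' That direction is not part of the lemma, and it is false as stated. Your closing claim --- that $p^{*}$ being a common maximizer of $\Pi_0$, $\Pi_1$, and the mixture forces $F_0 \equiv F_1 \equiv F$ --- does not hold: the first-order condition $p + F_x(p)/f_x(p) = b$ at $p = p^{*}$ is a single pointwise constraint on each $F_x$, and it is easy to construct $F_0 \neq F_1$ with $F_0(p^{*})/f_0(p^{*}) = F_1(p^{*})/f_1(p^{*}) = b - p^{*}$, so that both $\Pi_x$ peak at $p^{*}$ even though $C$ and $X$ are dependent. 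This is exactly why the paper treats the converse direction separately as Lemma~\ref{lem:principal_value_increase}, where the MLRP (Assumption~\ref{assn:mlrp}) is added precisely to exclude such coincidences; the MLRP forces $F_0(p)/f_0(p) < F_1(p)/f_1(p)$ pointwise, so $\Pi_0$ and $\Pi_1$ cannot share a stationary point. In short: the direction you need is the easy one and you have it; the ``main obstacle'' you identify belongs to a stronger statement that this lemma does not make and that requires an extra hypothesis you do not have here.
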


The principal strictly benefits from information revelation if the monotone likelihood ratio property (MLRP) is satisfied between the revealed distributions. 

\begin{assumption}[Monotone likelihood ratio property (MLRP) \citep{milgrom1981good}]\label{assn:mlrp}
    The ratio $\frac{f_0(c)}{f_1(c)}$ is strictly increasing in $c$. 
\end{assumption}

\begin{lemma}[Principal strictly benefits from revelation]\label{lem:principal_value_increase}
Let $F_0$ and $F_1$ be continuously differentiable CDFs. 
If the MLRP holds (Assumption \ref{assn:mlrp}),
then the principal strictly benefits when $X$ is revealed: $\Vpr(p_0^{*}, p_1^{*}) > \Vph(p^{*})$.
\end{lemma}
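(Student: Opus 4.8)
The plan is to build on Lemma \ref{lem:principal_not_worse}, which already tells us $\Vpr(p_0^{*},p_1^{*}) \geq \Vph(p^{*})$; it remains to rule out equality under the MLRP. The natural strategy is to exhibit a feasible pair of transfers in the revealed setting that \emph{strictly} beats the best concealed transfer $p^{*}$. The obvious candidate is the ``pooling'' choice $p_0 = p_1 = p^{*}$, for which $\Vpr(p^{*},p^{*}) = (1-\theta)\Pi_0(p^{*}) + \theta\Pi_1(p^{*}) = (1-\theta)F_0(p^{*})(b-p^{*}) + \theta F_1(p^{*})(b-p^{*}) = F(p^{*})(b-p^{*}) = \Vph(p^{*})$, using the law of total probability $F = (1-\theta)F_0 + \theta F_1$. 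So pooling merely matches, and I must show the principal can do strictly better by perturbing $p_0,p_1$ away from $p^{*}$ in opposite directions. Concretely, I would compute the gradient of $\Vpr(p_0,p_1)$ at the pooling point: $\partial_{p_0}\Vpr = (1-\theta)\Pi_0'(p^{*})$ and $\partial_{p_1}\Vpr = \theta\Pi_1'(p^{*})$. Since $p^{*}$ maximizes $\Vph$, the first-order condition gives $\Vph'(p^{*}) = (1-\theta)\Pi_0'(p^{*}) + \theta\Pi_1'(p^{*}) = 0$ (interior optimum; the boundary case $p^{*}=0$ needs a separate trivial argument, or can be excluded since $b>0$ forces $p^{*}>0$). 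Hence the two partials are negatives of each other: $\theta\Pi_1'(p^{*}) = -(1-\theta)\Pi_0'(p^{*})$.

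The key point is that this common value is \emph{nonzero}. If $(1-\theta)\Pi_0'(p^{*}) = 0 = \theta\Pi_1'(p^{*})$, then (assuming $\theta \in (0,1)$, which is the only interesting case) both $\Pi_0'(p^{*}) = 0$ and $\Pi_1'(p^{*}) = 0$; writing out $\Pi_x'(p) = f_x(p)(b-p) - F_x(p)$, this would force $\frac{F_0(p^{*})}{f_0(p^{*})} = b - p^{*} = \frac{F_1(p^{*})}{f_1(p^{*})}$. I would argue this contradicts the MLRP: the monotone likelihood ratio property implies first-order stochastic dominance and, more to the point, it implies a strict ordering of the reverse hazard rates $f_x/F_x$, hence $F_0(p^{*})/f_0(p^{*}) \neq F_1(p^{*})/f_1(p^{*})$ (with $F_0/f_0 > F_1/f_1$, consistent with $X=1$ being the low-cost type). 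This is the step I expect to be the main obstacle — carefully establishing that MLRP yields a \emph{strict} inequality between the reverse hazard rates at the specific point $p^{*}$, rather than just a weak one; I would do this by the standard argument that $\int_0^{p^{*}} f_0(c)\,dc$ versus $\int_0^{p^{*}} f_1(c)\,dc$ can be compared by pulling out the likelihood ratio and using its strict monotonicity, so $f_0(p^{*})/F_0(p^{*}) < f_1(p^{*})/F_1(p^{*})$ strictly as long as $f_0,f_1$ are positive on a neighborhood — which continuous differentiability and the CDF structure give on the support.

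Given that the common value $v := (1-\theta)\Pi_0'(p^{*}) = -\theta\Pi_1'(p^{*}) \neq 0$, say $v > 0$, the direction $(p_0,p_1) = (p^{*}+\delta, p^{*}-\delta)$ for small $\delta>0$ is feasible (it satisfies $p_0 - p_1 \geq 0 \geq$ any relevant constraint, and keeps both transfers nonnegative for $\delta$ small) and gives $\frac{d}{d\delta}\Vpr(p^{*}+\delta,p^{*}-\delta)\big|_{\delta=0} = \partial_{p_0}\Vpr - \partial_{p_1}\Vpr = 2v > 0$, so $\Vpr(p^{*}+\delta,p^{*}-\delta) > \Vpr(p^{*},p^{*}) = \Vph(p^{*})$ for small $\delta$. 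If $v<0$ take $\delta<0$. Therefore $\Vpr(p_0^{*},p_1^{*}) \geq \Vpr(p^{*}+\delta,p^{*}-\delta) > \Vph(p^{*})$, which is the claim. I would also handle the degenerate edge cases ($\theta \in \{0,1\}$, which make $X$ uninformative and are excluded by "not independent" implicit in MLRP being meaningful; and $p^{*}$ at the boundary) in a short remark at the end, noting they are either vacuous or follow immediately from Lemma \ref{lem:principal_not_worse}.
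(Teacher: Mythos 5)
Your proposal is correct and rests on the same key insight as the paper's proof: the MLRP forces $\frac{F_0(p)}{f_0(p)} \neq \frac{F_1(p)}{f_1(p)}$ for every $p$, so the first-order conditions $\Pi_0'(p)=0$ and $\Pi_1'(p)=0$ can never hold simultaneously. The paper applies this to conclude directly that $p_0^{*}\neq p_1^{*}$ and hence that the separate maximization beats the pooled one; you equivalently show the gradient of $\Vpr$ at the pooling point $(p^{*},p^{*})$ is nonzero and perturb along the antidiagonal. The two are the same argument with different bookkeeping, and your version is a bit more explicit about the final step and the boundary considerations.

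One slip worth fixing: you state the reverse-hazard-rate ordering as ``$F_0/f_0 > F_1/f_1$'' and later ``$f_0(p^{*})/F_0(p^{*}) < f_1(p^{*})/F_1(p^{*})$.'' Under Assumption~\ref{assn:mlrp} (that $f_0/f_1$ is strictly increasing), the standard weighted-average argument gives $\frac{F_0(p)}{F_1(p)} < \frac{f_0(p)}{f_1(p)}$, equivalently $\frac{F_0(p)}{f_0(p)} < \frac{F_1(p)}{f_1(p)}$ --- the direction the paper uses. Your sign is reversed. This is harmless here because your contradiction only needs strict nonequality of the two ratios at $p^{*}$, but the stated inequality and the parenthetical intuition (``consistent with $X=1$ being the low-cost type'') should be corrected to match.
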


Intuitively, as the first mover, the principal will never be hurt by having additional freedom to condition on $X$ when selecting prices that maximize their utility. Lemma \ref{lem:principal_value_increase} gives the MLRP as a sufficient condition for revelation of $X$ to yield a strict benefit for the principal.

\section{Example Where Total Welfare Decreases under Revelation}\label{app:total_welfare_decrease}
It is still possible for total welfare to decrease under information revelation. Mirroring an example from \citet{varian1985price}, we provide an illustrative example here where total welfare decreases when task completition quantity does not increase.

Suppose $C | X = 1 \sim \text{Unif}(0,1)$, and $C | X = 0 \sim \text{Unif}(\frac{1}{2},\frac{3}{2})$. 
Suppose $\theta = \frac{1}{2}$. 
Suppose $b = 1$. Then the optimal payments for each of $F, F_0, F_1$ all fall in the ``interior'' of $F(x)$: \[\frac{1}{2} \leq p_1^{*} < p^{*} < p_0^{*} \leq 1.\]
When the solutions all fall in the interior, we have $p^{*} = \frac{p_1^{*} + p_0^{*}}{2}$, and $F(p^{*}) = \frac{F_1(p_1^{*}) + F_0(p_0^{*})}{2}$. This now violates the necessary condition in Lemma \ref{lem:total_welfare_necessary}, since the output does not increase under revelation, but the payments change. Total welfare decreases as long as $p_0^{*}\neq p_1^{*}$.

\section{Proofs from Section \ref{app:binary_additional_results}}

Here we give full proofs from Section \ref{app:binary_additional_results}. 

\subsection{Proof of Proposition \ref{prop:agent_binary_hidden}}

To prove Proposition \ref{prop:agent_binary_hidden}, we first reorganize the difference between the agent's concealed and revealed utilities:
\begin{align*}
    \Vah(p^{*}) - \Var(p_1^{*}, p_1^{*}) &= 
    \theta \Delta V_1 - (1-\theta) \Delta V_0
\end{align*}
where 
\[\Delta V_0 \coloneqq V_0(p_0^{*}) - V_0(p^{*}); \quad \Delta V_1 \coloneqq V_1(p^{*}) - V_1(p_1^{*}).\]

We begin with Lemma \ref{lem:delta_V0} below which upper bounds $\Delta V_0$.

\begin{lemma}\label{lem:delta_V0}
     For any concave and continuously differentiable CDF $F_0$, $\Delta V_0 \leq p_0^{*} - \bar{p}$ for any $\bar{p} < p_0^{*}$.
\end{lemma}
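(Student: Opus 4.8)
The plan is to exploit the fact that the agent's conditional value $V_0$ is an antiderivative of the conditional CDF $F_0$, hence $1$-Lipschitz, and then to write the gap $\Delta V_0$ as an integral that is immediately dominated by the length of the interval of integration.

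First I would record the derivative of $V_0$. From the definition $V_0(p) = \E[(p-C)\Ind(C<p)\mid X=0] = \int_0^{p}(p-c)f_0(c)\,dc$ (the last equality using that $C\ge 0$), the Leibniz rule gives $V_0'(p) = \int_0^{p} f_0(c)\,dc = F_0(p)$, since the endpoint contribution $(p-p)f_0(p)$ vanishes; continuous differentiability of $F_0$ is exactly what makes this step legitimate. Because $F_0$ is a CDF, $0\le V_0'(p)=F_0(p)\le 1$ for every $p$, so $V_0$ is non-decreasing and $1$-Lipschitz. Writing the gap as an integral of this derivative and bounding $F_0\le 1$,
\[
\Delta V_0 \;=\; V_0(p_0^{*}) - V_0(p^{*}) \;=\; \int_{p^{*}}^{p_0^{*}} F_0(p)\,dp \;\le\; \int_{p^{*}}^{p_0^{*}} 1\,dp \;=\; p_0^{*} - p^{*} \;\le\; p_0^{*} - \bar p,
\]
the last inequality holding for any $\bar p\le p^{*}$ (equivalently, by monotonicity of $V_0$, for any $\bar p$ with $V_0(\bar p)\le V_0(p^{*})$). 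This is the regime in which the lemma is used downstream, namely with $\bar p=(1-\theta)p_0^{*}$.

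I expect the only genuine subtlety — and the only place the standing hypotheses of Proposition~\ref{prop:agent_binary_hidden} (concavity of $F_0$, strict monotonicity of $F_0/f_0$) do real work — to be locating the concealed optimum $p^{*}$, i.e.\ checking that the $\bar p$ one wants to substitute indeed lies at or below $p^{*}$. Concavity of $F_0$ makes $\Pi_0$ concave on $\{p<b\}$, since $\Pi_0''(p)=f_0'(p)(b-p)-2f_0(p)\le 0$; together with the zero-cost structure, which yields the first-order conditions $\Pi_0'(p^{*})=\theta/(1-\theta)$ and $\Pi_0'(p_0^{*})=0$, monotonicity of $\Pi_0'$ then pins $p^{*}$ into $(0,p_0^{*})$ and, with a little more care, below $(1-\theta)p_0^{*}$'s complement. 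That analysis, however, belongs to the proof of the proposition; the content of the lemma itself is just the Lipschitz estimate displayed above.
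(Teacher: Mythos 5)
Your proof is correct and takes the same route as the paper's: both reduce the claim to $V_0'(p) = F_0(p) \le 1$ and conclude by bounding the increment by the length of the interval (you via the fundamental theorem of calculus, the paper via the mean value theorem---an immaterial difference). Your observations that concavity does no work in the lemma itself and that the bound, read as a statement about $\Delta V_0 = V_0(p_0^{*}) - V_0(p^{*})$, really requires $\bar p \le p^{*}$ are both well taken; the paper's proof in fact establishes $V_0(p_0^{*}) - V_0(\bar p) \le p_0^{*} - \bar p$ and applies it downstream with $\bar p = p^{*}$ (not $(1-\theta)p_0^{*}$, a small slip in your closing remark).
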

\begin{proof}
$\Delta V_0 \leq p_0^{*} - \bar{p}$ if
\[\frac{V_0(p_0^{*}) - V_0(\bar{p})}{p_0^{*} - \bar{p}} \leq 1.\]

We upper bound this difference by differentiating $V_0$: 
\begin{align*}
    V_0'(p) &= \frac{d}{dp} \int_{0}^{p} (p-c) f_0(c) dc  = F_0(p)
\end{align*}
Since $F_0$ is a concave and continuously differentiable CDF, by the mean value theorem, \[\frac{V_0(p_0^{*}) - V_0(\bar{p})}{p_0^{*} - \bar{p}} \leq \sup_{p} V_0'(p) = \sup_{p} F_0(p) \leq 1.\]
\end{proof}

We now leverage Lemma \ref{lem:delta_V0} to prove the full proposition.

\begin{repproposition}{prop:agent_binary_hidden}[Sufficient concealment condition with zero-cost type]
Suppose $F_0$ is a concave and continuously differentiable CDF. Suppose $C | X = 1$ takes value $0$ with probability $1$. Suppose the ratio $\frac{F_0(p)}{f_0(p)}$ is strictly monotone increasing for $p > 0$.  Then $\Vah(p^{*}) > \Var(p_0^{*}, p_1^{*})$ if
\begin{equation*}
    \theta > (1-\theta) \frac{1}{\eta((1-\theta)p_0^{*})} - \frac{1}{\eta_0\left(p_0^{*}\right)}
\end{equation*}
where $\eta(p) = \frac{ p(1-\theta)f_0(p)}{(1-\theta)F_0(p) + \theta}$ and $\eta_0(p) = \frac{pf_0(p)}{F_0(p)}$ are the respective price elasticities for task completion quantity for the mixture distribution $C$ and the conditional distribution $C | X = 0$. 
\end{repproposition}




\begin{proof}

We consider the extreme case where $C | X = 1$ has value $0$ with probability $1$. 
For this distribution of $C | X=1$, we show that \Eqref{eq:agent_binary_hidden} implies that \begin{equation}\label{eq:prop1_deltas}
    \theta \Delta V_1 > (1-\theta) \Delta V_0.
\end{equation}

First, for any nonzero $C | X = 0$, we have that $p^{*} < p_0^{*}$. Thus, Lemma \ref{lem:delta_V0} gives that \[(1 - \theta) \Delta V_0 \leq (1 - \theta) (p_0^{*} - p^{*}).\]

Next, we further upper bound this by showing that for any $F_0$ that satisfies Equation (\ref{eq:agent_binary_hidden}),
\begin{equation}\label{eq:prop1_price_bound}
    (1-\theta) (p_0^{*} - p^{*}) < \theta p^{*}.
\end{equation}
\Eqref{eq:prop1_deltas} then follows from the fact that $\Delta V_1 = p^{*}$ when $C | X=1$ is always zero.

We now prove that \eqref{eq:prop1_price_bound} holds under \eqref{eq:agent_binary_hidden}. First, note that \[(1-\theta) (p_0^{*} - p^{*}) < \theta p^{*} \iff (1-\theta)p_0^{*} < p^{*}.\]

Since $F_0$ is concave and continuously differentiable, $p^{*}$  satisfies the following first-order condition:
\begin{equation}\label{eq:prop1_p*_foc}
 p^{*} + \frac{(1-\theta) F_0(p^{*}) + \theta}{(1-\theta) f_0(p^{*})} = b.
\end{equation}
Since $\frac{F_0(p)}{f_0(p)}$ is strictly monotone increasing for $p > 0$ and $F_0(p)$ is concave, $p + \frac{(1-\theta) F_0(p) + \theta}{(1-\theta) f_0(p)}$ is also strictly monotone increasing for $p > 0$. Therefore, $(1-\theta)p_0^{*} < p^{*}$ if and only if 
\[(1-\theta)p_0^{*} + \frac{(1-\theta) F_0((1-\theta)p_0^{*}) + \theta}{(1-\theta) f_0((1-\theta)p_0^{*})} < b.\]

We also have that $p_0^{*}$ satisfies the first-order condition
\begin{equation*}\label{eq:prop1_p0*_foc}
    p_0^{*} + \frac{F_0(p_0^{*})}{f_0(p_0^{*})} = b.
\end{equation*}
Therefore, $(1-\theta)p_0^{*} < p^{*}$ if and only if 
\[(1-\theta)p_0^{*} + \frac{(1-\theta) F_0((1-\theta)p_0^{*}) + \theta}{(1-\theta) f_0((1-\theta)p_0^{*})} < p_0^{*} + \frac{F_0(p_0^{*})}{f_0(p_0^{*})},\]
which is equivalent to the condition in \eqref{eq:agent_binary_hidden}.




\end{proof}

\subsection{Proofs for Propositions \ref{prop:agent_acv_hidden} and \ref{prop:agent_acv_revealed}}

Here we give proofs for Propositions \ref{prop:agent_acv_hidden} and \ref{prop:agent_acv_revealed}. These analyses parallel those of \citet{aguirre2010monopoly} for the effects of price discrimination on total welfare. 

We first prove Lemma \ref{lem:v_convex}, which follows from Assumption \ref{assn:concave} and \ref{assn:drc}.

\begin{replemma}{lem:v_convex}
    Under Assumptions \ref{assn:concave} and \ref{assn:drc}, $\Vaconst(r)$ is strictly quasi-convex for $r \in [0, p_0^{*} - p_1^{*}]$. That is, if there exists $\hat{r} \in [0, p_0^{*} - p_1^{*}]$ such that $\Vaconst(\hat{r}) = 0$, then $\Vaconst''(\hat{r}) > 0$. 
\end{replemma}
\begin{proof}
The constraint in \eqref{eq:restricted_price_discrimination} is binding when $r \in [0, p_0^{*} - p_1^{*}]$. Therefore, the optimization problem in \eqref{eq:restricted_price_discrimination} can be rewritten as 
\[\max_{p_1}\;\; \Pi_1(p_1) + \Pi_0(p_1 + r),\] 
yielding a first-order condition that $\Pi_1'(p_1) + \Pi_0'(p_1 + r) = 0$. Further differentiating this first-order condition, as done by \citet{aguirre2010monopoly}, yields that
\[p_1'(r) = \frac{-\Pi_0''(p_0(r))}{\Pi_0''(p_0(r)) + \Pi_1''(p_1(r))}.\]
A similar method shows that 
\[p_0'(r) = \frac{\Pi_1''(p_1(r))}{\Pi_0''(p_0(r)) + \Pi_1''(p_1(r))}.\]
Thus, we have that
\begin{align}
    \begin{split}
        \Vaconst'(r) &=  (1-\theta) V_0(p_0(r)) p_0'(r) + \theta V_1(p_1(r)) p_1'(r) \\
    &= \left(\frac{-\Pi_1''(p_1(r)\Pi_0''(p_0(r)))}{\Pi_0''(p_0(r)) + \Pi_1''(p_1(r))}\right) \left(\frac{\theta V_1'(p_1(r))}{\Pi_1''(p_1(r))} - \frac{(1-\theta)V_0'(p_0(r))}{\Pi_0''(p_0(r))}\right)\\
    &= \left(\frac{-\Pi_1''(p_1(r)\Pi_0''(p_0(r)))}{\Pi_0''(p_0(r)) + \Pi_1''(p_1(r))}\right) \left(\theta w_1(p_1(r)) - (1-\theta)w_0(p_0(r))\right),
    \end{split}\label{eq:Vconst_deriv}
\end{align}
where $w_x(p) \coloneqq \frac{V_x'(p)}{\Pi_x''(p)}$.

Taking the second derivative, we have that 
\begin{align*}
    \Vaconst''(r) &= \left(\frac{-\Pi_1''(p_1(r)\Pi_0''(p_0(r)))}{\Pi_0''(p_0(r)) + \Pi_1''(p_1(r))}\right)\left(\theta w_1'(p_1(r))p_1'(r) - (1-\theta)w_0'(p_0(r))p_0'(r)\right) \\
    &\quad + (\theta w_1(p_1(r)) - (1-\theta)w_0(p_0(r))) \frac{\partial}{\partial r} \left(\frac{-\Pi_1''(p_1(r)\Pi_0''(p_0(r)))}{\Pi_0''(p_0(r)) + \Pi_1''(p_1(r))}\right).
\end{align*}
The first term $\left(\frac{-\Pi_1''(p_1(r)\Pi_0''(p_0(r)))}{\Pi_0''(p_0(r)) + \Pi_1''(p_1(r))}\right)$ is positive by strict concavity given by Assumption \ref{assn:concave}.

If $\Vaconst'(\hat{r}) = 0$, then $\theta w_1(p_1(\hat{r})) - (1-\theta)w_0(p_0(\hat{r})) = 0$. By the DRC, $w_1'(p_1(\hat{r}))p_1'(\hat{r}) > 0$ since $w_1'(p_1(\hat{r})) < 0$ and $p_1'(\hat{r}) < 0$. Similarly, $w_0'(p_0(\hat{r}))p_0'(\hat{r}) < 0$. Therefore, $\Vaconst''(\hat{r}) > 0$.

\end{proof}

Given Lemma \ref{lem:v_convex}, we now prove Propositions \ref{prop:agent_acv_hidden} and \ref{prop:agent_acv_revealed} by signing the derivative $\Vaconst'(r)$ for extreme values of $r$. 

\begin{repproposition}{prop:agent_acv_hidden}[Sufficient concealment condition under DRC]
Under Assumptions \ref{assn:concave} and \ref{assn:drc}, $\Vah(p^{*}) > \Var(p_0^{*}, p_1^{*})$ if 
\begin{equation*}
    \frac{(1-\theta)(b-p_0^{*})}{2 - \sigma_0(p_0^{*})} <  \frac{\theta (b-p_1^{*})}{2 - \sigma_1(p_1^{*})},
\end{equation*}
where $\sigma_x(p) = \frac{F_x(p)f'_x(p)}{f_x^2(p)}$ is the curvature of the inverse of the task completion quantity function $F_x(p)$.
\end{repproposition}
\begin{proof}
    If $\Vaconst(r)$ is strictly monotone decreasing in $r$, then $\Vah(p^{*}) > \Var(p_0^{*}, p_1^{*})$. Since $\Vaconst(r)$ is strictly quasi-convex, a sufficient condition for $\Vaconst(r)$ to be strictly monotone decreasing is $\Vaconst'(p_0^{*} - p_1^{*}) < 0$.

    By \eqref{eq:Vconst_deriv}, we have that $\Vaconst'(p_0^{*} - p_1^{*}) < 0$ if $\theta w_1(p_1^{*}) - (1-\theta)w_0(p_0^{*}) < 0$. By the first-order condition that 
    \begin{equation}\label{eq:foc_px}
        b - p_x^{*} = \frac{F_x(p_x^{*})}{f_x(p_x^{*})},
    \end{equation}
    we have that 
    \begin{equation*}
        w_x(p_x^{*}) = \frac{F_x(p_x^{*})}{\Pi''_x(p_x^{*})} = \frac{b - p_x^{*}}{\Pi''_x(p_x^{*})/f_x(p_x^{*})}
    \end{equation*}
    Note that
    \begin{equation*}
        \Pi''_x(p) = -2f_x(p) + f_x'(p)(b-p).
    \end{equation*}
    Therefore, also applying the first-order condition from \eqref{eq:foc_px}, we have
    \[\Pi''_x(p_x^{*})/f_x(p_x^{*}) = - 2 + \frac{F_x(p)f'_x(p)}{f_x^2(p)} = -2 + \sigma_x(p). \]
\end{proof}

A similar argument yields Proposition \ref{prop:agent_acv_revealed}.

\begin{repproposition}{prop:agent_acv_revealed}[Sufficient revelation condition under DRC] Under Assumptions \ref{assn:concave} and \ref{assn:drc},
$\Vah(p^{*}) < \Var(p_0^{*}, p_1^{*})$ if
\begin{equation*}
    \frac{2 + L(p^{*})\alpha_1(p^{*})}{2 + L(p^{*})\alpha_0(p^{*})} > \frac{\theta F_1(p^{*})/f_1(p^{*})}{(1-\theta)F_0(p^{*})/f_0(p^{*})},
\end{equation*}
where $L(p) = \frac{b-p}{p}$ is the Lerner index, and $\alpha_x(p) = \frac{-pf_x'(p)}{f_x(p)}$ is the curvature of the task completion quantity function $F_x(p)$.
\end{repproposition}
\begin{proof}
    If $\Vaconst(r)$ is strictly monotone increasing in $r$, then $\Vah(p^{*}) < \Var(p_0^{*}, p_1^{*})$. Since $\Vaconst(r)$ is strictly quasi-convex, a sufficient condition for $\Vaconst(r)$ to be strictly monotone increasing is $\Vaconst'(0) > 0$.

    By \eqref{eq:Vconst_deriv}, we have that $\Vaconst'(0) > 0$ if $\theta w_1(p^{*}) - (1-\theta)w_0(p^{*}) > 0$.
    \begin{align*}
        w_x(p^{*}) = \frac{F_x(p^{*})/f_x(p^{*})}{-2 + (b - p^{*}) (f_x'(p^{*})/f_x(p^{*}))} = \frac{F_x(p^{*})/f_x(p^{*})}{-2 - L(p^{*}) \alpha_x(p^{*})}
    \end{align*}
    Therefore, $\theta w_1(p^{*}) - (1-\theta)w_0(p^{*}) > 0$ if 
    \[\frac{\theta F_1(p^{*})/f_1(p^{*})}{2 + L(p^{*}) \alpha_1(p^{*})} < \frac{(1-\theta) F_0(p^{*})/f_0(p^{*})}{2 + L(p^{*}) \alpha_0(p^{*})}. \]
\end{proof}

\subsection{Comparison of Decreasing Ratio Condition to Increasing Ratio Condition}\label{sec:drc_irc_comparison}
The results in Section \ref{app:agent_binary_acv} depend on the decreasing ratio condition (DRC) given in Assumption \ref{assn:drc}. This is analogous to the ``increasing ratio condition (IRC)'' from \citet{aguirre2010monopoly}, which says that the ratio $\frac{W_x'(p)}{\Pi_x''(p)}$ is increasing in $p$. We now discuss in more detail the relationship between the DRC and the IRC, including sufficient conditions under which the DRC holds. In introducing the IRC, \citet{aguirre2010monopoly} describe a ``very large set of demand functions'' for which the IRC holds. \citet{aguirre2010monopoly} give sufficient conditions for the IRC to hold in Appendix B from their paper, which includes linear functions and exponential and constant elasticity functions. 

For all of the sufficient conditions that \citet{aguirre2010monopoly} proposes for the IRC, these are also sufficient conditions for the DRC if paired with the additional condition that $\frac{F_x(p)}{f_x(p)}$ is increasing in $p$ for $x \in \{0,1\}$. For example, a specific sufficient condition for the DRC, which also implies the IRC, is the following:
Let $\sigma(p) = \frac{F(p) f'(p)}{f(p)^2}$. If $\sigma(p) \leq 1$, and $\alpha(p) = -\frac{p f'(p)}{f(p)}$ is non-decreasing and positive in $p$, then the DRC holds. The IRC would also hold. A similar analogy can be made for all other conditions given in Appendix B of \citet{aguirre2010monopoly}. 

\subsection{Proofs from Section \ref{sec:principal_revelation}}
Lemmas \ref{lem:principal_not_worse} and \ref{lem:principal_value_increase} show that the principal always benefits from more information being revealed. Assumption \ref{assn:mlrp} further implies that the principal strictly benefits from revelation.

\begin{replemma}{lem:principal_not_worse}[Principal prefers revelation]
Revealing $X$ never decreases the value of the principal: $\Vpr(\rho^{*}) \geq \Vph(p^{*})$, where $\rho^{*} \in \arg\max_{\rho} \Vpr(\rho) $. Revealing $X$ strictly increases the value of the principal only if $X$ and $C$ are not independent. 
\end{replemma} 
\begin{proof}
Assuming that the principal's feasible set of payments does not change between markets, the solution $\hat{\rho}(x) = p^{*}$ is in the feasible set of the principal's optimization problem with information revealed. Therefore, 
\[\max_{\rho} \Vpr(\rho) \geq \Vpr(\hat{\rho}) = \Vph(p^{*}).\]
If $X$ and $C$ are independent, we have $F_x = F$ for all $x \in \mathcal{X}$, so
\[\max_{\rho} \Vpr(\rho) = \max_{\rho} \E[F(\rho(X))(b - \rho(X))] \]

By Jensen's inequality, we have
\[\max_{\rho} \E[F(\rho(X))(b - \rho(X))] \leq \E[ \max_{\rho} F(\rho(X))(b - \rho(X))] = \E[  F(p^{*})(b - p^{*})] = \Vph(p^{*}).\]
Therefore, if $X$ and $C$ are independent, then $ \Vpr(\rho^{*}) \leq \Vph(p^{*})$.
\end{proof}

\begin{replemma}{lem:principal_value_increase}[Principal strictly benefits from revelation]
Let $F_0$ and $F_1$ be continuously differentiable CDFs. 
If the MLRP holds (Assumption \ref{assn:mlrp}),
then the principal strictly benefits when $X$ is revealed: $\Vpr(p_0^{*}, p_1^{*}) > \Vph(p^{*})$.
\end{replemma}
\begin{proof}
Since $F_0$, $F_1$ are continuously differentiable, the  first-order necessary conditions hold for the optimal payments $p_0^{*}$, $p_1^{*}$ in \eqref{eq:foc_px}. By these conditions, $p_0^{*} = p_1^{*}$ only if there exists a value $p$ such that 
$$p + \frac{F_0(p)}{f_0(p)} = p + \frac{F_1(p)}{f_1(p)} = b.$$
Such a value $p$ cannot exist if $\frac{F_0(p)}{f_0(p)} \neq \frac{F_1(p)}{f_1(p)}$ for all $p$. The MLRP implies that $\frac{F_0(p)}{f_0(p)} < \frac{F_1(p)}{f_1(p)}$ for all $p$; therefore, $p_0^{*} \neq p_1^{*}$, and maximum value for $\Vpr$ is strictly greater than the maximum value for $\Vph$.
\end{proof}

\section{Additional Results on Information Revelation with Garbling}\label{app:garbling_additional_results}

\subsection{Garbling and Prices}\label{sec:prices_garbling}
First, we analyze the effects of the information revelation amount $\veps$ on the equilibrium prices set by the principal. Specifically, we show that both $p_0(\veps), p_1(\veps)$ are monotone with respect to $\veps$.

\begin{lemma}[Monotonic price changes]\label{lem:monotone_prices}
Suppose $\theta=\frac{1}{2}$. Suppose the principal's utility is strictly concave as a function of price (Assumption \ref{assn:concave}). Suppose the MLRP holds (Assumption \ref{assn:mlrp}). Then $p_0'(\veps) > 0$ and $p_1'(\veps) < 0$ for all $\veps \in [0,1]$.
\end{lemma}

Lemma \ref{lem:monotone_prices} acts as a sanity check that as the principal is aware of more information, the degree of differentiation between prices also increases. Furthermore, with less noise, the price in the higher-cost environment will only increase, and the price in the lower-cost  environment will only decrease. 

\subsection{Agent's Garbling Incentives: General Conditions}\label{app:garbling_general}

\subsubsection{Garbling condition under one zero-cost type}

As in Section \ref{sec:concealment_zero_cost}, we first analyze the restricted case where one of the agent types is anchored at zero-cost: suppose $C | X = 1$ takes value $0$ with probability one. 
Proposition \ref{prop:garbling_zerocost} gives a sufficient condition for the agent to prefer a non-zero amount of garbling over full revelation.

Let the conditional distribution of the cost $C$ given $Y$ be distributed with CDF $\sP(C \leq c | Y = y) = G_y(c)$. 
When $\theta = \frac{1}{2}$,
\[G_0(c) = \frac{1+\veps}{2}F_0(c) + \frac{1 - \veps}{2}F_1(c); \;\; G_1(c) = \frac{1+\veps}{2}F_1(c) + \frac{1 - \veps}{2}F_0(c).\] 
Let $g_y(c)$ be denote the PDF.



\begin{proposition}[Sufficient garbling condition with zero-cost type]\label{prop:garbling_zerocost}
Suppose $\gamma = \theta = \frac{1}{2}$. Suppose $C | X=1$ takes value 0 with probability $1$. Suppose $F_0$ is continuously differentiable and $f_0(c)$ is bounded. $\Vaeps(\veps)$ is maximized at $\veps^{*} < 1$ if 
    \begin{equation}\label{eq:garbling_zerocost}
        \frac{b-p_0^{*}}{2 - \sigma_0(p_0^{*})} < g_0(p_0^{*}),
    \end{equation}
    where $g_0(p) = \int_{0}^p (1-F_0(c))dc$ is the restricted mean cost of task completion, and $\sigma_0(p) = \frac{F_0(p)f_0'(p)}{f_0(p)^2}$ is the curvature of the inverse quantity function.
\end{proposition}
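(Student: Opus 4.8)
The plan is to prove the claim by showing that the left-derivative of $\Vaeps$ at $\veps=1$ is strictly negative \emph{exactly} when \eqref{eq:garbling_zerocost} holds; since $\Vaeps$ is continuous on the compact interval $[0,1]$ (a regularity point I return to below) it attains a maximum, and a strictly negative left-derivative at the right endpoint rules out $\veps=1$ as a maximizer, forcing $\veps^{*}<1$. First I would reduce the problem near $\veps=1$ to a one-variable function. With $\gamma=\theta=\tfrac12$ and $C\mid X=1$ a.s.\ equal to $0$, $F_1(c)=1$ for $c\ge 0$, so for $c\ge 0$, $G_0(c)=\tfrac{1+\veps}{2}F_0(c)+\tfrac{1-\veps}{2}$ and $G_1(c)=\tfrac{1+\veps}{2}+\tfrac{1-\veps}{2}F_0(c)$ (each with an atom at $0$). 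The principal's objective is separable in $(p_0,p_1)$ and each subproblem reads $\max_{p\ge 0}(b-p)\bigl[\beta F_0(p)+(1-\beta)\bigr]$ with $\beta=\tfrac{1+\veps}{2}$ for $Y=0$ and $\beta=\tfrac{1-\veps}{2}$ for $Y=1$. Because $f_0$ is bounded, for $\beta$ small this objective is strictly decreasing on $[0,b\beta]$ and lies below its value at $0$ for $p>b\beta$, so its maximizer is $0$; hence $p_1(\veps)=0$ on a genuine left-neighborhood of $\veps=1$, where the $Y=1$ market contributes nothing to the agent (zero cost, zero transfer). Using $\E[(p-C)\Ind(C<p)\mid Y=0]=\int_0^p G_0=\tfrac{1+\veps}{2}V_0(p)+\tfrac{1-\veps}{2}p$ with $V_0(p)=\int_0^pF_0$ (the atom at $0$ is harmless), on that neighborhood
\[ \Vaeps(\veps)=\tfrac12\,A\!\bigl(\tfrac{1+\veps}{2}\bigr),\qquad A(\alpha)\coloneqq \alpha V_0(q(\alpha))+(1-\alpha)q(\alpha),\quad q(\alpha)\in\arg\max_{p\ge 0}(b-p)[\alpha F_0(p)+(1-\alpha)]. \]
This step is exactly where the bounded-density hypothesis is used.

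Next I would differentiate $A$ at $\alpha=1$. There $q(1)=p_0^{*}$, the interior maximizer of $(b-p)F_0(p)$, whose first-order condition $F_0(p_0^{*})=(b-p_0^{*})f_0(p_0^{*})$ also yields the identities $\sigma_0(p_0^{*})=(b-p_0^{*})f_0'(p_0^{*})/f_0(p_0^{*})$ and (from the second-order condition) $\sigma_0(p_0^{*})\le 2$. Applying the implicit function theorem to the stationarity condition $-[\alpha F_0(p)+(1-\alpha)]+(b-p)\alpha f_0(p)=0$ at $(\alpha,p)=(1,p_0^{*})$ gives $q'(1)=1/\bigl(f_0(p_0^{*})(2-\sigma_0(p_0^{*}))\bigr)$ (valid when $\sigma_0(p_0^{*})<2$; $\sigma_0(p_0^{*})>2$ violates the SOC and cannot occur, and $\sigma_0(p_0^{*})=2$ is a borderline case handled by a limiting argument, where the left side of \eqref{eq:garbling_zerocost} is non-positive in the limit). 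Differentiating $A$, using $V_0'=F_0$ and the stationarity condition to simplify, gives $A'(\alpha)=\bigl(V_0(q(\alpha))-q(\alpha)\bigr)+q'(\alpha)\bigl[\alpha F_0(q(\alpha))+(1-\alpha)\bigr]$, so at $\alpha=1$, using $V_0(p_0^{*})-p_0^{*}=-\int_0^{p_0^{*}}(1-F_0)=-g_0(p_0^{*})$ and $F_0(p_0^{*})/f_0(p_0^{*})=b-p_0^{*}$,
\[ A'(1)=\bigl(V_0(p_0^{*})-p_0^{*}\bigr)+q'(1)F_0(p_0^{*})=-g_0(p_0^{*})+\frac{b-p_0^{*}}{2-\sigma_0(p_0^{*})}. \]
Then by the chain rule $\tfrac{d}{d\veps}\Vaeps(\veps)\big|_{\veps=1^-}=\tfrac14 A'(1)$, which is strictly negative precisely when $\tfrac{b-p_0^{*}}{2-\sigma_0(p_0^{*})}<g_0(p_0^{*})$, i.e.\ \eqref{eq:garbling_zerocost}; hence $\veps=1$ is not a maximizer and $\Vaeps$ is maximized at some $\veps^{*}<1$.

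The hard part is not the algebra but the regularity underpinning the reduction and the implicit-function step: that $q(\cdot)$ is single-valued and $C^1$ near $\alpha=1$ (requires $p_0^{*}$ interior and a strict second-order condition $\sigma_0(p_0^{*})\ne 2$), that $p_1(\veps)\equiv 0$ on a left-neighborhood of $\veps=1$ (where boundedness of $f_0$ enters), and that $\Vaeps$ is continuous on $[0,1]$ so its maximum is attained. Granting these, the rest is bookkeeping, and the one point worth emphasizing is that the first-order condition at $p_0^{*}$ simultaneously rewrites $V_0(p_0^{*})-p_0^{*}$ as $-g_0(p_0^{*})$ and $F_0(p_0^{*})/f_0(p_0^{*})$ as $b-p_0^{*}$, which is exactly what makes the sign change of $A'(1)$ coincide with the stated inequality.
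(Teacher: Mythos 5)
Your proof is correct and follows essentially the same route as the paper: show $\Vaeps'(1) < 0$ under the stated inequality, using the fact (Lemma~\ref{lem:p_constant} in the paper, your bounded-density argument) that $p_1(\veps)\equiv 0$ on a left-neighborhood of $\veps=1$, and then invoking the first-order condition $F_0(p_0^{*})=(b-p_0^{*})f_0(p_0^{*})$ to rewrite $V_0(p_0^{*})-p_0^{*}=-g_0(p_0^{*})$ and the price derivative as $\tfrac{b-p_0^{*}}{2-\sigma_0(p_0^{*})}$. Your reduction to the single-variable function $A(\alpha)=\alpha V_0(q(\alpha))+(1-\alpha)q(\alpha)$ with $\alpha=\tfrac{1+\veps}{2}$ and computation of $q'(1)$ via the implicit function theorem is a cosmetic repackaging of the paper's step of plugging $p_1'(1)=0$ into the full derivative of \eqref{eq:vgarb_expanded} and using the expression for $p_0'(\veps)$ from Lemma~\ref{lem:monotone_prices}; the underlying algebra is identical and yields $\Vaeps'(1)=\tfrac14\bigl(-g_0(p_0^{*})+\tfrac{b-p_0^{*}}{2-\sigma_0(p_0^{*})}\bigr)$ in both. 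You are somewhat more explicit than the paper about the degenerate case $\sigma_0(p_0^{*})=2$ and about why a strictly negative left-derivative at $\veps=1$ rules out the endpoint as a maximizer, which is a small improvement in rigor but does not change the argument.
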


Notably, the inequality in \eqref{eq:garbling_zerocost} captures distributions that are not captured by Proposition \ref{prop:agent_binary_hidden}. Thus, comparing Proposition \ref{prop:garbling_zerocost} to Proposition \ref{prop:agent_binary_hidden} shows that the agent might want to garble, even if they may not always want to hide. In fact, the condition in \eqref{eq:garbling_zerocost} is quite general, and we show in the example in Section \ref{sec:weibull_garbling} below that \eqref{eq:garbling_zerocost} applies to any log-concave Weibull distribution.


\subsubsection{General garbling condition} Generalizing beyond the anchored setting, Proposition \ref{prop:garbling_general} gives a sufficient condition for the agent to prefer garbling over revelation for general $F_0, F_1$, which depends on similar identities. First, we generalize the restricted mean cost function $g_0(p)$ to a comparison of agent utilities.



\begin{definition}[Agent utility dominance]\label{def:value_dominance}
    Let $\Delta(p_0, p_1) \coloneqq (V_1(p_0) - V_1(p_1)) - (V_0(p_0) - V_0(p_1))$ denote the difference in sensitivities to the price change from $p_0$ to $p_1$ in each environment.
\end{definition}

When $F_1$ exhibits first order stochastic dominance over $F_0$, we have that $\Delta(p_0, p_1) > 0$ for $p_0 > p_1$. The greater the dominance of $V_1(p)$ over $V_0(p)$ for all $p$, the greater the difference $\Delta$. Thus, we refer to $\Delta(p_0, p_1)$ as \textit{agent utility dominance}. Using this definition, we now generalize the sufficient garbling condition from the anchored setting.

\begin{proposition}[Sufficient garbling condition]\label{prop:garbling_general}
    Suppose $\gamma = \theta = \frac{1}{2}$. Suppose $F_0, F_1$ are continuously differentiable. $\Vaeps(\veps)$ is maximized at $\veps^{*} < 1$ if
    \begin{equation}\label{eq:garbling_general}
    - \Pi_1'(p_0^{*})\left(\frac{b - p_0^{*}}{2 - \sigma_0(p_0^{*})}\right) -  \Pi_0'(p_1^{*})\left(\frac{b - p_1^{*}}{2 - \sigma_1(p_1^{*})}\right) < \Delta(p_0^{*}, p_1^{*}),
    \end{equation}
    where $\sigma_x(p) = \frac{F_x(p)f_x'(p)}{f_x(p)^2}$ is the curvature of the inverse quantity function.
\end{proposition}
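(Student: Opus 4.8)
The plan is to show that the agent's equilibrium utility $\Vaeps(\veps)$, which is continuous on the compact interval $[0,1]$ and hence attains its maximum, has a \emph{negative} left derivative at $\veps=1$; then full revelation $\veps=1$ cannot be a maximizer, so every maximizer $\veps^\ast$ lies in $[0,1)$. Concretely, I would compute $\frac{d}{d\veps}\Vaeps(\veps)\big|_{\veps=1}$ and show it equals $\tfrac14$ times $\bigl(-\Pi_1'(p_0^\ast)\tfrac{b-p_0^\ast}{2-\sigma_0(p_0^\ast)} - \Pi_0'(p_1^\ast)\tfrac{b-p_1^\ast}{2-\sigma_1(p_1^\ast)} - \Delta(p_0^\ast,p_1^\ast)\bigr)$, which is negative precisely under \eqref{eq:garbling_general}.

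The first step is to record the two structural facts that $\gamma=\theta=\tfrac12$ buys us. Writing $G_0, G_1$ for the conditional CDFs of $C$ given $Y$ (which depend on $\veps$ through $G_0 = \tfrac{1+\veps}{2}F_0+\tfrac{1-\veps}{2}F_1$, $G_1 = \tfrac{1+\veps}{2}F_1+\tfrac{1-\veps}{2}F_0$), we have $\partial_\veps G_0 = -\partial_\veps G_1 = \tfrac12(F_0-F_1)$; and the principal's objective \emph{separates} as $\Vpeps(p_0,p_1) = \tfrac12\Pi_0^\veps(p_0)+\tfrac12\Pi_1^\veps(p_1)$ with $\Pi_y^\veps(p)\coloneqq G_y(p)(b-p) = \tfrac{1+\veps}{2}\Pi_y(p)+\tfrac{1-\veps}{2}\Pi_{1-y}(p)$, so $p_0(\veps),p_1(\veps)$ are the unconstrained maximizers of $\Pi_0^\veps,\Pi_1^\veps$, pinned down by the first-order conditions $(\Pi_y^\veps)'(p_y(\veps))=0$. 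Each $\Pi_y^\veps$ is a convex combination of $\Pi_0,\Pi_1$, hence strictly concave under Assumption \ref{assn:concave}, so the implicit function theorem gives $\veps\mapsto(p_0(\veps),p_1(\veps))$ of class $C^1$ near $\veps=1$; and since $G_y = F_y$ at $\veps=1$, we get $p_y(1)=p_y^\ast$ with $\Pi_0'(p_0^\ast)=\Pi_1'(p_1^\ast)=0$.

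Next I would differentiate. Using $\E[(p-C)\Ind(C<p)\mid Y=y]=\int_0^p G_y(c)\,dc$, write $\Vaeps(\veps)=\tfrac12\int_0^{p_0(\veps)}G_0 + \tfrac12\int_0^{p_1(\veps)}G_1$ and apply Leibniz's rule. Evaluated at $\veps=1$, the boundary terms are $\tfrac12 F_0(p_0^\ast)p_0'(1)+\tfrac12 F_1(p_1^\ast)p_1'(1)$, and the ``$\partial_\veps$ under the integral'' terms combine to $\tfrac14\int_{p_1^\ast}^{p_0^\ast}(F_0-F_1) = -\tfrac14\Delta(p_0^\ast,p_1^\ast)$ by $V_x(p)=\int_0^p F_x$ and Definition \ref{def:value_dominance}. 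For the price derivatives, implicit differentiation of $(\Pi_0^\veps)'(p_0(\veps))=0$ together with $\partial_\veps(\Pi_0^\veps)'(p)=\tfrac12(\Pi_0'(p)-\Pi_1'(p))$ gives $p_0'(1) = \tfrac{\Pi_1'(p_0^\ast)}{2\Pi_0''(p_0^\ast)}$ (the $\Pi_0'$ term vanishing at $p_0^\ast$), and symmetrically $p_1'(1)=\tfrac{\Pi_0'(p_1^\ast)}{2\Pi_1''(p_1^\ast)}$. Finally, from $\Pi_0''(p)=f_0'(p)(b-p)-2f_0(p)$ and the optimality identity $b-p_0^\ast = F_0(p_0^\ast)/f_0(p_0^\ast)$ one gets $\Pi_0''(p_0^\ast) = -f_0(p_0^\ast)\bigl(2-\sigma_0(p_0^\ast)\bigr)$, hence $F_0(p_0^\ast)/\Pi_0''(p_0^\ast) = -(b-p_0^\ast)/(2-\sigma_0(p_0^\ast))$, and likewise for environment $1$. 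Substituting collapses $\frac{d}{d\veps}\Vaeps(1)$ to the stated $\tfrac14(\cdots)$ expression, and \eqref{eq:garbling_general} is exactly the statement that it is negative; continuity then forces any maximizer to satisfy $\veps^\ast<1$.

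I expect the main obstacle to be the regularity bookkeeping rather than the algebra: justifying differentiation under the integral sign (where continuous differentiability of $F_0,F_1$ and boundedness of the densities near the optima enter), verifying via the implicit function theorem that $p_0(\veps),p_1(\veps)$ are differentiable in a left-neighborhood of $\veps=1$ (which requires the strict second-order condition $\Pi_y''(p_y^\ast)<0$, equivalently $2-\sigma_y(p_y^\ast)>0$, so the displayed denominators are positive), and noting that $\veps=1$ is an endpoint so only the one-sided derivative exists — but a negative left derivative already rules out $\veps=1$ as the argmax, which is all that is needed. A minor but essential bookkeeping point is to confirm that $G_y=F_y$ at $\veps=1$ forces $p_y(1)=p_y^\ast$, so that the first-order conditions $\Pi_0'(p_0^\ast)=\Pi_1'(p_1^\ast)=0$ used in the final substitution are legitimate.
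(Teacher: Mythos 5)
Your proposal is correct and takes essentially the same route as the paper's proof: differentiate $\Vaeps$ at $\veps=1$, obtain $p_0'(1),p_1'(1)$ by implicitly differentiating the principal's first-order conditions, invoke $\Pi_0'(p_0^\ast)=\Pi_1'(p_1^\ast)=0$, and rewrite $F_x(p_x^\ast)/\Pi_x''(p_x^\ast)$ via the $\sigma_x$ identity to land on the stated inequality. The only cosmetic difference is that you write $\Vaeps$ as $\frac12\int_0^{p_0}G_0+\frac12\int_0^{p_1}G_1$ and apply Leibniz, whereas the paper keeps the $\frac{1\pm\veps}{2}V_x(p_y(\veps))$ form before differentiating; the algebra and the final expression $\Vaeps'(1)=\frac14\bigl(-\Pi_1'(p_0^\ast)\tfrac{b-p_0^\ast}{2-\sigma_0(p_0^\ast)}-\Pi_0'(p_1^\ast)\tfrac{b-p_1^\ast}{2-\sigma_1(p_1^\ast)}-\Delta(p_0^\ast,p_1^\ast)\bigr)$ coincide.
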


The left hand side of the inequality in \eqref{eq:garbling_general} is a weighted version of the difference 
$\frac{b - p_0^{*}}{2 - \sigma_0(p_0^{*})} -  \frac{b - p_1^{*}}{2 - \sigma_1(p_1^{*})}$,
which arises repeatedly in \citet{aguirre2010monopoly}'s analysis of the effects of price discrimination on total welfare, and also previously arose in Proposition \ref{prop:agent_acv_hidden}. 
In cases where $\Pi_0'(p_1^{*}) > -\Pi_1'(p_0^{*})$, the condition in Proposition \ref{prop:agent_acv_hidden} (\eqref{eq:agent_binary_hidden}) would imply the condition in Proposition \ref{prop:garbling_general} (\eqref{eq:garbling_general}), since $\Delta(p_0^{*}, p_1^{*}) \geq 0$.

\subsection{Example: Exponential and Weibull Distributions}\label{sec:weibull_garbling}

We first illustrate the condition in Proposition \ref{prop:garbling_zerocost} using an exponential distribution. Suppose $C | X = 1$ takes value $0$ with probability one, and suppose $C | X = 0 \sim \text{Exp}(\frac{1}{\lambda_0})$, with $F_0$ defined as in \eqref{eq:F_0_exponential}. In this case, $g_0(p_0^{*}) = \lambda_0 F_0(p_0^{*})$, and 
$\frac{b-p_0^{*}}{2 - \sigma_0(p_0^{*})} = \lambda_0 F_0(p_0^{*}) \frac{1}{2-F_0(p_0^{*})}$. Therefore, the inequality in \eqref{eq:garbling_zerocost} holds for all $\lambda_0 > 0$.

The significance of this example is that if one agent type is anchored at $0$, and the non-zero-cost environment induces an exponential distribution, the agent will \textit{always} have an incentive to garble, regardless of the mean of the non-zero-cost distribution. Consider this in comparison to the exponential example from Section \ref{sec:weibull}, where the condition in Proposition \ref{prop:agent_binary_hidden} showed that agent prefers to fully conceal $X$ when $\lambda_0$ is small enough. 

For a Weibull distribution with $F_0$ given by \eqref{eq:F_x_weibull}, we have for $k_0 \geq 1$,
$$g_0(p) = \frac{\lambda}{k_0} \left(\Gamma\left(\frac{1}{k_0}\right) - \Gamma\left(\frac{1}{k_0}, \frac{p^{k_0}}{\lambda_0^{k_0}}\right)\right).$$

If $k_0 \geq 1$, then the inequality in \eqref{eq:garbling_zerocost} from Proposition \ref{prop:garbling_zerocost} holds for all $\lambda_0$. This encompasses all log-concave Weibull distributions. If $k_0 < 1$, then \eqref{eq:garbling_zerocost} does not necessarily hold, and fully flips for $k_0 < 0.5$. 

Similarly to Figure \ref{fig:exponential_contour}, we can also consider simulations beyond the zero-cost anchored setting by considering all combinations of $\lambda_0, \lambda_1$ when $C | X = 0 \sim \text{Exp}(\frac{1}{\lambda_0})$ and $C | X = 1 \sim \text{Exp}(\frac{1}{\lambda_1})$, with $F_x$ given by \eqref{eq:F_x_exponential}. Figure \ref{fig:exponential_garbling_derivatives} illustrates the combinations of $\lambda_0, \lambda_1$ for which the agent prefers some amount of garbling over full revelation. Figure \ref{fig:exponential_garbling_derivatives} shows that $\Vaeps'(1) < 0$ as long one of the conditional means $\lambda_0$ or $\lambda_1$ is small enough. That is, as long as one of the revealed settings has low enough cost, the agent always prefers to garble, regardless of how high the cost of the other setting goes. This contrasts the revelation example in Section \ref{sec:weibull}, where even if $\lambda_1$ is close to $0$, high enough $\lambda_0$ leads to the agent being willing to reveal.

\begin{figure}[!ht]
    \centering
    \begin{tblr}{p{0.1cm}c}
        &\small{$\Vaeps'(1)$}\\
        \SetCell[r=2]{}\small{$\lambda_1$} & \includegraphics[trim={0.7cm 0.6cm 0 1cm},clip,width=0.5\columnwidth]{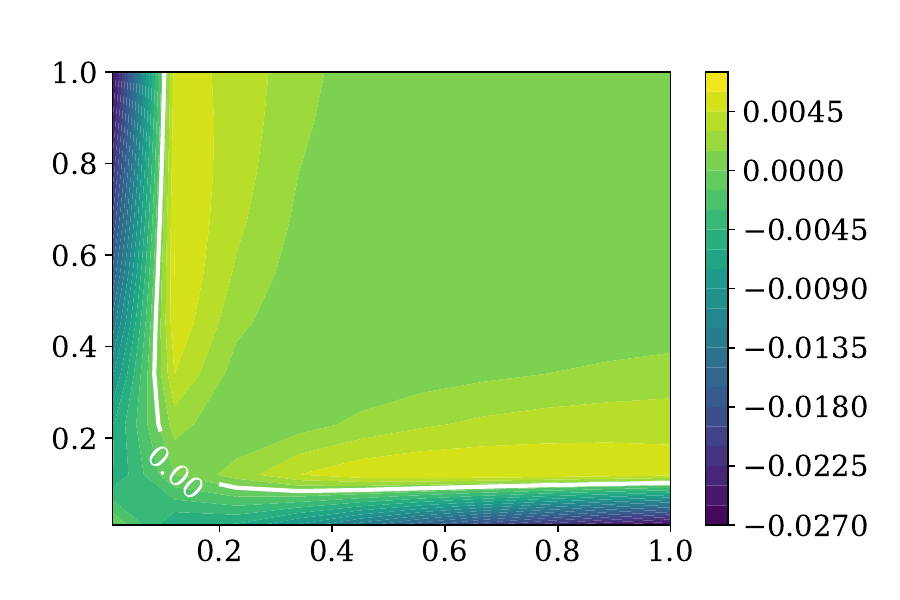}\\
        & \small{$\lambda_0$} \\
    \end{tblr}
    \caption{Plot of $\Vaeps'(1)$ for a mixture of exponential distributions with means $\lambda_0, \lambda_1$. A negative value indicates that the agent prefers some amount of garbling over full revelation. As long as one of the revealed settings has low enough average cost, the agent always prefers some amount of garbling over full revelation, regardless of how high the mean is of the other setting.}
    \label{fig:exponential_garbling_derivatives}
\end{figure}

Finally, Figure \ref{fig:eps_and_r_trajectories} show a case in which the agent would prefer to reveal some amount of information via garbling over both concealment and revelation, but if \textit{not} given the option to garble, then they would otherwise prefer concealment over revelation in the game from Figure \ref{fig:timing_revealed}. 

\subsection{Principal's Garbling Preferences}\label{sec:principal_garbling}

Similarly to Section \ref{sec:principal_revelation}, the principal always prefers for more information to be revealed. In fact, the garbling parameter $\veps$ directly interpolates between the concealed and revealed utilities for the principal.

\begin{lemma}\label{lem:garbling_principal} $\Vph(p^{*}) \leq \Vpeps(\veps) \leq \Vpr(\rho^{*})$ for all $\veps$.
\end{lemma}

Also similarly to the comparison of full concealment and revelation settings, revealing less noisy information yields a strict improvement in the principal's utility if the MLRP holds.

\begin{lemma}[Strict principal improvement]\label{lem:garbling_principal_strict}
    Suppose $\theta = \frac{1}{2}$. Suppose the principal's utility is strictly concave (Assumption \ref{assn:concave}). If the MLRP holds (Assumption \ref{assn:mlrp}), then $\Vpeps'(\veps) > 0$ for all $\veps \in [0,1]$.
\end{lemma}

\subsection{Garbling vs. Restricted Price Discrimination}\label{sec:garbling_price_discrimination}

Our garbling model expands the agent's action space from a binary choice between full concealment and full revelation for a given $X$ to a continuous choice of revealing a garbled version parameterized by $\veps$. Thus, $\veps$ interpolates between the full concealment and full revelation settings. 

Garbling is not the only way to interpolate between the full concealment and full revelation settings. In the price discrimination literature, there is an established model that interpolates between full price discrimination and no price discrimination by restricting that price difference between market segments can be no greater than some parameter $r$. \citet{wright1993price} models this restriction as arising from a ``cost of transport'' or arbitrage between two markets. \citet{aguirre2010monopoly} apply this interpolation by analyzing the marginal effect of $r$ on total welfare. We refer to this interpolation using $r$ as a \textit{restricted price discrimination} model. We also leveraged this technique to analyze the agent's utility in Section \ref{app:agent_binary_acv}.

We now discuss in detail how the garbling model that we have introduced compares with this restricted price discrimination model. Specifically, we consider how the interpolation between concealment and revelation introduced through varying $\veps$ in our garbling model compares to interpolation using a constraint parameter $r$.

In fact, the trajectory of the principal and agents' utilities as $r$ varies is different from the trajectory of the principal and agents' utilities as $\veps$ varies. Most importantly to our setting, there is a qualitative difference between the functions $\Vaconst(r)$ and $\Vaeps(\veps)$. Lemma \ref{lem:v_convex} shows that the value of $r$ that maximizes $\Vaconst(r)$ always corresponds with either full concealment or full revelation. However, under the same conditions, the value of $\veps$ that maximizes $\Vaeps(\veps)$ is \textit{not} always at the extremes, and is often somewhere in between $0$ and $1$. This is significant in our setting since the agent's power to choose $\veps$ is directly built into the game, and the existence of an optimal $\veps \in (0,1)$ means that the agent benefits from the additional degree of freedom in their action space.

To visualize this difference between these interpolation methods, we can further map the combinations of principal and agent value onto the surplus triangle from \citet{bergemann2015limits}. Figure \ref{fig:eps_and_r_trajectories} shows an example where there exists an intermediate value $\veps$ that the agent prefers over both concealment and revelation.
In summary, both this example and Lemma \ref{lem:v_convex} show that while there sometimes exist intermediate values $\veps$ that the agent prefers over both concealment and revelation, this is notably \textit{not} true for intermediate restrictions $r$ to the amount of price discrimination.


\begin{figure*}[!ht]
    \centering
    \begin{tabular}{p{0.1cm}c}
         \rotatebox{90}{\small{Principal utility, $\Pi$}} &  \includegraphics[trim={3.4cm 0.8cm 0 1cm},clip,width=0.9\textwidth]{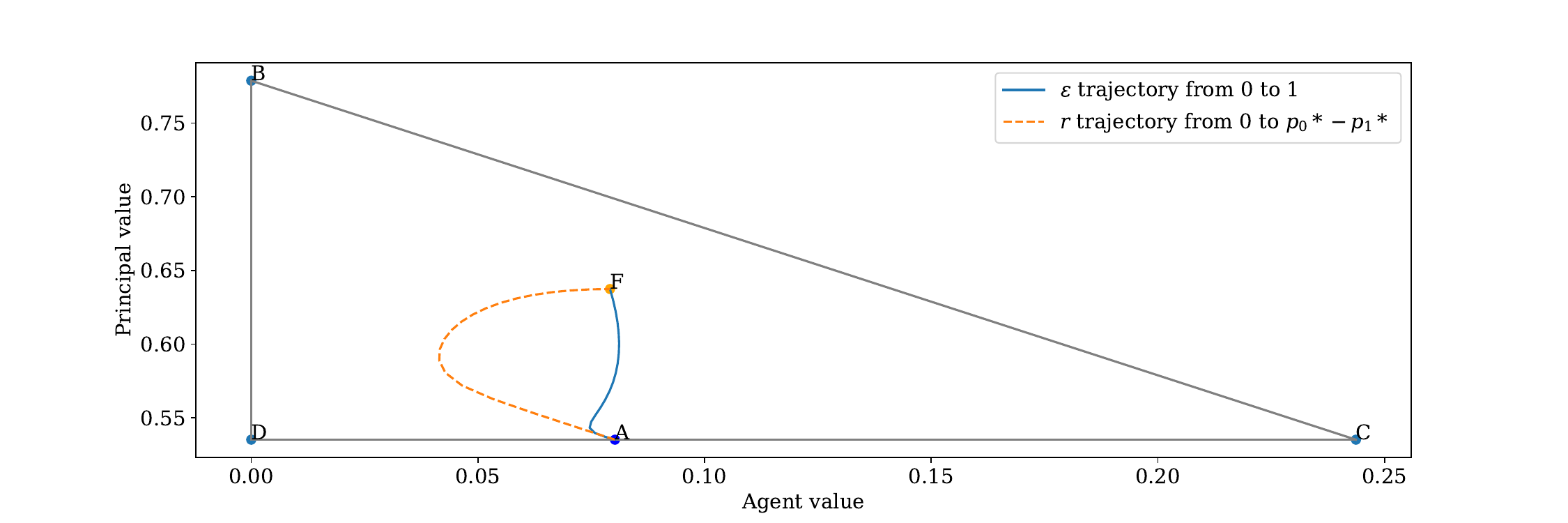} \\
         & \small{Agent utility, $V$}
    \end{tabular}
    \caption{Trajectories of principal and agent utilities over $\veps$ and $r$, mapped onto the triangle of possible combinations of principal and agent utilities from \citet{bergemann2015limits}. Here, cost is distributed as a mixture of exponentials with $C | X=1 \sim \text{Exp}(\frac{1}{\lambda_1})$, $C | X=0 \sim \text{Exp}(\frac{1}{\lambda_0})$, with $\lambda_0 = 0.5$, $\lambda_1 = 0.01$. The point $A$ corresponds to the concealed setting ($\Vah(p^{*}), \Vph(p^{*})$), and the point $F$ corresponds to the revealed setting ($\Var(p_0^{*}, p_1^{*}), \Vpr(p_0^{*}, p_1^{*})$). The solid blue line shows all combinations of $\Vpeps(\veps), \Vaeps(\veps)$ for $\veps \in [0,1]$. The dashed orange line shows all combinations of $\Vaconst(r), \Vpconst(r)$ for $r \in [0,p_0^{*}-p_1^{*}]$. First, note that the agent's utility at $A$ is higher than at $F$, so the agent prefers concealment over revelation for this particular $X$. However, there exists a point along the $\veps$ trajectory in which $\Vaeps(\veps)$ achieves higher agent utility than the point $A$. However, this is \textit{not} true of the $r$ trajectory. In general, Lemma \ref{lem:v_convex} shows that intermediate values of $r$ will always be dominated by either the fully concealed or fully revealed settings.}
    \label{fig:eps_and_r_trajectories}
\end{figure*}

\section{Proofs from Section \ref{app:garbling_additional_results}}

Here we give proofs for results for the garbling model presented in Section \ref{app:garbling_additional_results}.

\subsection{Proofs from Section \ref{sec:prices_garbling}}

\begin{replemma}{lem:monotone_prices}[Monotonic price changes]
Suppose $\theta=\frac{1}{2}$. Suppose $F_0, F_1$ are continuously differentiable CDFs, the principal's utility is strictly concave (Assumption \ref{assn:concave}), and the MLRP holds (Assumption \ref{assn:mlrp}). 
Then $p_0'(\veps) > 0$ and $p_1'(\veps) < 0$ for all $\veps \in [0,1]$.
\end{replemma}
\begin{proof}
    $p_0(\veps), p_1(\veps)$ must satisfy first-order necessary conditions for optimality:
    \[p_0(\veps) + \frac{\frac{1+\veps}{2}F_0(p_0(\veps)) + \frac{1 - \veps}{2}F_1(p_0(\veps))}{\frac{1+\veps}{2}f_0(p_0(\veps)) + \frac{1 - \veps}{2}f_1(p_0(\veps))} = b; \quad p_1(\veps) + \frac{\frac{1+\veps}{2}F_1(p_1(\veps)) + \frac{1 - \veps}{2}F_0(p_1(\veps))}{\frac{1+\veps}{2}f_1(p_1(\veps)) + \frac{1 - \veps}{2}f_0(p_1(\veps))} = b\]
    
    Differentiating these first-order conditions, we have:
    
    \begin{equation}\label{eq:price_derivatives}
        p_0'(\veps) = \frac{\Pi_0'(p_0(\veps)) - \Pi_1'(p_0(\veps))}{-2 \Pi_{Y=0}^{''}(p_0(\veps))}; \quad p_1'(\veps) = \frac{\Pi_1'(p_1(\veps)) - \Pi_0'(p_1(\veps))}{-2 \Pi_{Y=1}^{''}(p_1(\veps))},
    \end{equation}
    
    where \[\Pi_{Y=0}(p) = \frac{1+\veps}{2}\Pi_0(p) + \frac{1 - \veps}{2}\Pi_1(p); \quad \Pi_{Y=1}(p) = \frac{1+\veps}{2}\Pi_1(p) + \frac{1 - \veps}{2}\Pi_0(p). \]

    Strict concavity from Assumption \ref{assn:concave} makes both denominators of $p_x'(\veps)$ positive.

    The MLRP also implies that $p_0(\veps) < p_0^{*}$ and $p_1(\veps) > p_1^{*}$ for any $\veps$. Therefore, by strict concavity of $\Pi_x(p)$, we have $\Pi_0'(p_0(\veps)) > 0$, and $\Pi_1'(p_0(\veps)) < 0$, implying that $p_0'(\veps)> 0$. Similarly, $\Pi_1'(p_1(\veps)) < 0$, and $\Pi_0'(p_1(\veps)) > 0$, implying that $p_1'(\veps)< 0$.
\end{proof}

\subsection{Proofs from Section \ref{sec:agent_garbling}}

We first expand $\Vaeps(\veps)$ for $\theta = \frac{1}{2}$.

\begin{equation}\label{eq:vgarb_expanded}
    \Vaeps(\veps) = \frac{1}{2}\left(\frac{1 + \veps}{2}V_0( p_0(\veps)) + \frac{1 - \veps}{2}V_1(p_0(\veps)) + \frac{1 + \veps}{2}V_1(p_1(\veps)) + \frac{1 - \veps}{2}V_0( p_1(\veps))\right).
\end{equation}

To prove Proposition \ref{prop:garbling_zerocost}, we first give Lemma \ref{lem:p_constant} to handle the anchored zero-cost agent.

\begin{lemma}\label{lem:p_constant}
    Suppose $C | X = 1$ takes value $0$ with probability $1$. Suppose $f_0$ is bounded: $f_0(p) < B$ for all $p$ in the support. Then for any fixed value $b > 0$, there exists $\delta > 0$ such that for any $\veps > \delta$, $p_1(\veps) = 0$. 
\end{lemma}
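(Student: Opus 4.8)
The plan is to use the fact that with $\gamma=\theta=\tfrac12$ the principal's problem decouples across the two values of $Y$, so that $p_1(\veps)$ is a maximizer over $p\ge 0$ of the single-variable objective $\Pi_{Y=1}(p)=\tfrac{1+\veps}{2}\Pi_1(p)+\tfrac{1-\veps}{2}\Pi_0(p)$, where $\Pi_x(p)=F_x(p)(b-p)$. The first step is to pin down how the degenerate conditional $C\mid X=1$ enters: it appears only through $\Pi_1$, and since $F_1(p)=1$ for all $p\ge 0$ we have $\Pi_1(p)=b-p$ — a smooth, strictly decreasing linear term. So even though ``$f_1$'' is not an honest density, $\Pi_{Y=1}$ is a perfectly well-behaved function on the admissible domain $[0,\infty)$, and showing $p_1(\veps)=0$ amounts to showing $\Pi_{Y=1}$ is decreasing there.

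The substantive step is a uniform bound on $\Pi_0'$ over the unbounded domain. Differentiating, $\Pi_{Y=1}'(p)=-\tfrac{1+\veps}{2}+\tfrac{1-\veps}{2}\,\Pi_0'(p)$ with $\Pi_0'(p)=f_0(p)(b-p)-F_0(p)$. Using $f_0<B$ and $F_0\ge 0$: for $p\le b$ we get $\Pi_0'(p)\le f_0(p)(b-p)\le B(b-p)\le Bb$, while for $p>b$ both terms of $\Pi_0'(p)$ are nonpositive, so $\Pi_0'(p)\le 0\le Bb$. Hence $\Pi_0'(p)\le Bb$ for every $p\ge 0$ — this is exactly where boundedness of $f_0$ is used, and it reduces the pointwise sign question to the scalar inequality $-\tfrac{1+\veps}{2}+\tfrac{1-\veps}{2}Bb<0$.

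To finish, that inequality rearranges to $\veps>\tfrac{Bb-1}{Bb+1}$, so setting $\delta:=\max\bigl\{0,\tfrac{Bb-1}{Bb+1}\bigr\}$ — and noting $\delta<1$, so the hypothesis $\veps>\delta$ is never vacuous — gives, for all such $\veps$, $\Pi_{Y=1}'(p)<0$ throughout $[0,\infty)$; hence $\Pi_{Y=1}$ is strictly decreasing there and its unique maximizer over nonnegative transfers is the left endpoint, $p_1(\veps)=0$. I expect the only real obstacle to be a mild one: being careful that we need monotonicity of $\Pi_{Y=1}$ on $[0,\infty)$ rather than a negative \emph{unconstrained} critical price (the feasibility constraint $\rho\ge 0$ supplies the rest), together with cleanly justifying that the point-mass conditional $F_1$ can be substituted into $\Pi_{Y=1}$ as above.
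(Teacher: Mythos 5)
Your proof is correct and rests on the same core idea as the paper's: bound the derivative of the $Y=1$ conditional objective using $f_0 < B$, and show that for $\veps$ close to $1$ the $-\tfrac{1+\veps}{2}$ term from $\Pi_1'=-1$ dominates, leaving no interior optimum. The paper packages this as the first-order condition $h(p,\veps)=p+\tfrac{1}{f_0(p)}\tfrac{1+\veps}{1-\veps}+\tfrac{F_0(p)}{f_0(p)}=b$ having no solution, whereas you phrase it directly as $\Pi_{Y=1}'(p)<0$ on $[0,\infty)$ and also extract an explicit threshold $\delta=\max\{0,\tfrac{Bb-1}{Bb+1}\}$; your version is slightly more self-contained (it spells out why ``no interior critical point'' forces $p_1(\veps)=0$), but the argument is the same.
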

\begin{proof}
    Define $h(p, \veps) = p + \frac{1}{f_0(p)}\frac{1+\veps}{1-\veps} + \frac{F_0(p)}{f_0(p)}$. By choosing $\delta$ close to 1, we can make the term $\frac{1+\delta}{1-\delta}$ arbitrarily large, and consequently $\frac{1}{f_0(p_1)}\frac{1+\delta}{1-\delta}$ arbitrarily large, since $f_0(p) > 0$. Then for any $b$, we choose $\delta$ close enough to 1 such that $\frac{1}{B}\frac{1+\delta}{1-\delta} > b$. 
\end{proof}

\begin{repproposition}{prop:garbling_zerocost}[Sufficient garbling condition with zero-cost type]
    Suppose $\gamma = \theta = \frac{1}{2}$. Suppose $C | X=1$ takes value 0 with probability $1$. Suppose $F_0$ is continuously differentiable and $f_0(c)$ is bounded. $\Vaeps(\veps)$ is maximized at $\veps^{*} < 1$ if 
    \begin{equation*}
        \frac{v-p_0^{*}}{2 - \sigma_0(p_0^{*})} < g_0(p_0^{*}),
    \end{equation*}
    where $g_0(p) = \int_{0}^p (1-F_0(c))dc$ is the restricted mean cost of task completion, and $\sigma_0(p) = \frac{F_0(p)f_0'(p)}{f_0(p)^2}$ is the curvature of the inverse quantity function.
\end{repproposition}
\begin{proof}
    We show that the condition in \eqref{eq:garbling_zerocost} implies that $\Vaeps'(1) < 0$.

    For $C | X =1$ taking value $0$ with probability $1$, we have $V_1(p) = p$. Substituting this into \eqref{eq:vgarb_expanded},
    \begin{align*}
        \Vaeps(\veps) = \frac{1}{2}\left(\frac{1 + \veps}{2}V_0( p_0(\veps)) + \frac{1 - \veps}{2}p_0(\veps) + \frac{1 + \veps}{2}p_1(\veps) + \frac{1 - \veps}{2}V_0( p_1(\veps))\right).
    \end{align*}

    Differentiating this, we have
    \begin{align*}
            2\Vaeps'(\veps) = &p_0'(\veps) \left(\frac{1 + \veps}{2}F_0(p_0(\veps)) + \frac{1 - \veps}{2}\right) + p_1'(\veps)\left(\frac{1 + \veps}{2} + \frac{1 - \veps}{2}F_0(p_1(\veps))\right) \\
            &+ \frac{1}{2}\left((V_0(p_0(\veps)) - V_0(p_1(\veps))) - (p_0(\veps) - p_1(\veps))\right)
    \end{align*}

    Evaluating this derivative at $\veps = 1$, Lemma \ref{lem:p_constant} implies that $p_1(\veps)= 0$ and $p_1'(1) = 0$. 
\begin{align*}
    2\Vaeps'(1) = p_0'(1)F_0(p_0^{*}) + \frac{1}{2} (V_0(p_0^{*}) - p_0^{*}).
\end{align*}

\[V_0(p_0^{*}) - p_0^{*} = E[(p_0^{*} - C)\Ind(C < p_0^{*}) | X = 0] - p_0^{*} = -\frac{1}{2} g_0(p_0^{*}).\]

\begin{align*}
    \implies 4\Vaeps'(1) &= -g_0(p_0^{*}) + 2F_0(p_0^{*})p_0'(1)
\end{align*}


Simplifying $2F_0(p_0^{*})p_0'(1)$:
\begin{align*}
    2F_0(p_0^{*})p_0'(1) &= \frac{F_0(p_0^{*})f_0(p_0^{*})}{2f_0(p_0^{*})^2 - F_0(p_0^{*})f_0'(p_0^{*})} \\
    &= \frac{\frac{F_0(p_0^{*})}{f_0(p_0^{*})}}{2 - \frac{F_0(p_0^{*})f_0'(p_0^{*})}{f_0(p_0^{*})^2}} \\
    &= \frac{v-p_0^{*}}{2 - \sigma_0(p_0^{*})}
\end{align*}
Therefore, 
\[\Vaeps'(1) < 0 \iff -g_0(p_0^{*}) + \frac{v-p_0^{*}}{2 - \sigma_0(p_0^{*})} < 0.\]
\end{proof}

\begin{repproposition}{prop:garbling_general}[Sufficient garbling condition]
    Suppose $\gamma = \theta = \frac{1}{2}$. Suppose $F_0, F_1$ are continuously differentiable. $\Vaeps(\veps)$ is maximized at $\veps^{*} < 1$ if
    \begin{equation*}
    - \Pi_1'(p_0^{*})\left(\frac{b - p_0^{*}}{2 - \sigma_0(p_0^{*})}\right) -  \Pi_0'(p_1^{*})\left(\frac{b - p_1^{*}}{2 - \sigma_1(p_1^{*})}\right) < \Delta(p_0^{*}, p_1^{*}).
    \end{equation*}
\end{repproposition}
\begin{proof} 
Differentiating with respect to $\veps$, we have
\begin{align*}
        2\Vaeps'(\veps) = &p_0'(\veps) \left(\frac{1 + \veps}{2}F_0(p_0(\veps)) + \frac{1 - \veps}{2}F_1(p_0(\veps))\right) + p_1'(\veps)\left(\frac{1 + \veps}{2}F_1(p_1(\veps)) + \frac{1 - \veps}{2}F_0(p_1(\veps))\right) \\
        &+ \frac{1}{2}\left((V_0(p_0(\veps)) - V_0(p_1(\veps))) - (V_1(p_0(\veps)) - V_1(p_1(\veps)))\right).
\end{align*}
Substituting in the price derivatives from \eqref{eq:price_derivatives} and the agent utility dominance identity from Definition \ref{def:value_dominance},

\begin{align*}
    2\Vaeps'(\veps) = &\left(\frac{\Pi_0'(p_0(\veps)) - \Pi_1'(p_0(\veps))}{-2 \Pi_{Y=0}^{''}(p_0(\veps))}\right) \left(\frac{1 + \veps}{2}F_0(p_0(\veps)) + \frac{1 - \veps}{2}F_1(p_0(\veps))\right) \\
    &+ \left(\frac{\Pi_1'(p_1(\veps)) - \Pi_0'(p_1(\veps))}{-2 \Pi_{Y=1}^{''}(p_1(\veps))}\right)\left(\frac{1 + \veps}{2}F_1(p_1(\veps)) + \frac{1 - \veps}{2}F_0(p_1(\veps))\right) \\
    &+ \frac{1}{2}\left(-\Delta(p_0(\veps), p_1(\veps))\right).
\end{align*}

Let \[z_0^{\veps}(p) = \frac{V'_{Y=0}(p)}{\Pi_{Y=0}^{''}(p)} = \frac{\frac{1 + \veps}{2}F_0(p) + \frac{1 - \veps}{2}F_1(p)}{\Pi_{Y=0}^{''}(p)},\]
\[z_1^{\veps}(p) = \frac{V'_{Y=1}(p)}{\Pi_{Y=1}^{''}(p)} = \frac{\frac{1 + \veps}{2}F_1(p) + \frac{1 - \veps}{2}F_0(p)}{\Pi_{Y=1}^{''}(p)}.\]
Then
\begin{align*}
    4 \Vaeps'(\veps)= &\left(\Pi_0'(p_0(\veps)) - \Pi_1'(p_0(\veps))\right) (-z_0^{\veps}(p_0(\veps))) + \left(\Pi_1'(p_1(\veps)) - \Pi_0'(p_1(\veps))\right)(-z_1^{\veps}(p_1(\veps))) \\
    &- \Delta(p_0(\veps), p_1(\veps)).
\end{align*}

Evaluating this at $\veps = 1$, we have
\begin{align*}
    4 \Vaeps'(1)= &\left(\Pi_0'(p_0^{*}) - \Pi_1'(p_0^{*})\right) (-z_0(p_0^{*})) + \left(\Pi_1'(p_1^{*}) - \Pi_0'(p_1^{*})\right)(-z_1(p_1^{*})) \\
    &- \Delta(p_0^{*}, p_1^{*}),
\end{align*}
where 
\[z_x(p) = \frac{V'_x(p)}{\Pi_x^{''}(p)}. \]
Note that $V'_x(p_x^{*}) = W'_x(p_x^{*})$, so at $p^{*}$, this is identical to the $z$ function from \citet{aguirre2010monopoly}. By first-order optimality conditions,
\[-z_x(p_x^{*}) = \frac{b - p_x^{*}}{2 - \sigma_x(p_x^{*})}.\]
Therefore, $\Vaeps'(1) < 0$ if \[\left(- \Pi_1'(p_0^{*})\right) \left(\frac{b - p_0^{*}}{2 - \sigma_0(p_0^{*})}\right) + \left(- \Pi_0'(p_1^{*})\right)\left(\frac{b - p_1^{*}}{2 - \sigma_1(p_1^{*})}\right) < \Delta(p_0^{*}, p_1^{*}).\]

\end{proof}

\subsection{Proofs from Section \ref{sec:principal_garbling}}

\begin{replemma}{lem:garbling_principal} $\Vph(p^{*}) \leq \Vpeps(p_0(\veps), p_1(\veps)) \leq \Vpr(p_0^{*}, p_1^{*})$ for all $\veps$.
\end{replemma}
\begin{proof}
    For $\veps \in \{0,1\}$, the inequalities clearly hold. Fix $\veps \in (0,1)$. 
    For the lower bound, \[\Vpeps(p_0(\veps), p_1(\veps)) \geq \Vpeps(p^{*}, p^{*}) = \Vph(p^{*}).\]

    For the upper bound, since the noise $\xi$ is independent of $X$ and $C$,
    \begin{align*}
        \Vpeps(p_0, p_1) = \phi(\theta, \gamma) \Vpr(p_0, p_1) + \psi(\theta, \gamma) \Vpr(p_1, p_0) \leq \Vpr(p_0^{*}, p_1^{*})
    \end{align*}
    where $\phi, \psi$ are some positive functions of $\theta, \gamma$.
\end{proof}

\begin{replemma}{lem:garbling_principal_strict}[Strict principal improvement]
    Suppose $\gamma = \theta = \frac{1}{2}$. Suppose the principal's utility is strictly concave (Assumption \ref{assn:concave}). If the MLRP holds (Assumption \ref{assn:mlrp}), then $\Vpeps'(\veps) > 0$ for all $\veps \in [0,1]$.
\end{replemma}
\begin{proof}
    \begin{align*}
        \Vpeps(\veps) &= \frac{1}{2}(b - p_0(\veps))\left( \frac{1 + \veps}{2}F_0(p_0(\veps)) + \frac{1 - \veps}{2}F_1(p_0(\veps))\right) \\ &+ \frac{1}{2}(b - p_1(\veps))\left( \frac{1 + \veps}{2}F_1(p_1(\veps)) + \frac{1 - \veps}{2}F_0(p_1(\veps))\right).
    \end{align*}
    Differentiating with respect to $\veps$:
    \begin{align*}
        4 \Vpeps'(\veps) 
        = &F_0(p_0(\veps))(b - p_0(\veps)) - F_0(p_1(\veps))(b - p_1(\veps)) \\
        &+ F_1(p_1(\veps))(b - p_1(\veps)) - F_1(p_0(\veps))(b - p_0(\veps)) \\
    \end{align*}

    By the MLRP and strict concavity of $\Pi_0(p), \Pi_1(p)$, we have that $p_1(\veps) < p_0(\veps) < p_0^{*}$. 
    Strict concavity of $\Pi_0(p)$ and the optimality of $p_0^{*}$ for $\Pi_1$ then implies that \[ F_0(p_1(\veps))(v - p_1(\veps)) < F_0(p_0(\veps))(v - p_0(\veps)). \]

    Similarly, by the MLRP and strict concavity of $\Pi_0(p), \Pi_1(p)$, we have that $p_1^{*} < p_1(\veps) < p_0(\veps)$. Strict concavity of $\Pi_1(p)$ and the optimality of $p_1^{*}$ for $\Pi_1$ implies that \[  F_1(p_0(\veps))(v - p_0(\veps)) < F_1(p_1(\veps))(v - p_1(\veps)).\]
\end{proof}

\section{Further Uber and Lyft Experiment Details}\label{app:experiments_additional}

Here we provide additional details on the experiment setup and implementation using the Uber and Lyft dataset.

\subsection{Scenario 2: Revelation to Refine an Existing Pricing Model.}\label{app:scen2}
As an additional scenario, we suppose that the principal starts with a slightly more sophisticated pricing model as a baseline.
Suppose Uber decides to offer a non-negative price adjustment $p$ on top of their existing pricing model as an incentive for drivers to switch to their platform.
The agent can choose to reveal $X = $ Lyft's surge multiplier (which Uber is otherwise unable to observe), in which case Uber's price adjustment would depend on $X$.
To approximate Uber's initial pricing model, we train a linear model targeting Uber's price over all features in Uber's dataset.\footnote{A linear model is, of course, still far from Uber's actual pricing methodology, but the purpose of this illustration is to have a model that depends on more features initially.} Let $C$ be the difference between Uber's estimated price (on the Lyft dataset) and Lyft's price. 

\subsubsection{Results.} We present results with the same switching value $b$ as earlier. 
Figure \ref{fig:lyft_scen2} shows that the agent also prefers to reveal $X =$ Lyft's surge multipler. Letting $Z^t$ be a binarized version of the surge multiplier, Figure \ref{fig:lyft_scen2} also shows that there exist at least one value of $t$ for which the agent prefers garbling over revelation. Since the surge multiplier is never less than 1 in the data, and the agent already prefers to reveal for $t = 1$, we do not observe any thresholds for which the agent strictly prefers to conceal. However, the agent's value for revelation still decays to $0$ as the threshold increases to its maximum. In this case, the agent does not gain any value from additional garbling on top of revelation. Overall, this scenario showed that a cost-correlated feature like the surge multiplier was beneficial to an agent to reveal, even when the principal platform's existing pricing model already depended on other features like distance.

\begin{figure}[!ht]
    \centering
    \begin{tabular}{c}
        Scenario 2: Agent's utility difference for revealing surge multiplier\\
        \includegraphics[trim={0.3cm 0.35cm 0 0.25cm},clip, width=0.5\columnwidth]{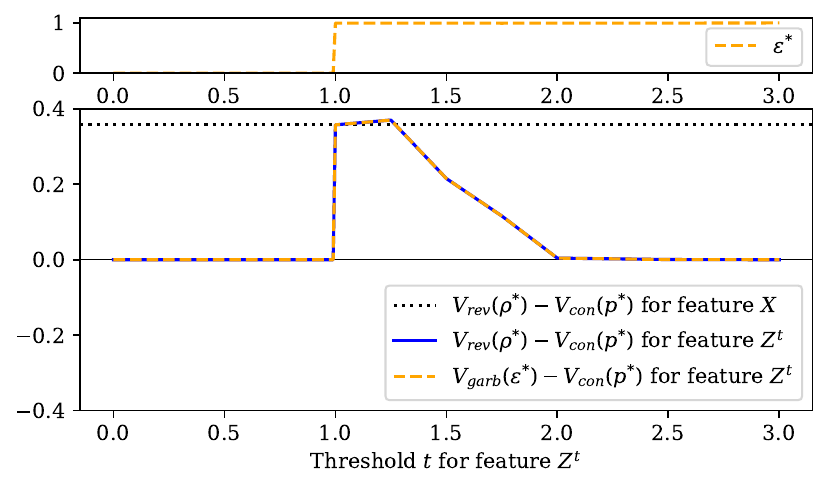}
    \end{tabular}
    \caption{Differences between the agent's revealed and concealed utilities in Scenario 2, with $X =$ Lyft's surge multiplier. The labels are the same as Figure \ref{fig:lyft_scen1}.
    }
    \label{fig:lyft_scen2}
\end{figure}

\subsection{Estimating the Principal's and Agent's Utilities} 

\paragraph{For binarized features:} 
For a binarized feature $Z^t$, we directly estimate the conditional pdfs $f_0$ and $f_1$ using kernel density estimation over the data conditioned on $Z^t = 0, 1$ respectively. Specifically, we use the Scipy \texttt{gaussian\_kde} method with default bandwidth \cite{2020SciPy-NMeth}. The principal and agents' revealed values are built directly on $f_0$, $f_1$, and the empirical estimate for $\theta$. The hidden cost distribution is given by $f(c) = (1-\theta) f_0(c) + \theta * f_1(c)$.

\paragraph{For full features:} When estimating the agent's value difference for revealing a full non-binarized feature $X$, we substitute the empirical expectation over the dataset for both the principal and the agent's values. That is, for a dataset with points $\{(C_i, X_i)\}_{i=1}^n$, we estimate the principal's revealed value as
\[\Vpr(\rho) = \E[\Ind(C < \rho(X))(b - \rho(X))] \approx \frac{1}{n} \sum_{i=1}^n \Ind(C_i < \rho(X_i))(b - \rho(X_i)).\]
Similarly, we estimate the agent's revealed value as 
\[\Var(\rho) = \E[\Ind(C < \rho(X))(\rho(X) - C)] \approx \frac{1}{n} \sum_{i=1}^n \Ind(C_i < \rho(X_i))( \rho(X_i) - C_i).\]
The principal and agents' hidden values are also estimated empirically using only the points $\{C_i\}_{i=1}^n$.

These estimates are then used to compute the optimal hidden price $p^{*}$, revealed price function $\rho^{*}$, and the agent's value differences.

\subsection{Parameterizing $\rho(x)$ for Continuous $X$}
 
When the variable $X$ revealed is continuous, we parameterize the principal's pricing function $\rho(x)$ as linear for simplicity and tractability. This is more restrictive than any non-parametric and arbitrarily expressive $\rho(x)$, though not uncommon in practice. In the experiments, the agent also operates under the assumption that $\rho(x)$ would be linear. For the purposes of our model, the most important assumption is that the agent is aware of the family of the pricing functions over which the principal is optimizing, $\rho \in \mathcal{F}$. The value difference for the agent between revealing and concealing will ultimately also depend on the family $\mathcal{F}$.
 
\subsection{Garbling Amounts}
For each $Z^t$, we compute the optimal garbling amount \[\veps^{*} = \underset{\veps \in [0,1]}{\arg\max} \;\; \Vaeps(\veps).\] Notably, for the garbling distribution $\xi$, we set $\gamma = \theta = P(Z^t = 1)$ for each $Z^t$. This ensures that the marginal distribution of the garbled variable $Y$ matches the marginal distribution of $Z^t$ for each $t$.

\subsection{Scenario 2 Costs}

To construct the cost variable $C$ for Scenario 2, we first estimate Uber's pricing model by training an ordinary least squares linear regression model on Uber's dataset using the price column as the target, and all features other than the surge multiplier as inputs. The $R^2$ score on the training data is $0.1131$. We use Scikit Learn's \texttt{linear\_model} function \cite{scikit-learn}.

The column representing the cost $C$ is then computed on the Lyft dataset as $\max( 0, \text{Uber's predicted price} - \text{Lyft's price})$. 
This produces a non-negative cost variable. 

Note that the principal optimizing its price using this non-negative cost variable is equivalent to the principal solving the constrained optimization problem for price restricted to non-negative prices only.

\subsection{Results for Different Values of $b$}

Figures \ref{fig:lyft_scen1} and \ref{fig:lyft_scen2} show results when the principal's value is $b = 150$, or $1.5 \bar{C}$, where $\bar{C} = 100$. Figures \ref{fig:lyft_scen1_moreb} and \ref{fig:lyft_scen2_moreb} show the same analysis for $b \in \{50, 100, 200\}$.

\begin{figure}[!ht]
    \centering
    \begin{tabular}{c}
        Scenario 1: Agent's utility difference for revealing distance\\
        $b=200$ \\
        \includegraphics[trim={0.3cm 0.35cm 0 0.25cm},clip, width=0.5\columnwidth]{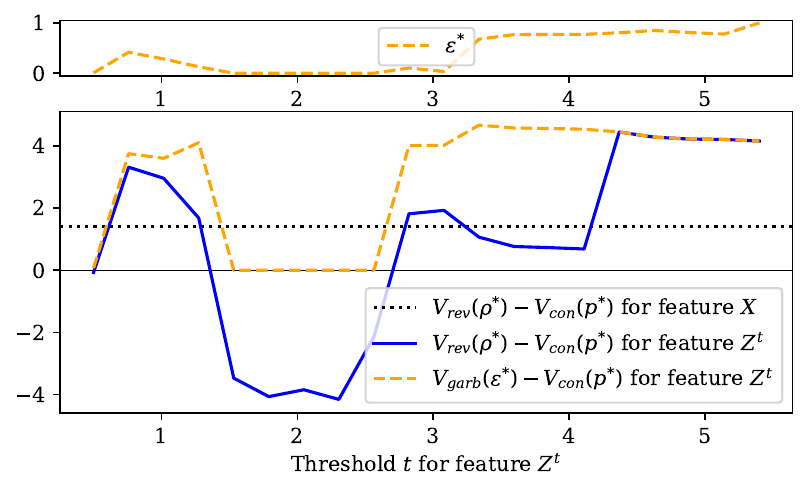}\\
        $b=100$ \\
        \includegraphics[trim={0.3cm 0.35cm 0 0.25cm},clip, width=0.5\columnwidth]{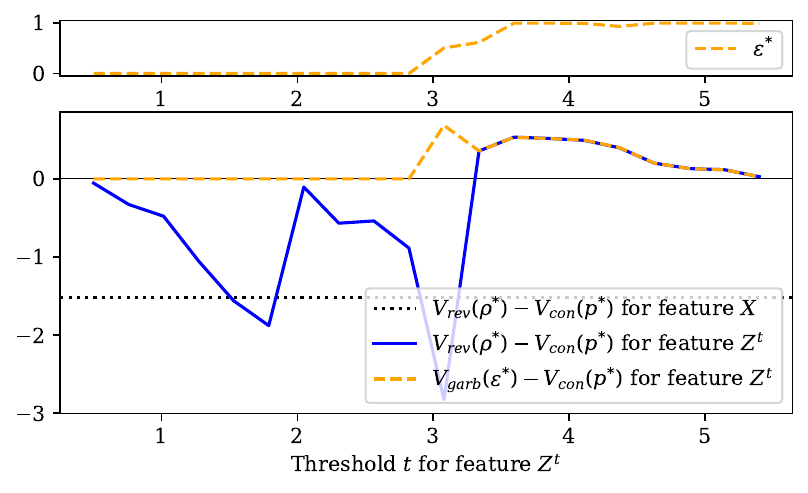}\\
        $b=50$ \\
        \includegraphics[trim={0.3cm 0.35cm 0 0.25cm},clip, width=0.5\columnwidth]{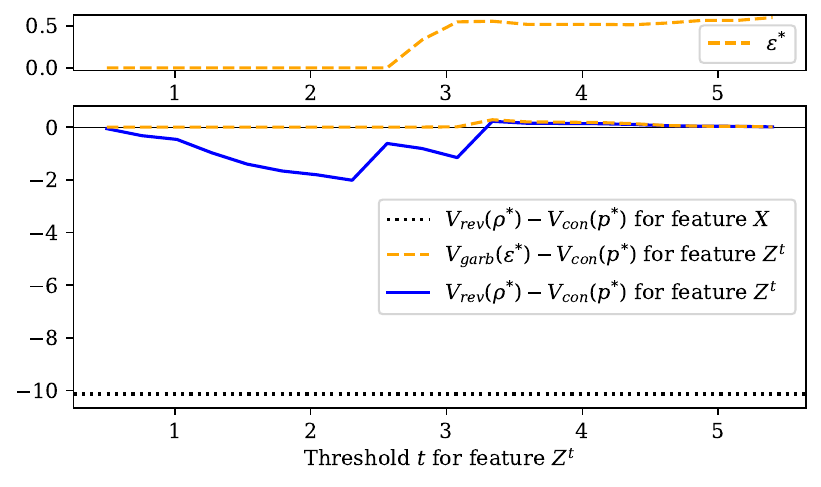}\\
    \end{tabular}
    \caption{Differences between the agent's revealed and concealed utilities in the Scenario from Section \ref{sec:experiments}, with $X =$ distance, for different values of $b$. The labels are the same as Figure \ref{fig:lyft_scen1}.
    }
    \label{fig:lyft_scen1_moreb}
\end{figure}

\begin{figure}[!ht]
    \centering
    \begin{tabular}{c}
        Scenario 2: Agent's utility difference for revealing surge multiplier\\
        $b=200$ \\
        \includegraphics[trim={0.3cm 0.35cm 0 0.25cm},clip, width=0.5\columnwidth]{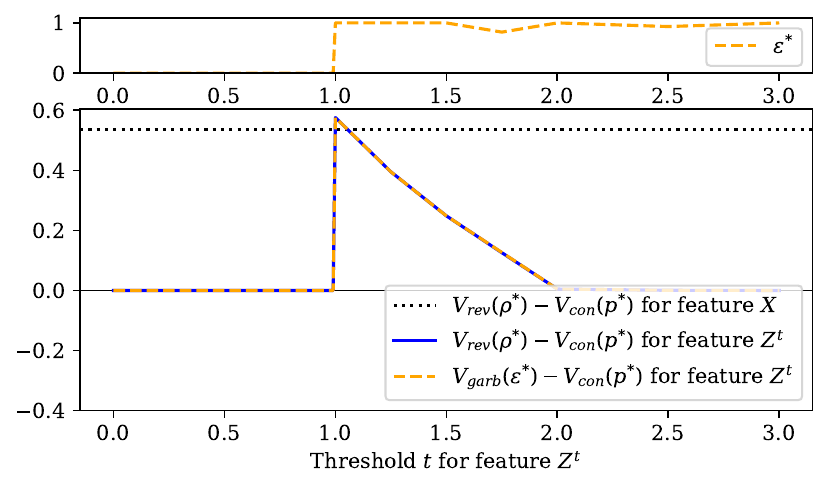}\\
        $b=100$ \\
        \includegraphics[trim={0.3cm 0.35cm 0 0.25cm},clip, width=0.5\columnwidth]{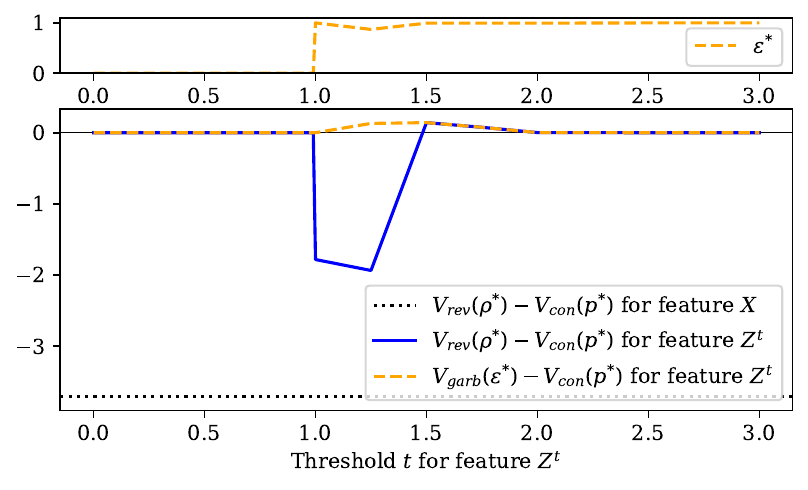}\\
        $b=50$ \\
        \includegraphics[trim={0.3cm 0.35cm 0 0.25cm},clip, width=0.5\columnwidth]{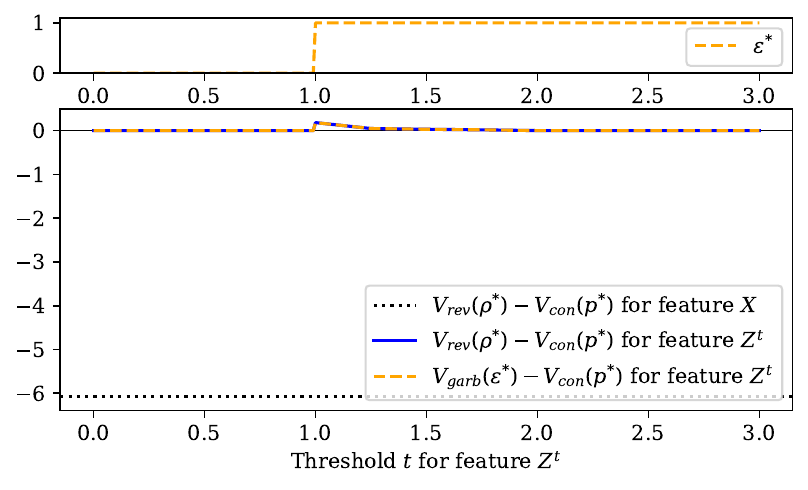}\\
    \end{tabular}
    \caption{Differences between the agent's revealed and concealed utilities in Scenario 2, with $X =$ Lyft's surge multiplier, for different values of $b$. The labels are the same as Figure \ref{fig:lyft_scen1}.
    }
    \label{fig:lyft_scen2_moreb}
\end{figure}

\end{document}